\newcommand{\T}{\top}
\newcommand{\op}{o_{\mathbb{P}}}
\newcommand{\dd}{\mathrm{d}}
\newcommand{\E}{\mathbb{E}}
\newcommand{\R}{\mathbb{R}}
\newcommand{\mf}{\mathbf}
\newtheorem{theorem}{Theorem}[section]
\newtheorem{assumption}{Assumption}[section]
\newtheorem{proposition}{Proposition}[section]
\newtheorem{remark}{Remark}[section]
\newtheorem{lemma}{Lemma}[section]
\newtheorem{corollary}{Corollary}[section]
\definecolor{darkgreen}{rgb}{0.0, 0.4, 0.1} 
\title{Measuring deviations from spherical symmetry}
\author{Lujia Bai, Holger Dette}
\affil{Fakult\"at f\"ur Mathematik,\\ Ruhr-Universit\"at Bochum, 44780 Bochum, Germany}
\begin{document}

\maketitle
\begin{abstract}
Most of the work on checking  spherical symmetry  assumptions on the distribution of a $p$-dimensional random  vector $Y$
focuses on  testing for the null hypothesis of exact spherical symmetry. In this paper, we take a different point of view and
consider the problem of testing if the deviation from spherical symmetry exceeds a given threshold. For this purpose, we  propose a novel measure for the deviation from spherical symmetry, which is based on  the minimum distance between the   distribution
of the vector $\big (|Y|, Y/ |Y| )^\top $
and its best approximation by a distribution of a vector
$\big (|Y_s|, Y_s/ |Y_s | )^\top $
, where $Y_s$ follows a spherical distribution. We develop estimators for the minimum distance with corresponding statistical guarantees (provided by asymptotic theory) and use these results to develop statistical inference. Our approach assumes neither  Gaussianity nor the existence of moments, and we  demonstrate the applicability of our approach by means of a simulation study and a real data example.
\end{abstract}

\section{Introduction}
   \def\theequation{1.\arabic{equation}}	
   \setcounter{equation}{0}
 
Spherical contoured models include multivariate Gaussian distribution, multivariate $t$- distribution and multivariate stable distribution  find their applications in  vector cardiology \citep{mardia1999directional},
 portfolio  theory 
\citep{gupta2013elliptically}, Mahalanobis distances \citep{luis2005}. There has been a wide interest in testing for the spherical symmetry, or a wider class of elliptical symmetry, see for example  \cite{Smith01011977}, \cite{ludwig1991}, \cite{FANG199334}, \cite{KOLTCHINSKII1998228}, \cite{SCHOTT2002}, \cite{HP2007}, \cite{liangetal2008}, \cite{PG2008},  \cite{Henzeetal2014}, \cite{holzman2020}, \cite{babic2021}, \cite{huang2023multivariate}, \cite{banerjee2024consistent} among others.  A common feature of most of the cited references is that the authors use different approaches to construct tests for the null hypothesis of exact spherical symmetry. However, this assumption is rarely met because one does not believe that it is  exactly (with mathematical equality) satisfied, but with the hope that the deviation from the null hypothesis is small such that (optimal) inference  under the assumption of spherical symmetry is still reliable.  Thus, strictly speaking, there are many situations where tests for exact sphericity are performed, although it is clear that this hypothesis is at most ``approximately'' satisfied.

In this paper, we take a different  point of view on the problem of validating the assumption of spherical symmetry on the distribution of a random vector $Y$. Instead of testing for exact spherical  symmetry, we propose a measure  of deviation from spherical symmetry.
This measure is based on  the minimum distance between the   distribution 
of the vector $\big (\|Y\|, Y/ \|Y\| )^\top $
and its best approximation by a distribution of a vector 
$\big (\|Y_s\|, Y_s/ \|Y_s \| )^\top $
corresponding to  a vector $Y_s$ with a 
spherical distribution, where $\| \cdot \|$ denotes the Euclidean norm on $\mathbb{R}^p$
(note  that a random variable $Y_s$ has  a spherical distribution 
if and only if $\|Y_s\| $ and  $ Y_s/ \|Y_s\| $ are independent and $ Y_s/ \|Y_s\| $ is uniformly distributed on the $p$-dimensional sphere $\mathbb{S}_{p-1}$).
In Section \ref{sec2} we derive  an explicit expression for this minimum distance, which is essentially the difference between the (squared) $L^2$-norm of the density of the vector $\big (\|Y_s\|, Y_s/ \|Y_s \| )^\top $ and the (squared) $L^2$-norm of the density  of the random variable $\|Y_s\|$  multiplied with the surface area of $\mathbb{S}^{p-1}$. Consequently,  the estimation problem of the minimum distance boils down to estimating the squared  $L^2$-norm of the density of a distribution on  the sphere and a distribution on $\mathbb{R}^+$. While 
the problem  of estimating  the integrated
squared density 
on $\mathbb{R}^p$
has been extensively studied from various perspectives including optimality and adaptivity \citep[see][among many others]{hall1987estimation,bickel1988estimating,LaurantMassart2000,GineNickl2008}, the corresponding problem  on the sphere have  not found much attention so far. 
We construct  an estimator  for the minimum distance from a  sequence of independent identically distributed observations  combining an estimator for the integrated squared  density of the joint distribution of $ Y/ \|Y\| $  and $\|Y\| $   with an estimator  for the integrated squared  density of $\|Y\| $   on the non-negative line.
We prove asymptotic normality of an appropriately centered and normalized version. Interestingly, from the theoretical perspective, there appear to be different scalings, depending on whether the distribution is spherically symmetric or not. In the latter case, the difference of the estimators converges at a rate 
$\sqrt{n}$ as in the case estimating the integral of the squared density \citep[see, for example,][]{bickel1988estimating}. On the other hand, if the distribution of $Y$ is in fact spherically symmetric, 
the estimator converges at a rate 
$\sqrt{n(n-1)h/{\kappa^{(p-1)/2}}}$,
where $h \to 0 $ and $\kappa \to \infty $ denote smoothing parameters required for the estimation of the integrated squared densities of 
$\|Y\|$ and  $Y / \|Y\|$, respectively. 

Based on  these results, we develop several statistical applications. In Section \ref{sec32} we derive an asymptotic  confidence interval for the deviation of the distribution from sphericity. Moreover, we also construct tests for the  hypothesis  that the deviation from sphericity exceeds a given threshold. For example, if ${\cal M}^2$ denotes the measure of deviation we derive a consistent and  asymptotic level $\alpha$-test for the hypotheses $H_0^{\rm eq}: {\cal M}^2 > \Delta $ versus $H_1^{\rm eq}: {\cal M}^2 \leq \Delta $, which allows to decide for {\it approximate sphericity} at a controlled type $I$ error. These procedures require the estimation of a complicated asymptotic variance, which we address by a Jackknife procedure. Moreover, as an alternative, we also develop in Section \ref{sec4} pivotal inference for the measure ${\cal M}^2$, which does not require any estimation of asymptotic variances. Our approach is based on the self-normalization principle \citep[see]
[]{shao2015}, which cannot be used directly in the present context. In particular, we have to derive the weak convergence of a process of non-parametric U-statistics with random elements on the sphere, and adapt it to obtain a pivotal statistic for the estimator of ${\cal M}^2$. 
Finally, we demonstrate the applicability of our approach through a small simulation study and a real data example.

\section{A measure of deviation and its estimation}
\label{sec2} 
   \def\theequation{2.\arabic{equation}}	
   \setcounter{equation}{0}
   
Let $Y$ denote a $p$-dimensional random variable and  define $U = \|Y\|$, $V = Y/U$, where $\|\cdot\| $ denotes the Euclidean norm. The density function of $Y$ can be written as 
\begin{align}
\label{det0}
    f_Y(y) = f_{U,V}(u,v) = f_U(u) f_{V|U}(u,v) , 
\end{align}
where  $u=\|y\|$ and $v=y/\|y\|$. Throughout this paper, we will use $f$ for $f_{U,V}$. 
The distribution of 
    $Y $ is called spherically distributed if and only if its  density function
    can be represented in the form 
\begin{align}
\label{def:sphere}
    f_Y(y) = g(\| y\|), 
\end{align}
for some function $g$.
Consequently, observing  \eqref{det0}, it follows that  
$Y$ has a spherically symmetric distribution if and only if  $U=\|Y \|$ and $V=Y/\|Y\|$ are independent and $V=Y/\|Y\|$ is uniformly distributed on the sphere with (constant)  density 
\begin{align}
    \label{det101}
f_{0}(v) =\omega_{p-1}^{-1} I(\|v\| = 1),
\end{align}
 where $\omega_{p-1} = 2\pi^{p/2}/\Gamma(p/2)$ denotes the surface area of the sphere $\mathbb{S}^{p-1}$. Note that any spherical distribution of a $p$-dimensional random variable corresponds to a density  of the form $f_U (u) f_0(v)$ for some density $f_U$ on $\mathbb{R}^+$, where $f_0$ is given by \eqref{det101}.
 Consequently, we define  
\begin{align}
\label{det103}
\mathcal{M}^2 = \min_{h} \int_{\R^+}\int_{\mathbb{S}^{p-1}} (f(u,v) - h(u) f_{0}(v))^2 \dd u\  \omega_{p-1}(\dd v),
\label{det102}
\end{align} 
as a measure for the  deviation from spherical symmetry, where 
 $\omega_{p-1}(\dd v) $ is the area element of $\mathbb{S}^{p-1}$, where the minimum is taken over the set of densities on $\mathbb{R}^+$. Note that $\omega_{p-1} =  \int_{\mathbb{S}^{p-1}} \omega_{p-1}  (\dd v)$.  Our first result provides an explicit solution of this optimization problem,  and is proved in \cref{proof2.1}. 

 \begin{proposition}
   \label{mindist} 
The minimum in \eqref{det103} is obtained for the marginal density of $(U,V) = ( \|Y\|, Y/\|Y\|)$, that is 
$ h^*(u):=  f_U(u) = \int_{\mathbb{S}^{p-1}} f(u,v) \ \omega_{p-1}(\mathrm dv)$, and  given by  
  \begin{align}
 \mathcal{M}^2 = \int_{\R^+}\int_{\mathbb{S}^{p-1}} f^2 (u,v) \mathrm  du \ \omega_{p-1}(\mathrm  dv)  - \omega_{p-1}^{-1} \int_{\R^+} f^2_U(u)\mathrm  du . \label{eq:optdist}
\end{align} 
 \end{proposition}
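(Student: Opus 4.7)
The plan is to expand the squared integrand in \eqref{det103} and exploit the fact that $f_0$ is constant on $\mathbb{S}^{p-1}$ to separate the $u$ and $v$ integrations. Writing
\begin{align*}
\int_{\R^+}\!\!\int_{\mathbb{S}^{p-1}} (f(u,v) - h(u) f_0(v))^2 \, du \, \omega_{p-1}(dv) = A - 2 B(h) + C(h),
\end{align*}
with $A = \int\int f^2 \, du \, \omega_{p-1}(dv)$ independent of $h$, I would then compute $B$ and $C$ using $f_0(v) = \omega_{p-1}^{-1}$ and $\int_{\mathbb{S}^{p-1}} \omega_{p-1}(dv) = \omega_{p-1}$.

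For the cross term, Fubini and the definition of the marginal $f_U(u) = \int_{\mathbb{S}^{p-1}} f(u,v)\, \omega_{p-1}(dv)$ give
$$ B(h) = \omega_{p-1}^{-1} \int_{\R^+} h(u) f_U(u) \, du, $$
while for the quadratic term $C(h) = \omega_{p-1}^{-1} \int_{\R^+} h^2(u)\, du$ since $\int_{\mathbb{S}^{p-1}} f_0^2(v)\, \omega_{p-1}(dv) = \omega_{p-1}^{-1}$. Completing the square in $h$ then yields
$$ -2B(h) + C(h) = \omega_{p-1}^{-1} \int_{\R^+} (h(u) - f_U(u))^2 \, du \; - \; \omega_{p-1}^{-1} \int_{\R^+} f_U^2(u) \, du, $$
so the functional is uniquely minimized, among arbitrary square-integrable $h$, by $h^*(u) = f_U(u)$.

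The only point that needs to be checked is that this unconstrained minimizer lies in the feasible set of probability densities on $\R^+$. This is immediate: $f_U$ is non-negative and integrates to one by Fubini, so $h^* = f_U$ is admissible and therefore also solves the constrained minimization over densities. Substituting $h = f_U$ back into $A - 2B(h) + C(h)$ gives precisely \eqref{eq:optdist}. I do not anticipate a real obstacle; the argument is a one-shot completion of the square, and the only subtlety worth flagging explicitly is the coincidence of the constrained and unconstrained optima, which is what makes the closed-form expression \eqref{eq:optdist} valid.
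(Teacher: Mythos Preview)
Your proof is correct and is essentially the same argument as the paper's: the paper writes $f - hf_0 = (f - f_U f_0) + (f_U - h)f_0$, shows the cross term integrates to zero, and concludes by the resulting Pythagorean decomposition, which is exactly your completion of the square viewed as an orthogonality statement. Your explicit verification that the unconstrained minimizer $h^*=f_U$ is a bona fide density is a nice touch that the paper leaves implicit.
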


 \noindent
Let $Y_1, \ldots, Y_n$  denote independent identically distributed $p$-dimensional   random variables and define
$U_i =\|Y_i \|$; $V_i = Y_i / \| Y_i\| $ ($i=1,\ldots , n)$.
 By Proposition \ref{mindist} the problem of estimating  the minimum distance ${\cal M}^2$ boils down to the estimation of the integrated squared densities of the random variables $(U_i,V_i)$ and $U_i$. The latter  can be estimated by standard methods \citep[see, for example][among many others]{hall1987estimation,bickel1988estimating} and we use 
\begin{align}
\label{det105}
\hat {\cal M}_n^{(1)} :=  \frac{1}{n(n-1)h}\sum_{i \neq j}K\left(\frac{U_i-U_j}{h}\right)
\end{align}
 as estimate for the integrated squared density of $U$, where $h$ denotes a bandwidth, $K$ is a symmetric kernel function supported on $(-1,1)$ with integral $1$, for example the Epanechnikov kernel. The estimation of the first integral 
in \eqref{eq:optdist} is more intricate as it corresponds to the (squared) $L^2$-norm of a density on  $\mathbb{R}^+ \times \mathbb{S}^{p-1}$. For this purpose, we use classical results from density estimation on the sphere, such as 
\cite{HallWatsonCabrera1987},  and consider  a rapidly varying function $L: \mathbb{R}^p \to \mathbb{R}$, which satisfies (i) $L$ is nonnegative and nondecreasing, (ii) for each $0 < r < 1$,
\begin{align}
    L(rt) / L(t) \to 0,\ \text{as} \  t \to \infty.
\end{align}
The estimator of the first integral in Proposition \ref{mindist} is then defined by 
\begin{align}
\label{det106}
\hat {\cal M}_n^{(2)}:= \frac{1}{n(n-1) c_1(\kappa)h} \sum_{i \neq j} K\left(\frac{U_i-U_j}{h}\right) L(\kappa V_i^{\top} V_j),
\end{align}
where $K(\cdot)$ and  $h$ are the same kernel and bandwidth  as used in $\hat {\cal M}_n^{(1)}$, $\kappa \to \infty $ is another bandwidth corresponding to the spherical part of the vector $(U,V)$, and  
the constant $c_1(\kappa )$ is given by  
\begin{align}
    c_1(\kappa) = \omega_{p-2} \int_0^{\pi} L(\kappa \cos \theta) (\sin \theta)^{p-2} \mathrm d \theta .
    \label{def:c1k}
\end{align}
Throughout this paper, we will consider the Fisher-von-Mises distribution function    
    \begin{align}
    L(\kappa t) = \kappa^{p/2-1} \{ (2\pi)^{p/2} \mathcal I_{p/2-1}(\kappa)\}^{-1} e^{\kappa t}\label{eq:langevin}
    \end{align}
    for the kernel function $L(\cdot)$, 
    where  $\mathcal I_{\nu }$ is the modified Bessel function of the first kind  \cite[see also 9.6.18 on Page 376 of][]{abramowitz1968handbook} of order $\nu$, i.e.,
\begin{align}
   \mathcal I_\nu(\kappa) = \frac{(\kappa/2)^\nu}{\Gamma(\nu+1/2) \Gamma(1/2)}\int_{-1}^1 e^{\kappa t} (1 - t^2)^{\nu - 1/2} \dd t. \label{eq:bessel}
\end{align}
    
\noindent
Finally, we propose to estimate the minimum distance in \eqref{eq:optdist}  by the  $U$-statistic
\begin{align}
   \hat{\mathcal{M}}^2_n &= \hat{\mathcal{M}}^{(2)}_n - \omega_{p-1}^{-1} \hat{\mathcal{M}}^{(1)}_n   
  = \frac{2}{n(n-1)} \sum_{i=1}^n \sum_{j=1}^{i-1}H_n(Y_i, Y_j), \label{eq:Ustat}
\end{align}
where  $\hat{\mathcal{M}}^{(1)}_n$ and $\hat{\mathcal{M}}^{(2)}_n$ are defined in \eqref{det105} and \eqref{det106}, respectively,  and 
\begin{align}
    \label{det206}
H_n(Y_i, Y_j) = c_1^{-1}(\kappa)h^{-1} K\left(\frac{U_i - U_j}{h}\right) L(\kappa V_i^{\top} V_j )- \omega_{p-1}^{-1} h^{-1}  K\left(\frac{U_i - U_j}{h}\right)
\end{align}
is a kernel of order $2$, 
where $U_i = \|Y_i\|$, $V_i =Y_i / \|Y_i\| $ ($i=1, \ldots ,n)$. Note that the kernel depends on the sample size $n$ through the bandwidths $h$ and $\kappa$.

\section{Asymptotic properties and first statistical applications}
\label{sec3}
   \def\theequation{3.\arabic{equation}}	
   \setcounter{equation}{0}

   In this section, we establish a central limit theorem for the statistic $\hat{\mathcal{M}}_n^2$ as defined in \eqref{eq:Ustat}. Interestingly, the convergence rate of $\hat{\mathcal{M}}^2_n$ to $\mathcal{M}^2$ depends on whether the distribution of $Y$ is spherically symmetric or not.  We use these results to develop several  statistical applications in Section \ref{sec32}.

   \subsection{Asymptotic theory} \label{sec31}
 
\noindent
For the investigation of the bias of $\hat{\mathcal{M}}_n^2$, we recall a Taylor expansion for a twice continuously differentiable function $ f: \mathbb{R}^+ \times \mathbb{S}^{p-1} \to \mathbb{R}$  with respect to the argument $v$ on the sphere $\mathbb{S}^{p-1}$. According to equation  (3) in  \cite{DiMarzio03042014},  we have 
\begin{align}
    f(u, y) = f(u, x) + \theta \xi^{\T} \mathcal{D}_{f}(u, x) + \frac{\theta^2}{2} \xi^{\T} \mathcal{D}^2_{f}(u, x) \xi + O(\theta^3), \label{eq:taylorv}
\end{align}
where  $\mathcal D_{f}^s(u, x)$ is the matrix of the $s$th-order  derivative of the function $f(u, v)$ with respect to  $v$ at the point $x$ and for  $x,y \in \mathbb{S}^{p-1}$,  the vector $\xi$   and the angle $\theta \in (0, \pi)$ are defined by  the tangent–normal decomposition  
\begin{equation} \label{eq:tangent-normal}
    y = x \cos(\theta) + \xi \sin(\theta) .
\end{equation}
Note that  $\theta \in (0, \pi)$ is the angle between $x$ and $y$, and $\xi \in \mathbb{S}^{p-1}$ is a vector orthogonal to $x$, i.e., $\xi \in \Omega_x = \{z\in \mathbb{S}^{p-1}:  z \perp x\}$. 
The intuition for \eqref{eq:taylorv}  comes from the fact that $y-x \approx \theta \xi$ when $\theta$ approximates  $0$, which follows from a Taylor expansion of the function $\cos \theta $ and $\sin \theta $ at the point $\theta =0$.  
Moreover, the transformation of the area element of $\mathbb{S}^{p-1}$ corresponding to 
the tangent–normal decomposition 
\eqref{eq:tangent-normal} is given by 
    \begin{align}
         \omega_{p-1}(\dd y) = (\sin \theta)^{p-2} \dd \theta \ \omega_{p-2}(\dd \xi),
         \label{lm:measure}
    \end{align}
see equation (2) in  \cite{DiMarzio03042014} or  equation  (1.5)  in \cite{HallWatsonCabrera1987}.
Meanwhile, for a fixed $v \in \mathbb{S}^{p-1}$, we have the Taylor expansion for $f$ 
\begin{align}
    f(u_0+h, v) =  f(u_0, v) + h \left.\frac{\partial f(u, v) }{\partial u} \right|_{u = u_0} + \frac{h^2}{2} \left.\frac{\partial^2 f(u, v) }{\partial u^2} \right|_{u = u_0} + O(h^3),\label{eq:tayloru}
\end{align}
where $ \partial^s f(u, v) /\partial u^s$ denotes the partial  derivative of order $s$ with respect  to $u$.

Finally, we denote first and second derivatives of the marginal density $f_U$ by $f_U^{\prime}$ and $f_U^{\prime \prime}$, respectively.

 For a precise statement of our results we require the following assumptions. 

   \begin{assumption}
    $K(\cdot)$  is a symmetric  function supported on the interval $ (-1,1)$, $ \int_{-1}^1 K\left(u\right)=1 $ and $ \int_{-1}^1 K^4\left(u\right) \dd u < \infty$; the kernel $L(\cdot)$ is the Langevin kernel defined in 
    \eqref{eq:langevin}.

    \label{ass:kernel}
\end{assumption}

\begin{assumption}
 The density $f$ of $Y$ is twice continuously differentiable.  
\begin{itemize}
    \item[(a)] For any $u \in \R^+$, $\mathcal D_f^2(u, v)$ is uniformly continuous on $\mathbb{S}^{p-1}$. 
\item[(b)] For any $v \in \mathbb{S}^{p-1}$, $ \partial^2 f(u, v)/\partial u^2$ is uniformly continuous on $\R^+$.
\end{itemize}
\label{ass:density}
\end{assumption}

\begin{proposition}
\label{prop:meanvar}
Suppose that Assumption \ref{ass:kernel} and \ref{ass:density} are satisfied and that  $\kappa \to \infty$, $h \to 0$. 
\begin{itemize}
    \item [(i)] If the distribution of $Y$ is spherically symmetric, that is $\mathcal{M}^2 =0$, we have 
\begin{align}
    \E(\hat{\mathcal{M}}^2_n)= 0.
\end{align}
\item[(ii)] If the distribution of $Y$ is not spherically symmetric, that is $\mathcal{M}^2 > 0$, we have  
\begin{align}
       \E(\hat{\mathcal{M}}^2_n) 
       &= {\mathcal{M}}^2  \\
       & +\frac{h^2 \phi_2(K) }{2} \left\{\int_{\mathbb{S}^{p-1}} \int_{\mathbb R^+} f(u_2,v_1)\left. \frac{\partial^2 f(u, v_1)}{\partial u^2} \right|_{u=u_2} \mathrm du_2 \ \omega_{p-1}(\mathrm  dv_1)  - \omega_{p-1}^{-1} \int_{\mathbb R^+} f_U(u) f_U^{\prime \prime}(u)\mathrm du \right\}\\ 
       &+ \frac{\omega_{p-2}}{2\kappa}\int_{\mathbb{S}^{p-1}} \int_{\mathbb R^+} f(u_2, v_1)   \mathrm{tr}\{ D^2_{f}(u_2, v_1)\}      \dd u_2 \omega_{p-1}(\dd v_1)+ O(\kappa^{-3/2}+ h^3 + h^2\kappa^{-1}), 
\end{align}
where  for $j\geq 1$
\begin{align}
    \label{det210}
\phi_j(K)  = \int_{-1}^1 x^j K(u) du. 
\end{align}
\end{itemize}
\end{proposition}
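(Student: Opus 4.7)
The statistic $\hat{\mathcal{M}}_n^2$ in \eqref{eq:Ustat} is a U-statistic of order $2$, so $\E[\hat{\mathcal{M}}_n^2] = \E[H_n(Y_1, Y_2)]$. I would split $H_n$ according to \eqref{det206} into the ``spherical'' piece involving $c_1^{-1}(\kappa) h^{-1} K\cdot L$ and the ``radial'' piece involving $\omega_{p-1}^{-1} h^{-1} K$, and analyse each by iterated integration over $(u_1, v_1)$ with $(u_2, v_2)$ held fixed.

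\textbf{Part (i).} Under spherical symmetry, $f(u,v) = f_U(u)/\omega_{p-1}$, so $(V_1, V_2)$ are independent of $(U_1, U_2)$ and each uniform on $\mathbb{S}^{p-1}$. Consequently
\begin{align*}
\E[L(\kappa V_1^{\top} V_2)] \;=\; \omega_{p-1}^{-1} \int_{\mathbb{S}^{p-1}} L(\kappa V_1^{\top} v_2)\, \omega_{p-1}(\dd v_2) \;=\; c_1(\kappa)/\omega_{p-1},
\end{align*}
and after substitution into $\E[H_n(Y_1, Y_2)]$ the two pieces cancel exactly, giving $\E(\hat{\mathcal{M}}_n^2) = 0$.

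\textbf{Part (ii).} For the spherical piece I would integrate first in $u_1$ via the substitution $t = (u_1 - u_2)/h$ and the $u$-Taylor expansion \eqref{eq:tayloru}; by symmetry of $K$ the odd moment vanishes, leaving $f(u_2, v_1) + \tfrac{h^2 \phi_2(K)}{2} \partial_u^2 f(u_2, v_1) + O(h^3)$. Next, for fixed $v_2$, I would parametrise $v_1$ by the tangent–normal decomposition \eqref{eq:tangent-normal}, use the area formula \eqref{lm:measure}, and apply the spherical Taylor expansion \eqref{eq:taylorv}. The constant term contributes $c_1(\kappa)\, f(u_2, v_2)$; the first-order term vanishes by the $\xi \mapsto -\xi$ symmetry on $\Omega_{v_2}$; and the second-order term produces, through the moment identity
\begin{align*}
\int_{\Omega_{v_2}} \xi \xi^{\top}\, \omega_{p-2}(\dd \xi) \;=\; \frac{\omega_{p-2}}{p-1}\bigl(I - v_2 v_2^{\top}\bigr),
\end{align*}
a contribution proportional to $\mathrm{tr}\{\mathcal{D}^2_f(u_2, v_2)\} \cdot \int_0^{\pi} \theta^2 L(\kappa \cos\theta)(\sin\theta)^{p-2}\dd\theta$. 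The parallel and simpler expansion of $\E[\hat{\mathcal{M}}_n^{(1)}]$ yields $\int f_U^2 \dd u + \tfrac{h^2\phi_2(K)}{2}\int f_U f_U'' \dd u + O(h^3)$. Subtracting, dividing by $c_1(\kappa)$ where appropriate and invoking \cref{mindist} to identify the leading order as $\mathcal{M}^2$ produces the claimed expansion.

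The main technical hurdle is the asymptotic evaluation of $\int_0^{\pi} \theta^{2k} L(\kappa\cos\theta)(\sin\theta)^{p-2}\dd\theta$ for $k=0,1$ as $\kappa \to \infty$. The substitution $s = \sqrt{\kappa}\theta$, combined with the Langevin concentration $e^{\kappa(\cos\theta - 1)} \approx e^{-s^2/2}$ and the large-$\kappa$ asymptotic of the Bessel function $\mathcal{I}_{p/2-1}(\kappa)$ in \eqref{eq:bessel}, reduces each integral to a one-sided Gaussian moment; taking the ratio of the $k=1$ and $k=0$ integrals isolates the $\kappa^{-1}$ order and fixes the constant multiplying $\mathrm{tr}\{\mathcal{D}^2_f\}$ in the stated bias, while the next term of the saddle-point expansion supplies the $O(\kappa^{-3/2})$ remainder. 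The uniform continuity hypotheses in \cref{ass:density} are used to upgrade the pointwise Taylor residuals in \eqref{eq:taylorv} and \eqref{eq:tayloru} to uniform $O(\theta^3)$ and $O(h^3)$ bounds, so that they survive the integrations and produce the stated remainder $O(\kappa^{-3/2} + h^3 + h^2 \kappa^{-1})$, with the cross term $h^2\kappa^{-1}$ arising from multiplying the two leading bias corrections.
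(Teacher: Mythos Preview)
Your proposal is correct and follows essentially the same route as the paper: part~(i) is exactly the computation the paper packages as Lemma~\ref{martingale} (the identity $\int_{\mathbb{S}^{p-1}} L(\kappa v_1^{\top} v_2)\,\omega_{p-1}(\dd v_2)=c_1(\kappa)$ via the tangent--normal change of variables), and part~(ii) is the content of Lemma~\ref{lm:kernelse} combined with the Laplace asymptotics of Lemma~\ref{lm:rate}. The only organisational difference is that the paper isolates the kernel-moment asymptotics $b_j(\kappa)\sim a_j(p)\kappa^{-j/2}$ and $c_1(\kappa)\to 1$ as a separate lemma, whereas you sketch the saddle-point computation inline; substantively the arguments coincide.
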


\noindent
By Proposition \ref{prop:meanvar}, the statistic  $\hat {\mathcal M}^2_n$ is an  unbiased estimator of $\mathcal{M}^2$ only under  spherical symmetry. When the exact spherical symmetry does not hold,  the bias of the kernel estimation with respect to the radius $\| Y\| $ is of order $h^2$, and it is of order $1/\kappa$ with respect to  the direction $Y/ \| Y\| $.    
   Therefore, smaller values of  $h$ and larger values of  $\kappa$ are  preferred to diminish the bias  of the estimator $\hat{\mathcal{M}}_n^2$. We now turn to the weak convergence of $\hat{\mathcal{M}}_n^2$.  Throughout this paper, the symbol $\stackrel{d}{\longrightarrow} $ denotes weak convergence (convergence in distribution) of real-valued random variables.

\begin{theorem}
     If  \cref{ass:kernel} and \ref{ass:density} are satisfied, $\kappa \to \infty$, $h \to 0$, $\kappa^{(p-1)/2}/(nh)  \to 0$, and the distribution of $Y$ is spherically symmetric, that is $\mathcal{M}^2=0$, we have 
     $$
     {\hat{\mathcal{M}}^2_n  \over s_n} \stackrel{d}{\longrightarrow} 
     {\cal N} (0,1),
     $$
     where
     \begin{align} \label{det108}
         s_n^2 =  
         \frac{\psi_2(K)    \kappa^{(p-1)/2} \int_{\mathbb{S}^{p-1}}\int_{\mathbb R^+} f^2(u,v) \mathrm du \ \omega_{p-1}(\mathrm  dv)}{2^{p-2} \pi^{(p-1)/2}n(n-1)h
       },
     \end{align}
     and 
     \begin{align}
         \psi_j(K)  = \int_{-1}^1 K^j(u) du. 
     \end{align}  \label{thm:normal}
\end{theorem}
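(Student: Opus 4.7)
My plan is to treat $\hat{\mathcal{M}}_n^2$ as a completely degenerate, bandwidth-dependent $U$-statistic of order $2$ and apply a central limit theorem for such statistics. The first step is to compute the first Hoeffding projection $h_1(y_1) := \E[H_n(y_1, Y_2)]$. Under spherical symmetry the joint density factorises as $f(u, v) = f_U(u)/\omega_{p-1}$, so conditioning on $V_1 = v_1$ and integrating over $V_2$ one obtains, via the tangent--normal decomposition \eqref{eq:tangent-normal}, the area formula \eqref{lm:measure}, and the definition \eqref{def:c1k},
\begin{align*}
\int_{\mathbb{S}^{p-1}} L(\kappa v_1^{\T} v_2)\,\omega_{p-1}(\dd v_2) = c_1(\kappa).
\end{align*}
A direct calculation then shows the two contributions to $H_n$ have identical conditional expectations, so $h_1 \equiv 0$. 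Combined with \cref{prop:meanvar}(i), this proves that $\hat{\mathcal{M}}_n^2$ is a mean-zero completely degenerate $U$-statistic.

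The second step is to derive the asymptotics of $\var(\hat{\mathcal{M}}_n^2) = \tfrac{2}{n(n-1)}\E[H_n^2(Y_1, Y_2)]$. I would decompose $H_n = A_n - B_n$ with $A_n = c_1^{-1}(\kappa) h^{-1} K L$ and $B_n = \omega_{p-1}^{-1} h^{-1} K$, and expand the square. The standard substitution $s = (u_1 - u_2)/h$ together with Taylor expansion gives $\int K^2((u_1-u_2)/h)f_U(u_1)f_U(u_2)\,\dd u_1\,\dd u_2 = h\psi_2(K)\int f_U^2(u)\,\dd u + O(h^3)$. The key new ingredient is the asymptotic behaviour of
\begin{align*}
c_2(\kappa) := \omega_{p-2} \int_0^\pi L^2(\kappa \cos\theta)(\sin\theta)^{p-2}\,\dd\theta .
\end{align*}
Combining the large-$\kappa$ asymptotic $\mathcal{I}_{p/2-1}(\kappa) \sim e^{\kappa}/\sqrt{2\pi\kappa}$ from \eqref{eq:bessel} with the Laplace approximation $e^{2\kappa(\cos\theta - 1)} \approx e^{-\kappa\theta^2}$ near $\theta = 0$ and the Gaussian moment identity $\int_0^\infty e^{-\kappa\theta^2}\theta^{p-2}\,\dd\theta = \Gamma((p-1)/2)/(2\kappa^{(p-1)/2})$ yields $c_2(\kappa) \sim \kappa^{(p-1)/2}/(2^{p-1}\pi^{(p-1)/2})$. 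Consequently $\E[A_n^2]$ is of exact order $\kappa^{(p-1)/2}/h$ and dominates both $\E[A_n B_n]$ and $\E[B_n^2]$, which are of order $h^{-1}$. Rewriting $\int f_U^2\,\dd u = \omega_{p-1}\int_{\mathbb{S}^{p-1}}\int_{\R^+} f^2(u,v)\,\dd u\,\omega_{p-1}(\dd v)$ under the null then reproduces exactly the expression for $s_n^2$ in \eqref{det108}.

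The third step is to invoke a central limit theorem for degenerate $U$-statistics with bandwidth-dependent kernel, for instance Theorem~1 of Hall (1984) or the theorem of de Jong. A sufficient condition is
\begin{align*}
\frac{\E[G_n^2(Y_1, Y_2)]}{\{\E[H_n^2(Y_1,Y_2)]\}^2} + \frac{\E[H_n^4(Y_1, Y_2)]}{n\,\{\E[H_n^2(Y_1,Y_2)]\}^2} \longrightarrow 0,
\end{align*}
where $G_n(y_1, y_2) := \E[H_n(y_1, Y_3) H_n(y_2, Y_3)]$. The same Laplace-type analysis as in Step 2 gives $\E[L^{2k}(\kappa V_1^{\T} V_2)] = O(\kappa^{(2k-1)(p-1)/2})$ and $\E[K^{2k}((U_1 - U_2)/h)] = O(h)$, so $\E[H_n^4] = O(\kappa^{3(p-1)/2}/h^3)$; dividing by $n \{\E[H_n^2]\}^2 = O(n\kappa^{p-1}/h^2)$ yields $O(\kappa^{(p-1)/2}/(nh))$, which vanishes by assumption. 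The $G_n$-term is controlled by an analogous, slightly more delicate argument using that the product $L(\kappa v_3^{\T} v_1) L(\kappa v_3^{\T} v_2)$ is concentrated on an angular neighbourhood of diameter $\kappa^{-1/2}$.

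The main technical obstacle is to make the Laplace-type control of spherical integrals against arbitrary powers of the Langevin kernel genuinely uniform as $\kappa \to \infty$, and in particular to obtain a sufficiently sharp bound on the cross-kernel integrals appearing in $\E[G_n^2]$. Once these moment estimates are in place, the asserted convergence follows from the standard CLT machinery for degenerate $U$-statistics.
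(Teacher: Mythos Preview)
Your proposal is correct and follows essentially the same route as the paper: establish complete degeneracy of the $U$-statistic under $\mathcal{M}^2=0$ via the identity $\int_{\mathbb{S}^{p-1}} L(\kappa v_1^{\T} v_2)\,\omega_{p-1}(\dd v_2) = c_1(\kappa)$, compute the variance through the Laplace asymptotics of $c_2(\kappa)$, and invoke Hall's (1984) CLT by verifying exactly the two-term condition you wrote down, obtaining the same rate $O(\kappa^{(p-1)/2}/(nh))$ for the fourth-moment piece. For the $G_n$-term you flag as the main obstacle, the paper does not use your angular-concentration heuristic but the cruder uniform bound $L(\kappa t)\lesssim \kappa^{(p-1)/2}$ on one $L$-factor in the product, integrating the other against $c_1(\kappa)$; combined with the one-dimensional kernel estimate $\int_{\R^+}\int_{\R^+}\big(\int|K(u)K(u+(u_2-u_3)/h)|\,\dd u\big)^2 f_U(u_2)f_U(u_3)\,\dd u_2\,\dd u_3 = O(h)$ this gives $\E[G_n^2]/\{\E[H_n^2]\}^2 = O(h)$ with very little additional work.
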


 \noindent
 Note that the convergence rate in \cref{thm:normal} is of order  $\sqrt{n(n-1)h/{\kappa^{(p-1)/2}}}$, which is asymptotically greater than $
\sqrt{n}$ as $\kappa^{(p-1)/2}/(nh)  \to 0$, by assumption. On the other hand, weak convergence with a rate $
\sqrt{n}$ is obtained in the case where the distribution of $Y$ is not spherically symmetric.

\begin{theorem}
Assume that   \cref{ass:kernel} and \ref{ass:density} are satisfied and that 
$\kappa \to \infty$, $h \to 0$, $\kappa^{(p-1)/2}/(nh)  \to 0$. If the distribution of $Y$ is not spherically symmetric, that is $\mathcal{M}^2 > 0$, we have 
    $$
    \sqrt{n}\big (\hat{\mathcal{M}}^2_n -\mathbb{E} [\hat{\mathcal{M}}^2_n ] \big ) \stackrel{d}{\longrightarrow}  {\cal N } (0, 4 \sigma^2) ,
    $$
 where the asymptotic variance is given by 
\begin{align}
 \sigma^2 & = \int_{\R^+} \int_{ \mathbb{S}^{p-1}} f (u,v)\big (f (u,v) - \omega_{p-1}^{-1} f_U (u) \big )^2\dd u \ \omega_{p-1}(\mathrm  dv)  - ({\mathcal{M}}^2)^2. 
  \label{det109}
   \end{align}
 \label{thm:alter}
\end{theorem}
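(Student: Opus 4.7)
The estimator $\hat{\mathcal{M}}_n^2$ is a second-order $U$-statistic whose kernel $H_n$ depends on $n$ through the bandwidths $(h,\kappa)$, so the natural strategy is the \emph{Hoeffding decomposition} adapted to triangular arrays. The plan is to write
\begin{align}
\hat{\mathcal{M}}_n^2 - \mathbb{E}[\hat{\mathcal{M}}_n^2] = \underbrace{\frac{2}{n}\sum_{i=1}^n \bigl(H_{n,1}(Y_i) - \theta_n\bigr)}_{L_n} + R_n,
\end{align}
where $\theta_n = \mathbb{E}[H_n(Y_1,Y_2)]$, $H_{n,1}(y) = \mathbb{E}[H_n(y,Y_2)]$ is the H\'ajek projection kernel, and $R_n$ is the completely degenerate second-order remainder. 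The linear piece $L_n$, being an average of i.i.d.\ random variables, should deliver the claimed Gaussian limit, while $\sqrt{n}\,R_n$ must be shown to be $\op(1)$.

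For the linear piece, the first task is to identify the limit of $H_{n,1}$. Using the tangent--normal decomposition \eqref{eq:tangent-normal}--\eqref{lm:measure} together with the Taylor expansions \eqref{eq:taylorv}--\eqref{eq:tayloru}, and exploiting that $L(\kappa\,\cdot)/c_1(\kappa)$ concentrates at the direction $v$ as $\kappa\to\infty$ while $h^{-1}K((u-\cdot)/h)$ localises around $u$, a direct calculation yields
\begin{align}
H_{n,1}(u,v) \longrightarrow g(u,v) := f(u,v) - \omega_{p-1}^{-1} f_U(u).
\end{align}
Under \cref{ass:density} this convergence can be upgraded to convergence in $L^{2+\delta}(f)$, so that $\theta_n \to \mathbb{E}[g(Y)] = \mathcal{M}^2$ and $\var(H_{n,1}(Y)) \to \var(g(Y)) = \sigma^2$. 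Applying the Lindeberg--L\'evy CLT for triangular arrays of i.i.d.\ variables (the Lindeberg condition following from the uniform $L^{2+\delta}$ bound) then gives $\sqrt{n}\,L_n \xrightarrow{d} \mathcal{N}(0,4\sigma^2)$.

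For the degenerate remainder, the $L^2$-orthogonality of the Hoeffding components yields $\var(R_n) \leq \tfrac{2}{n(n-1)}\mathbb{E}[H_n(Y_1,Y_2)^2]$. A calculation analogous to the variance expansion underlying \cref{thm:normal}---whose dominant contribution comes from the $c_1^{-2}(\kappa)K^2 L^2$ piece and is evaluated by combining the change of variables $s=(u_1-u_2)/h$ with the tangent--normal decomposition \eqref{lm:measure} applied to the $L^2$-integral on the sphere---shows
\begin{align}
\mathbb{E}\bigl[H_n(Y_1,Y_2)^2\bigr] = O\!\bigl(\kappa^{(p-1)/2}/h\bigr).
\end{align}
Hence $\var(\sqrt{n}\,R_n) = O\bigl(\kappa^{(p-1)/2}/(nh)\bigr) \to 0$ by assumption, and Chebyshev's inequality forces $\sqrt{n}\,R_n = \op(1)$. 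Combining this with the CLT for $L_n$ via Slutsky completes the proof.

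The main technical obstacle is the first step: promoting the pointwise limit $H_{n,1}\to g$ to a mode of convergence strong enough that the first two moments also converge, which requires uniform control of the spherical Langevin smoothing bias integrated against the possibly unbounded density $f$. A secondary, more mechanical obstacle is verifying that the cross term and the $\omega_{p-1}^{-2}K^2$ piece appearing in the expansion of $\mathbb{E}[H_n^2]$ are of strictly smaller order than the leading $c_1^{-2}(\kappa)$ contribution, so that the rate $\kappa^{(p-1)/2}/h$ is sharp and the negligibility of $R_n$ indeed follows from the bandwidth condition $\kappa^{(p-1)/2}/(nh)\to 0$.
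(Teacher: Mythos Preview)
Your proposal is correct and follows essentially the same route as the paper: Hoeffding decomposition, a linear projection part driving the Gaussian limit, and a degenerate second-order remainder whose variance is $O\bigl(\kappa^{(p-1)/2}/(n^2h)\bigr)$ and hence negligible under the bandwidth condition.

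The only noteworthy difference is how the linear term is handled. The paper replaces the H\'ajek projection $H_{n,1}=:g_n$ by its pointwise limit $g(u,v)=f(u,v)-\omega_{p-1}^{-1}f_U(u)$ \emph{before} invoking the CLT, so that the leading term $\tfrac{2}{n}\sum_i\bigl(g(Y_i)-\mathcal M^2\bigr)$ is an average of i.i.d.\ summands not depending on $n$ and the classical CLT applies directly. The cost is an extra remainder $r_{2,n}=n^{-1}\sum_i\{(g_n(Y_i)-\theta_n)-(g(Y_i)-\mathcal M^2)\}$, but this is bounded in $L^2$ by $O\bigl(n^{-1/2}(h^2+\kappa^{-1})\bigr)$ via the same Taylor/tangent--normal expansion you already use. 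Your version instead keeps $g_n$ in the linear part and invokes a triangular-array CLT, which forces you to verify Lindeberg through an $L^{2+\delta}$ bound---exactly the ``main technical obstacle'' you flag. The paper's variant sidesteps that obstacle entirely with only an $L^2$ estimate, so you may find it the cleaner route.
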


\begin{remark} \label{rembias}  {\rm 
Under exact sphericity (${\cal M}^2=0$), it follows from Theorem \ref{thm:normal}  that the variance  of ${\cal M}^2_n$ is of order $\kappa^{(p-1)/2} / (n(n-1)h)$  and the estimator is consistent independently of the dimension. 

In the case 
 ${\cal M}^2> 0$, the situation is more complicated. 
 Theorem \ref{thm:alter} can be used to derive the weak convergence  
\begin{align}
    \label{bias0}
    \sqrt{n}(\hat{\mathcal{M}}^2_n -{\mathcal{M}}^2) \stackrel{d}{\longrightarrow}  {\cal N } (0, 4 \sigma^2).
\end{align}
    However, some care is necessary when replacing 
    $\mathbb{E} [\hat{\mathcal{M}}^2_n ]$ by $\mathcal{M}^2$ as  the error  of this replacement has to be of  order $o(1/\sqrt{n})$.  By Proposition \ref{prop:meanvar}(ii) this yields to the condition 
  \begin{align}
  \label{bias1}    
 \sqrt{n}/\kappa +\sqrt{n}h^2 \to 0, 
  \end{align}
which  
 has to be simultaneously satisfied with the condition $\kappa^{(p-1)/2}/(nh)  \to 0$  required for controlling the variance of the $U$-statistic. This   yields some restriction on the dimension $p$.
For example, if  $p=3$, the condition \eqref{bias1} can be satisfied by setting $h = n^{-3/8}$, $\kappa = \lfloor n^{9/16} \rfloor$. However,  a simple calculation shows that \eqref{bias1} and  $\kappa^{(p-1)/2}/(nh)  \to 0$ cannot  simultaneously hold  if $p \geq 4$. 

This issue can be mitigated via a simple bias reduction 
\begin{align}
    \tilde{\mathcal{M}}^2_n(a, \kappa) = \frac{1}{1-a}  \hat{\mathcal{M}}^2_n(\kappa) - \frac{a}{1-a}  \hat{\mathcal{M}}^2_n(a\kappa),
\end{align}
where $\hat{\mathcal{M}}^2_n(a\kappa)$ denotes the estimator $\hat{\mathcal{M}}^2_n$ using $a\kappa$ instead of $\kappa$  in the rapidly varying function $L(\cdot)$ in \eqref{eq:langevin}, see  also Theorem 3 of \cite{TSURUTA2024105338}. In addition, if we replace the kernel $K(\cdot)$ by its   Jackknife correction $\tilde K (x)  = 2\sqrt{2} K(\sqrt{2}x) - K(x)$, which satisfies $\phi_2(\tilde K) = 0$,  we obtain from   Proposition \ref{prop:meanvar}(ii)
\begin{align}     \E(\hat{\mathcal{M}}^2_n) 
       &= {\mathcal{M}}^2 + O(\kappa^{-3/2}+ h^3),
       \label{det200a}
\end{align}
and  \eqref{bias0} holds under the conditions 
$\sqrt{n}/\kappa^{3/2} +\sqrt{n}h^3 \to 0$ and $\kappa^{(p-1)/2}/(nh)  \to 0$, which can be satisfied, whenever $p \leq 6$, for example $\kappa = n^{\frac{3}{p+3}}, h = n^{-\frac{3}{2(p+3)}}$.  If  $p >6 $, corresponding results can be derived  under additional smoothness assumptions  using the higher order expansions 
\begin{align}
    f(u, y) & = f(u, x) + \sum_{s=1}^q \frac{\theta^s}{s!} \xi^{\T} \mathcal{D}^s_{f}(u, x) \xi^{\otimes (s-1)} + O(\theta^{q+1}),  \\
      f(u_0+h, v) & =  f(u_0, v) + \sum_{s=1}^q \frac{h^s}{s!} \left.\frac{\partial^s f(u, v) }{\partial u^s} \right|_{u = u_0} + O(h^{q+1}) , 
\end{align}
where $u^{\otimes s}$ denotes the $s$-th Kroneckerian power of the vector $u$. For example,  if $q=3$, it follows, observing   $\int_{\Omega_x} \xi^{\top} \xi^{\otimes2} \omega_{p-2}(\xi) = 0$  and  $\int x^3 K(x) d x = 0$ ($K$ is symmetric by Assumption \ref{ass:kernel}), that  the rate in \eqref{det200a} can be improved to $ O( \kappa^{-2} + h^4 ) $, and \eqref{bias0} holds under the conditions 
\begin{align}
    \sqrt{n}/\kappa^{2} +\sqrt{n}h^4 \to 0, \label{eq:rmassumption}
\end{align} which can be satisfied, whenever $p \leq 8$, for example $\kappa = n^{\frac{3}{p+4}}, h = n^{-\frac{3}{2(p+4)}}$.
}

\label{rm:bias}
\end{remark}

 \subsection{Some statistical applications} \label{sec32} 

 By Proposition \ref{prop:meanvar},  Theorem \ref{thm:normal} and \ref{thm:alter} and Remark \ref{rm:bias}, the statistic $\hat{\mathcal{M}}_n^2$ is a consistent estimator of the measure $\mathcal{M}^2$, which defines the deviation from sphericity. To use these results for uncertainty quantification, one requires estimates of the variances $s_n^2$ and $\sigma^2$ defined in \eqref{det108}
 and \eqref{det109}, respectively.

A simple estimator for  $s_n^2$ is given by
\begin{align}
    \hat s_n^{2} = \frac{\phi_2(K)  \kappa^{(p-1)/2} }{2^{p-2} \pi^{(p-1)/2} n^2(n-1)^2 c_1(\kappa)h^2} \sum_{i \neq j} K\left(\frac{U_i-U_j}{h}\right) L(\kappa V_i^{\top} V_j),
    \label{det209}
\end{align}
which is obtained by  replacing the integral of the squared density in \eqref{det108} by its corresponding estimate \eqref{det106}.  We will use this estimate to define a test for exact sphericity (see Remark \ref{rem31} below).

The  estimation of $\sigma^2$ is more difficult and we propose a  Jackknife approach for  this purpose \citep[see Chapter 5 of][]{Lee2003}).
To be precise, let $\hat {\mathcal{M}}^2_{n-1}(-i)$ denote the estimator 
\eqref{eq:Ustat}
of  the minimum distance ${\cal M}^2$ based on the observations $Y_1, \ldots, Y_{i-1}, Y_{i+1},\ldots, Y_n$, and define the pseudovalues 
\begin{align}
    \tilde {\mathcal{M}}^2_i  = n \hat {\mathcal{M}_n}^2 - (n-1) \hat {\mathcal{M}}^2_{n-1}(-i)
\end{align}
($i=1, \ldots , n)$. 
The Jackknife estimator of the asymptotic variance $\sigma^2 = \lim_{n\to \infty} {\rm Var} (\sqrt{n} \mathcal{M}_n^2 )$ is then given by 
\begin{align}
 \hat \sigma^2_n  =  \frac{1}{4(n-1)} \sum_{i=1}^n (\tilde {\mathcal{M}}_i^2 - \bar {\mathcal{M}}_n^2)^2,
\end{align}
where  
\begin{align}
  \bar {\mathcal{M}}_n^2 =  n^{-1} \sum_{i=1}^n \tilde {\mathcal{M}}_i^2
\end{align}
denotes the Jackknife estimate 
 of the mean  of $\tilde {\mathcal{M}}_i^2$. 

From Theorem \ref{thm:alter} and Remark \ref{rm:bias}, we obtain a simple asymptotic confidence  interval for $\mathcal{M}^2$, that is 
\begin{equation}
    \label{det110}
    \hat I_n = \Big [ \hat{\mathcal{M}}_n^2 -  {\hat \sigma_n \over \sqrt{n}}u_{1-\alpha/2}  , \hat{\mathcal{M}}_n^2 +   {\hat \sigma_n \over \sqrt{n}} u_{1-\alpha/2} \Big  ], \label{eq:jackI}
\end{equation}
where  $u_{1-\alpha/2}$ denotes the $(1-\alpha/2)$-quantile of the standard normal distribution.  The following result, which follows from Theorem \ref{thm:alter} and the consistency of $\hat \sigma_n^2$ for $\sigma^2$, shows that this interval keeps its nominal level asymptotically. 
 
\begin{corollary}
\label{cor31}
 If the assumptions in  Theorem \ref{thm:alter} and \eqref{bias1} in Remark \ref{rembias} are satisfied, we have
$$
\lim_{n\to \infty}  \mathbb{P} \big ( \mathcal{M}^2 \in \hat I_n \big ) = 1- \alpha .
$$
\end{corollary}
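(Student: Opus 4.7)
The plan is to combine Slutsky's theorem with the weak convergence result \cref{thm:alter} (appropriately bias-corrected via \cref{rm:bias}) and the consistency of the jackknife variance estimator $\hat\sigma_n^2$. Since
\begin{equation*}
\PP\bigl(\mathcal{M}^2 \in \hat I_n\bigr) = \PP\!\left( \frac{\bigl|\sqrt{n}\,(\hat{\mathcal M}^2_n - \mathcal M^2)\bigr|}{\hat\sigma_n} \le u_{1-\alpha/2}\right),
\end{equation*}
it suffices to show that the studentized statistic $T_n := \sqrt n\,(\hat{\mathcal M}^2_n - \mathcal M^2)/\hat\sigma_n$ converges in distribution to a standard normal, since continuity of $\Phi$ at $\pm u_{1-\alpha/2}$ then produces the coverage $1-\alpha$.

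For the numerator, \cref{thm:alter} yields $\sqrt n\,(\hat{\mathcal M}^2_n - \E\hat{\mathcal M}^2_n) \stackrel{d}{\longrightarrow}\mathcal N(0, V)$, where $V$ denotes the asymptotic variance of $\sqrt n\,\hat{\mathcal M}^2_n$. The bandwidth conditions imported from \cref{rm:bias}, which are part of the standing hypotheses of the corollary, force the bias $\E\hat{\mathcal M}^2_n - \mathcal M^2$ to be $\op(n^{-1/2})$ through the expansion in \cref{prop:meanvar}(ii); combining these yields $\sqrt n\,(\hat{\mathcal M}^2_n - \mathcal M^2) \stackrel{d}{\longrightarrow} \mathcal N(0, V)$.

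The principal technical step is establishing $\hat\sigma_n^2 \stackrel{\PP}{\longrightarrow} V$ at the scale implied by the normalization in the definition of $\hat\sigma_n^2$. Classical jackknife consistency theorems for $U$-statistics \citep[Chapter~5]{Lee2003} cover only fixed kernels, whereas $H_n$ depends on $n$ through $(h,\kappa)$, so the argument must be redone while tracking bandwidth dependence. I would write the Hoeffding decomposition $H_n - \E H_n = H_n^{(1)}(Y_i) + H_n^{(1)}(Y_j) + H_n^{(2)}(Y_i,Y_j)$ with projection $H_n^{(1)}(y) := \E\bigl[H_n(y, Y_2)\bigr] - \E H_n$, and derive the pseudovalue expansion
\begin{equation*}
\tilde{\mathcal M}^2_i - \bar{\mathcal M}^2_n = 2\bigl(H_n^{(1)}(Y_i) - \overline{H_n^{(1)}}\bigr) + R_{n,i},\qquad \overline{H_n^{(1)}} := n^{-1}\sum_{k=1}^{n} H_n^{(1)}(Y_k),
\end{equation*}
where $R_{n,i}$ collects the contributions of the degenerate part $H_n^{(2)}$. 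Two convergences remain: first, $\var\bigl(H_n^{(1)}(Y_1)\bigr)$ tends, as $h\to 0$ and $\kappa\to\infty$, to the quantity $\sigma^2$ in \eqref{det109}, by direct computation of the conditional expectation and dominated convergence; second, the variance of the degenerate $U$-statistics governing $R_{n,i}$ is of order $\kappa^{(p-1)/2}/(nh)$, which vanishes under the assumptions of \cref{thm:alter}. A law of large numbers applied to the empirical variance of the leading terms then delivers $\hat\sigma_n^2 \stackrel{\PP}{\longrightarrow} V$.

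An application of Slutsky's theorem produces $T_n \stackrel{d}{\longrightarrow} \mathcal N(0,1)$ and the coverage claim follows. The hardest part will be the uniform control, in $i$, of the remainder $R_{n,i}$: the degenerate kernel $H_n^{(2)}$ is precisely what drives the non-$\sqrt n$ rate $\sqrt{n(n-1)h/\kappa^{(p-1)/2}}$ observed in \cref{thm:normal} under spherical symmetry, and one must verify that it nonetheless makes only a negligible contribution to the jackknife variance in the nondegenerate regime $\mathcal M^2 > 0$, so that the Hájek projection controls both the numerator and the denominator of $T_n$.
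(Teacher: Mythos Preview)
Your approach is correct and matches the paper's: the corollary is stated there as an immediate consequence of Theorem~\ref{thm:alter} (with the bias handled via Remark~\ref{rembias}) together with consistency of the jackknife variance estimate, combined by Slutsky's theorem. The paper does not spell out the jackknife consistency argument at all---it simply asserts it with a reference to \cite{Lee2003}---whereas you correctly flag that the $n$-dependence of the kernel $H_n$ requires revisiting the Hoeffding decomposition and controlling the degenerate part at the bandwidth-dependent rate, which is a genuine (if routine) addition beyond what the paper provides.
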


\noindent
Next, we turn to the problem of testing for spherical  symmetry. As pointed out in the introduction  we are not interested in testing for exact sphericity, that $\mathcal{M}^2=0$, because there are many applications where one does not really believe in exact sphericity, but wants to assume this with the hope that the deviations from sphericity are small and a  procedure developed under the assumption of exact sphericity (such as classical ANOVA) still yields reliable and efficient inference. With this point of view, we propose to test the hypotheses 
\begin{align}
    \label{det111}
    H_0^{\rm rel}:  \mathcal{M}^2 \leq \Delta ~~~~{ \rm versus }  ~~~~ H_1^{\rm rel}:  \mathcal{M}^2 > \Delta ~, \\
      \label{det112}
    H_0^{\rm eq}:  \mathcal{M}^2 \geq   \Delta ~~~~{ \rm versus }  ~~~~ H_1^{\rm eq}:  \mathcal{M}^2 <  \Delta ~, 
\end{align}
where $\Delta>0$ is a prespecified threshold.
Note that this perspective of hypothesis testing aligns with the view expressed by \cite{berger1987}, who argue that it is {\it rare, and perhaps impossible, to have a null hypothesis that can be exactly modeled by a parameter being precisely $0$}. Similarly, \cite{tukey1991}, in the context of multiple comparisons of means, emphasizes that {\it ``All we know about the world teaches us that the effects of A and B are always different — in some decimal place — for any A and B. Thus, asking ‘Are the effects different?’ is foolish.''}  We call hypotheses the form \eqref{det111} and \eqref{det112} relevant hypotheses in the following discussion. Note also that hypotheses of the form  \eqref{det112} have found considerable attention in {\it equivalence testing} in the field of biostatistics, which explains the notations $H_0^{\rm eq}$ and $H_1^{\rm eq}$   in \eqref{det112} \citep[see][]{wellek2010testing} and that rejection of the null hypothesis in \eqref{det112} allows to decide for approximate sphericity at a controlled type I error.

We propose to reject the null hypothesis in \eqref{det111} if 
\begin{align}
    \label{test1}
 \hat{ \mathcal{M}}_n^2 > \Delta + u_{1-\alpha} {\hat \sigma_n  \over \sqrt{n} } .
\end{align}
Similarly, we propose to reject the null hypothesis in \eqref{det112}, whenever
\begin{align}
    \label{test2}
 \hat{ \mathcal{M}}_n^2 \leq  \Delta + u_{\alpha} {\hat\sigma_n  \over \sqrt{n} } .
\end{align}
Our next results shows that both decision rules define consistent and asymptotic level $\alpha$ tests for the hypotheses \eqref{det111} and \eqref{det112}, respectively.

\begin{corollary}
    \label{cor2}
     ~~
     Let the assumptions of Theorem \ref{thm:alter} be satisfied.
    \begin{itemize}
        \item [(a)] For the  test \eqref{test1}, we have 
\begin{align}
\lim_{n\rightarrow\infty}
\mathbb{P }\Big(\hat{\mathcal{M}}_n^2 >  \Delta + u_{1- \alpha} {\hat \sigma_n \over \sqrt{n} }  \Big) =     
\begin{cases}
1,      & \text{if } {\mathcal{M}}^2 >  \Delta,\\ 
\alpha, & \text{if } {\mathcal{M}}^2 = \Delta  ,\\
0,      & \text{if } {\mathcal{M}}^2 < \Delta.
\end{cases}
\end{align}
        \item[(b)] For the test \eqref{test2}, we have 
\begin{align}
\lim_{n\rightarrow\infty}
\mathbb{P }\Big(\hat{\mathcal{M}}_n^2 \leq \Delta + u_{\alpha} {\hat \sigma_n \over \sqrt{n} }  \Big) =     
\begin{cases}
1,      & \text{if } {\mathcal{M}}^2 <  \Delta,\\ 
\alpha, & \text{if } {\mathcal{M}}^2 = \Delta  ,\\
0,      & \text{if } {\mathcal{M}}^2 > \Delta.
\end{cases}
\end{align}
           \end{itemize}
\end{corollary}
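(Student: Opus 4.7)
The plan is to reduce both decisions to tail probabilities of the standard normal distribution by combining Theorem \ref{thm:alter}, the bias control of Remark \ref{rembias}, and consistency of the variance estimator, via Slutsky's lemma. (I read the symbol $\hat s_n$ appearing in the corollary as the jackknife standard error $\hat\sigma_n$ used in the test definitions \eqref{test1}--\eqref{test2}, normalised so that $\hat\sigma_n^2$ consistently estimates the quantity governing the spread of $\sqrt n \hat{\mathcal M}_n^2$.)

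For part (a), I would rewrite the rejection event $\{\hat{\mathcal M}_n^2 > \Delta + u_{1-\alpha}\hat\sigma_n/\sqrt n\}$ in the normalised form
\begin{align*}
\frac{\sqrt n(\hat{\mathcal M}_n^2 - \mathcal M^2)}{2\sigma}  >  \frac{\hat\sigma_n}{2\sigma}\,u_{1-\alpha} + \frac{\sqrt n(\Delta - \mathcal M^2)}{2\sigma}.
\end{align*}
Theorem \ref{thm:alter}, combined with the bandwidth conditions of Remark \ref{rembias} (which allow centering at $\mathcal M^2$ rather than $\mathbb E[\hat{\mathcal M}_n^2]$ with error $o(n^{-1/2})$), implies the left-hand side converges weakly to $\mathcal N(0,1)$. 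Consistency of $\hat\sigma_n^2$ for $\sigma^2$ yields $\hat\sigma_n/\sigma\stackrel{d}{\longrightarrow}1$ in probability, so by Slutsky's lemma I split into three cases: if $\mathcal M^2 > \Delta$ the deterministic term on the right tends to $-\infty$ and the probability tends to $1$; if $\mathcal M^2 = \Delta$ the deterministic term vanishes and the right-hand bound converges in probability to $u_{1-\alpha}$, yielding $\mathbb P(\mathcal N(0,1) > u_{1-\alpha}) = \alpha$; if $\mathcal M^2 < \Delta$ the deterministic term tends to $+\infty$ and the probability tends to $0$.

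Part (b) follows by the identical scheme applied to \eqref{test2}: after rewriting the event in the same normalised form and using the identity $u_\alpha = -u_{1-\alpha}$, the argument reduces to the lower tail of the standard normal and the three cases flip sign relative to (a).

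The main technical point not dispatched by the cited theorems is the consistency $\hat\sigma_n^2 \to \sigma^2$ in probability. Because the U-statistic kernel $H_n$ in \eqref{det206} depends on $n$ through the bandwidths $h$ and $\kappa$, the textbook jackknife consistency theorem for fixed kernels (as in Chapter 5 of \cite{Lee2003}) does not apply off the shelf. I would adapt the standard argument by exploiting the non-degeneracy implied by $\mathcal M^2 > 0$ (so that the first Hoeffding projection of $H_n$ has a non-trivial limit equal to $v \mapsto \E[H_n(v,Y_2)]$, whose variance converges to $\sigma^2$ in \eqref{det109}) together with Assumption \ref{ass:density} to control the residual contributions from the diverging parameter $\kappa$ and the vanishing bandwidth $h$. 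This verification is the genuine work in the proof; the remaining steps are Slutsky-style manipulations.
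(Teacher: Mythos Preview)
Your approach is exactly the one the paper intends: the corollary is stated without proof and is meant to follow directly from Theorem \ref{thm:alter} (together with the bias discussion in Remark \ref{rembias}) and the consistency of the jackknife variance estimator, via the standard Slutsky-type case split you describe. You are also right to flag that the paper writes $\hat s_n$ where the tests \eqref{test1}--\eqref{test2} use $\hat\sigma_n$; this is a notational slip in the paper.

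One small correction to your display: you normalise the left-hand side by $2\sigma$ (correct, since the limit variance in Theorem \ref{thm:alter} is $4\sigma^2$), but then write ``$\hat\sigma_n/\sigma\to 1$''. For the boundary case $\mathcal M^2=\Delta$ to yield exactly $\alpha$ you need $\hat\sigma_n/(2\sigma)\to 1$, i.e.\ $\hat\sigma_n^2$ must consistently estimate $4\sigma^2$, the asymptotic variance of $\sqrt n\,\hat{\mathcal M}_n^2$. This is indeed how the paper intends the jackknife estimator to be read (its description just before \eqref{eq:jackI} identifies the target as $\lim_n\operatorname{Var}(\sqrt n\,\hat{\mathcal M}_n^2)$), but your parenthetical and the subsequent sentence are inconsistent on this point. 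Your identification of the one genuine technical obligation---verifying jackknife consistency for a $U$-statistic whose kernel $H_n$ depends on $n$ through $(h,\kappa)$---is accurate; the paper invokes this consistency without proof.
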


\noindent
We will illustrate the finite sample properties of the tests \eqref{test1} and \eqref{test2} in Section \ref{sec5}.

\begin{remark}~~
\label{rem31}
   {\rm 
   \begin{itemize}
       \item[(a)]
       An important question in testing hypotheses of the form \eqref{det111} and \eqref{det112} is the choice of the threshold $\Delta $. In the following, we shall explain how to choose it in a data adaptive way for the hypotheses \eqref{det112}, and a similar argument can be given for the hypotheses \eqref{det111}. Note that these hypotheses are nested, that is, if $H_0^{\rm eq}$ is satisfied for $\Delta_1$, then it is also satisfied for all $\Delta_2 \leq \Delta_1$. The decision rules for different $\Delta$'s are nested in the same way(if $H_0^{\rm eq}$ is rejected for $\Delta_2$, it is also rejected for all $\Delta_1 \geq \Delta_2$). 
  By the sequential rejection principle, we may simultaneously test the  hypotheses  in \eqref{det112}    for different $\Delta \geq 0$ 
  starting at $\Delta  = 0$ and 
   increasing  $\Delta $ to 
   find the minimum value of $\Delta $, say
  \begin{align}
     \label{det113}
 \hat \Delta_\alpha:= \min 
 \big \{\Delta \ge 0 \,| \,  \hat {\cal M}^2_n \leq  \Delta + u_{\alpha} \hat \sigma_n /\sqrt{n} \big  \}  
 \end{align}
   for which 
   $H_0$  in \eqref{det112} is  rejected.
  The quantity  $\hat \Delta_\alpha $ could be interpreted as a measure of evidence against the null hypothesis in \eqref{det112}.
  In this sense, the question of a reasonable
choice of the threshold $\Delta$ can  be postponed until after seeing the data. 
       \item[(b)] Although the focus of this paper is not on the classical hypothesis of exact sphericity, that  this 
\begin{align}
    \label{det208}
  H_0^{\rm exact }:   \mathcal M^2 =0 ~~~~ \text{versus} ~~~~\mathcal H_1^{\rm exact }: \mathcal M^2 >0  ~,
\end{align}
   it is worthwhile mentioning that  the theory developed so far also provides a test for these hypotheses without extra restrictions on the dimension $p$. More precisely, we propose to reject the null hypothesis  in \eqref{det208},  whenever
\begin{align}
    \label{testexact}
 \hat{ \mathcal{M}}_n^2 >
 u_{1-\alpha} {\hat s_n  \over \sqrt{n} } ,
\end{align}
where $\hat  s_n$ is defined in \eqref{det209}. Then, it follows from \cref{thm:normal} and \ref{thm:alter} that this decision rule defines a consistent and asymptotic level $\alpha$ test. In Section \ref{sec:compare}, we compare the finite sample performance of the test \eqref{testexact} for the hypotheses \eqref{det208} with other existing methods. 
    \end{itemize}
    }
\end{remark}

\section{Pivotal inference} \label{sec4}
  \def\theequation{4.\arabic{equation}}	
   \setcounter{equation}{0}

The statistical methodology  developed in Section \ref{sec32} requires the estimation of 
the variance $\hat s_n^2$.  In this section, we develop a pivotal confidence interval for  the deviation $\mathcal M^2$ from sphericity and  pivotal tests for the hypotheses \eqref{det111}   and \eqref{det112}.  For this purpose, we prove a weak convergence result for a sequential version of the estimator $\hat{\mathcal M}_n^2$.

To be precise, let $\hat{\mathcal M}^2_{\lfloor nt \rfloor}$
denote the estimator 
\eqref{eq:Ustat} calculated for the sample 
$Y_1, \ldots , Y_{\lfloor nt \rfloor}$, where $t \in [0,1]$. We consider the sequential process
\begin{align}
\label{det207a}
    S_n(t) = \frac{\lfloor nt \rfloor}{2\sqrt{n}} \big (\hat{\mathcal M}^2_{\lfloor nt \rfloor} - \E [\hat{\mathcal M}^2_n] \big ), 
\end{align}
whose asymptotic properties are investigated in the following theorem.  
Throughout this section, the symbol $\Rightarrow$ denotes weak convergence 
in the space $\ell^\infty ([0,1])$ of bounded functions on the interval $[0,1]$.
\begin{theorem}
\label{thm:invariance}
If  the conditions of \cref{thm:alter} are satisfied  and $(\log n)^2 \kappa^{(p-1)/2}/(nh) \to 0$,  we have 
\begin{align}
 \big \{ S_n(t) \big \}_{t \in [0,1]} \Rightarrow \big \{ \sigma \mathbb{B} (t) \big \}_{t \in [0,1]},
\end{align}
where $\big \{ \mathbb{B} (t) \big \}_{t \in [0,1]} $ is a standard Brownian motion and $\sigma^2 $ is defined in \eqref{det109}. 
\end{theorem}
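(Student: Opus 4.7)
The plan is to combine a Hoeffding decomposition of the sequential $U$-statistic with an invariance principle for the resulting linear projection, and to control the degenerate second-order remainder uniformly in $t$ via a martingale maximal inequality. Because the kernel $H_n$ depends on $n$ only through the bandwidths $h,\kappa$ (and not on the subsample size $m=\lfloor nt\rfloor$), the $U$-statistic $\hat{\mathcal M}^2_m$ is unbiased for the same quantity $\theta_n:=\E[H_n(Y_1,Y_2)]$ for every $m\ge 2$; in particular, $\E[\hat{\mathcal M}^2_n]$ carries no $t$-dependent bias. Writing the Hoeffding decomposition
\begin{align*}
\hat{\mathcal M}^2_m - \theta_n \;=\; \frac{2}{m}\sum_{i=1}^m H_{n,1}(Y_i)\;+\;\frac{2}{m(m-1)}\sum_{1\le i<j\le m} H_{n,2}(Y_i,Y_j),
\end{align*}
with $H_{n,1}(y):=\E[H_n(y,Y)]-\theta_n$ and $H_{n,2}$ its first-order degenerate residual, and multiplying by $\lfloor nt\rfloor/(2\sqrt{n})$, I obtain $S_n(t)=Z_n(t)+R_n(t)$, where
$$Z_n(t)=\frac{1}{\sqrt{n}}\sum_{i=1}^{\lfloor nt\rfloor} H_{n,1}(Y_i),\qquad R_n(t)=\frac{T_{\lfloor nt\rfloor}}{\sqrt{n}\,(\lfloor nt\rfloor-1)},\qquad T_m:=\sum_{1\le i<j\le m} H_{n,2}(Y_i,Y_j).$$

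For the linear part, standard kernel-density calculations (in the spirit of the proof of \cref{prop:meanvar}) show that $H_{n,1}(y)\to H_1(y):=f(u_y,v_y)-\omega_{p-1}^{-1}f_U(u_y)-\mathcal M^2$ in $L^2(\PP)$ as $h\to 0$ and $\kappa\to\infty$, where $u_y=\|y\|$ and $v_y=y/\|y\|$; a direct computation confirms $\E[H_1(Y)]=0$ and $\var(H_1(Y))=\sigma^2$. Applying Doob's $L^2$ maximal inequality to the partial-sum martingale $n^{-1/2}\sum_{i\le m}(H_{n,1}(Y_i)-H_1(Y_i))$ yields $\sup_{t\in[0,1]}\bigl|Z_n(t)-n^{-1/2}\sum_{i\le\lfloor nt\rfloor} H_1(Y_i)\bigr|=\op(1)$, and Donsker's invariance principle for the i.i.d.\ partial-sum process $n^{-1/2}\sum_{i\le\lfloor nt\rfloor} H_1(Y_i)$ delivers $Z_n\Rightarrow\sigma\mathbb{B}$ in $\ell^\infty([0,1])$.

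The heart of the proof, and the main obstacle, is uniform negligibility of $R_n$. The key observation is that by first-order degeneracy of $H_{n,2}$, the sequence $(T_m)_{m\ge 2}$ is a martingale with respect to $\F_m=\sigma(Y_1,\ldots,Y_m)$, since $\E[T_{m+1}-T_m\mid\F_m]=\sum_{i\le m}\E[H_{n,2}(Y_i,Y_{m+1})\mid Y_i]=0$. The $t$-dependent denominator $\lfloor nt\rfloor-1$ in $R_n$ prevents a single application of Doob, so I dissect $[0,1]$ into dyadic blocks $I_k=(2^{-k},2^{-k+1}]$ for $k=0,\ldots,K=\lceil\log_2 n\rceil$; on $I_k$ one has $\lfloor nt\rfloor\asymp n2^{-k}$, and Doob's inequality applied per block gives
\begin{align*}
\E\Bigl[\sup_{t\in I_k} R_n(t)^2\Bigr]\;\lesssim\;\frac{\E[\max_{m\le n2^{-k+1}}T_m^2]}{n\cdot(n2^{-k})^2}\;\lesssim\;\frac{\binom{n2^{-k+1}}{2}\E[H_n^2(Y_1,Y_2)]}{n\cdot(n2^{-k})^2}\;\lesssim\;\frac{\E[H_n^2(Y_1,Y_2)]}{n},
\end{align*}
uniformly in $k$. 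The second-moment bound $\E[H_n^2(Y_1,Y_2)]=O(\kappa^{(p-1)/2}/h)$ is the same one underlying the variance formula in \cref{thm:normal}. Combining the $K=O(\log n)$ block estimates together with the logarithmic cost of the chaining yields $\E[\sup_{t\in[0,1]} R_n(t)^2]=O\bigl((\log n)^2\kappa^{(p-1)/2}/(nh)\bigr)=o(1)$ under the stated assumption, hence $\sup_t|R_n(t)|=\op(1)$. Combined with $Z_n\Rightarrow\sigma\mathbb{B}$ and Slutsky's lemma for the sup-norm topology on $\ell^\infty([0,1])$, this yields $S_n\Rightarrow\sigma\mathbb{B}$ and explains why the stronger bandwidth condition with the extra $(\log n)^2$ factor is imposed relative to \cref{thm:alter}.
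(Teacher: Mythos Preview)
Your proof is correct and follows essentially the same route as the paper's: Hoeffding decomposition into a linear projection plus a degenerate remainder, Donsker's theorem for the i.i.d.\ partial-sum process after an $L^2$ approximation of $H_{n,1}$ by its limit $H_1$, and a geometric/dyadic blocking argument combined with Doob's maximal inequality (exploiting the martingale structure of the degenerate $U$-process) to control the remainder uniformly in $t$. The paper blocks on the integer scale $k\in[\rho^{l-1},\rho^l]$ while you block on the $t$-scale via $I_k=(2^{-k},2^{-k+1}]$, but the resulting bound and the origin of the $(\log n)^2$ factor in the bandwidth condition are identical.
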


\noindent 
\begin{remark}\label{rm:invariance}
  It follows from   Proposition \ref{prop:meanvar} that 
  $$
  \frac{1}{\sqrt{n}}\max_{2 \leq k \leq n} \Big |k \big  \{\mathcal M^2 - \E [ H_n(Y_1, Y_2)]  \big\} \Big | =\sqrt{n} \big | \mathcal M^2 - \E [ H_n(Y_1, Y_2) ]  \big | = O(\sqrt{n}/\kappa+ \sqrt{n}h^2)=o(1)
  $$
  provided that the bandwidth condition \eqref{bias1} is satisfied.
Along with \cref{thm:invariance}, we therefore have 
\begin{align}
    \frac{\lfloor nt \rfloor}{2\sqrt{n}} \big (\hat{\mathcal M}^2_{\lfloor nt \rfloor} - {\mathcal M}^2 )  \Rightarrow \big \{ \sigma \mathbb{B} (t) \big \}_{t \in [0,1]}.\label{eq:invarianceM}
\end{align}
The bandwidth condition \eqref{bias1}  can be relaxed as discussed in Remark \ref{rm:bias}.
\end{remark}
We will now use the result \eqref{eq:invarianceM} to develop pivotal inference for the measure $\mathcal{M}^2$. In particular, we define the statistic
\begin{align}
\label{det205}
\hat{V}_n = \int^1_0 \big|\hat{\mathcal M}^2_{\lfloor nt \rfloor}  -  \hat{\mathcal M}^2_n \big|\, t \,\dd t ,
\end{align}
and note that it follows from the continuous mapping theorem  and \cref{thm:invariance} that 
\begin{align}\label{def W}
    \frac{\hat{\mathcal M}^2_n -{\mathcal M}^2 }{\hat{V}_n}\;\stackrel{d}{\longrightarrow} \; W =  
    \frac{\mathbb{B}(1)}{\int^1_0|\,\mathbb{B}(t) -  t\mathbb{B}(1)|\dd t }. 
\end{align}

In the following, let $q_{1-\alpha }$ denote the $(1-\alpha)$ quantile of the distribution of $W$. Then, a pivotal confidence interval for 
$\mathcal M^2$ is given by  
\begin{equation}
    \label{det110a}\label{eq:pivI}
    \hat I_n^{\rm piv} = \Big [ \hat{\mathcal{M}}_n^2 -  \hat V_n q_{1-\alpha/2}  , \hat{\mathcal{M}}_n^2 +  \hat V_n q_{1-\alpha/2} \Big  ],
\end{equation}
Similarly, we propose tests for the hypotheses  \eqref{det111} and \eqref{det112} to reject the null hypothesis, whenever
\begin{align}
    \label{test1a}
 \hat{ \mathcal{M}}_n^2 & > \Delta + q_{1-\alpha} {\hat V_n  },
 \end{align}
 and 
 \begin{align}
    \label{test2a}
 \hat{ \mathcal{M}}_n^2 & \leq  \Delta + q_{\alpha} {\hat V_n },
\end{align}
respectively. We note that the comments regarding the choice of the threshold $\Delta$ in Remark \ref{rem31}(a) remain valid for the pivotal tests \eqref{test1a} and \eqref{test2a},  and summarize the properties of these tests in the following corollary.

\begin{corollary}
\label{cor41}
 If the assumptions in  Theorem \ref{thm:invariance} and in Remark \ref{rm:invariance} are satisfied, we have
$$
\lim_{n\to \infty}  \mathbb{P} \big ( \mathcal{M}^2 \in \hat I_n^{\rm piv} \big ) = 1- \alpha .
$$
Moreover, the tests \eqref{test1a} and \eqref{test2a} are consistent and have asymptotic level $\alpha$ for the hypotheses \eqref{det111} and \eqref{det112}, respectively.
\end{corollary}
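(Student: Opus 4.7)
The plan is to derive the convergence $(\hat{\mathcal{M}}_n^2 - \mathcal{M}^2)/\hat{V}_n \stackrel{d}{\longrightarrow} W$ from \cref{rm:invariance} via the continuous mapping theorem, and then read off the three claims of the corollary.

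Set $G_n(t) := \frac{\lfloor nt \rfloor}{2\sqrt{n}}(\hat{\mathcal{M}}^2_{\lfloor nt \rfloor} - \mathcal{M}^2)$, so that \cref{rm:invariance} gives $G_n \Rightarrow \sigma \mathbb{B}$ in $\ell^\infty([0,1])$. The algebraic identity
\begin{equation*}
\frac{\lfloor nt \rfloor}{2\sqrt{n}} \big (\hat{\mathcal{M}}^2_{\lfloor nt \rfloor} - \hat{\mathcal{M}}_n^2 \big )  = G_n(t) - \frac{\lfloor nt \rfloor}{n}\, G_n(1),
\end{equation*}
combined with the continuous mapping theorem applied to the continuous map $x \mapsto (t \mapsto x(t) - t x(1))$ on $\ell^\infty$, yields joint weak convergence of the left-hand side to $\sigma(\mathbb{B}(t) - t \mathbb{B}(1))$. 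Using $\sup_{t}|t - \lfloor nt \rfloor/n| = O(1/n)$ to replace $\lfloor nt \rfloor/n$ by $t$ inside the integrand of $\hat{V}_n$, I would then deduce
\begin{equation*}
\tfrac{1}{2}\sqrt{n}\, \hat{V}_n = \int_0^1 \Big|G_n(t) - \tfrac{\lfloor nt \rfloor}{n} G_n(1)\Big|\,\dd t + \op(1) \stackrel{d}{\longrightarrow} \sigma \int_0^1 |\mathbb{B}(t) - t \mathbb{B}(1)|\,\dd t.
\end{equation*}
Because the evaluation at $t=1$ and the above integral are both continuous functionals of $G_n$, the pair $\big (G_n(1),\,\tfrac{1}{2}\sqrt{n}\hat{V}_n \big )$ converges jointly, and dividing (the denominator's limit is strictly positive almost surely) produces $(\hat{\mathcal{M}}_n^2 - \mathcal{M}^2)/\hat{V}_n \stackrel{d}{\longrightarrow} W$, with the unknown scale $\sigma$ cancelling.

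Coverage of $\hat{I}_n^{\rm piv}$ is the event $|(\hat{\mathcal{M}}_n^2 - \mathcal{M}^2)/\hat{V}_n| \leq q_{1-\alpha/2}$, and since $\mathbb{B}\stackrel{d}{=}-\mathbb{B}$, the distribution of $W$ is symmetric about $0$, giving $\mathbb{P}(|W| \leq q_{1-\alpha/2}) = 1- \alpha$. For the tests, I would write $\hat{\mathcal{M}}_n^2 - \Delta = (\hat{\mathcal{M}}_n^2 - \mathcal{M}^2) + (\mathcal{M}^2 - \Delta)$, so that rejection in \eqref{test1a} becomes
\begin{equation*}
\frac{\hat{\mathcal{M}}_n^2 - \mathcal{M}^2}{\hat{V}_n} + \frac{\mathcal{M}^2 - \Delta}{\hat{V}_n} > q_{1-\alpha}.
\end{equation*}
For $\mathcal{M}^2 = \Delta$ the drift vanishes and the probability converges to $\mathbb{P}(W > q_{1-\alpha}) = \alpha$; for $\mathcal{M}^2 > \Delta$ the drift tends to $+\infty$ in probability because $\hat{V}_n = \Op(1/\sqrt{n})$ by the previous step, yielding consistency; for $\mathcal{M}^2 < \Delta$ the drift tends to $-\infty$ and the rejection probability vanishes. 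The mirror argument with inequalities reversed treats \eqref{test2a}.

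The main obstacle is transferring the functional convergence of $G_n$ to joint convergence of the numerator and denominator of the pivot. In particular, the uniform replacement of $\lfloor nt \rfloor/n$ by $t$ inside $\hat{V}_n$ requires some care, since $|\hat{\mathcal{M}}^2_{\lfloor nt \rfloor} - \hat{\mathcal{M}}^2_n|$ is potentially singular near $t=0$; the factor $t$ appearing in \eqref{det205} is precisely what tames this and makes the approximation uniform. Everything else reduces to continuous mapping and the symmetry of Brownian motion.
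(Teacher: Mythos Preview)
Your proposal is correct and follows exactly the route the paper indicates: the paper does not give a separate proof of \cref{cor41} but simply records, just before the corollary, that the pivotal convergence $(\hat{\mathcal M}^2_n-\mathcal M^2)/\hat V_n\stackrel{d}{\longrightarrow}W$ follows from \cref{thm:invariance} (together with \cref{rm:invariance}) and the continuous mapping theorem, from which coverage and the test properties are immediate. Your write-up supplies the details of this continuous-mapping step (the algebraic identity for $G_n(t)-\tfrac{\lfloor nt\rfloor}{n}G_n(1)$, the replacement $\lfloor nt\rfloor/n\to t$, the joint convergence of numerator and denominator, and the symmetry of $W$) more carefully than the paper itself does; your observation that the weight $t$ in $\hat V_n$ absorbs the $1/\lfloor nt\rfloor$ factor and controls the behaviour near $t=0$ is precisely the point.
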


\section{Finite sample properties} \label{sec5}
  \def\theequation{5.\arabic{equation}}	
   \setcounter{equation}{0}

In this section, we illustrate the finite sample properties of the developed methodology through a small simulation study. For the choice of kernel functions, we use the Fisher-von-Mises distribution function\eqref{eq:langevin} for $L(\cdot)$ and Epanechnikov kernel for $K(\cdot)$, i.e., $K(x) = \frac{3}{4} (1 - x^2)$, for $|x|< 1$, and $K(x) = 0$ otherwise. Both of the functions satisfy the Assumption \ref{ass:kernel}.  In addition, we implement the bias correction procedure in 
Remark \ref{rm:bias} with $a = 0.9$. All results presented here are based on $1000$ simulation runs.
\medskip

According to \cref{thm:alter}, the term $\kappa^{(p-1)/2}/(nh)$ controls the variance for the non-leading term in the Gaussian approximation, while $\sqrt{n}/\kappa^2 + \sqrt{n}h^4$ accounts for the bias term with  the bias reduction procedure discussed in \cref{rem31}. In addition, for the sequential convergence, by \cref{thm:invariance}, the inflated rate  $(\log n)^2 \kappa^{(p-1)/2}/(nh)$ controls the variance of the process of the non-leading term. In order to approximate Gaussianity in finite samples better, we recommend imposing a stronger control over the non-leading term, that is,
$n^{3/2} \times \kappa^{(p-1)/2}/(nh)$.
Therefore, we consider bandwidths $(h,\kappa)$ satisfying 
\begin{align}
    (\sqrt{n}/\kappa^2 + \sqrt{n}h^4)^2 \asymp \kappa^{(p-1)/2} \sqrt{n}/h. \label{eq:bdreduce}
\end{align}
For a data-dependent choice of the  smoothing parameters, we first define a sequence for  $(h, \kappa)$'s, i.e., $(h_i, \kappa_i):= (n^{-1/(2(p+8))}a_i, n^{1/(p+8)}c_i)$ satisfying \eqref{eq:bdreduce}, where $a_i$ and $c_i$ are prespecified constants, $i = 1,\ldots, M$. The choices of $a_i$ and $c_i$ can be obtained by inspecting the turning point of the graph of $\hat{\mathcal M}_n^2$ versus $h$ and that versus $\kappa$. We give more details for each model below. 

Given the chosen grid, we select the pair of parameters that minimizes the volatility of the self-normalizing  term $\hat V_n$ in \eqref{det205}, since its expectation is proportional to the standard deviation of the statistic $\hat {\cal M}_n^2$. More precisely, we select $(h, \kappa)$ as the minimizer of  
\begin{align}
   \min_{i \in \{1, \ldots, M\}}   \mathrm{SE}\{\hat V_n(h_j, \kappa_j)\}_{j=i-1}^{i+1},\label{eq:SE}
\end{align}
where SE denotes the standard error.
\medskip

\noindent
In the following discussion, we consider two models. 
\medskip

\noindent
\textbf{Model 1:} $Y$  has a $3$-dimensional Gaussian distribution with mean vector $\mu = (1,0,2)^{\top}$ and covariance matrix
\begin{align}
    \Sigma = 0.25 \times \begin{pmatrix}
        1& 0.3 &0 \\ 
        0.3&1&0\\ 
        0&0&1
    \end{pmatrix}.
\end{align}
Since it is difficult to obtain the analytical solution of $\mathcal M^2$, we calculate the deviation $\mathcal M^2$  numerically. We obtain $\mathcal M^2 \approx 0.95$,
 using a larger sample size $2500$ and averaged over $20000$ times of repeated simulation. For the selection of the  smoothing parameters via \eqref{eq:SE}, we use 
$$
(c_i)_{i=1}^5 = (72.50, 73.75, 75.00, 76.25, 77.50)^{\top}~,~~ (a_i)_{i=1}^5 = ( 0.7500, 0.8125, 0.8750, 0.9375, 1.0000)^{\top}.
$$

\medskip

\noindent
\textbf{Model 2:} $Y$ has a $5$-dimensional  Gaussian distribution with mean vector $\mu = (1, 0,0,  -2 ,0)^{\top}$ and covariance matrix
\begin{align}
    \Sigma = 0.25 \times \begin{pmatrix}
        1& 0.2 &0 &0 &0 \\ 
        0.2&1&0.3 &0 &0 \\ 
        0&0.3&1&0 &0 \\
        0 & 0 &0 & 1 & 0.2\\ 
         0 & 0 &0 & 0.2 & 1 
    \end{pmatrix}.
\end{align}
The deviation $\mathcal M^2$  was calculated numerically as $\mathcal M^2 \approx  1.97$.
For the selection of the  smoothing parameters via \eqref{eq:SE}, we use 
$$
(c_i)_{i=1}^5 =(37.50, 38.75, 40.00, 41.25, 42.50)^{\top} ~,~~  (a_i)_{i=1}^5 = ( 0.7500, 0.8125, 0.8750, 0.9375, 1.0000)^{\top}.
$$

\subsection{Confidence intervals for the deviation from a spherical distribution}\label{sec:ecover}

In \cref{tb:cover1} and \ref{tb:cover2}, we display  the empirical coverage rates  and the average widths of Jackknife confidence intervals  \eqref{eq:jackI}  and   pivotal  confidence intervals 
\eqref{eq:pivI} for the minimum distance \eqref{det103}, where we choose the nominal levels as  $95\%$ and $90\%$.

\begin{table}[t]
\centering
\begin{tabular}{rrrrrrrrr}
  \hline
 & \multicolumn{4}{c}{Jackknife}& \multicolumn{4}{c}{Pivotal} \\ 
   & \multicolumn{2}{c}{coverage} & \multicolumn{2}{c}{width} & \multicolumn{2}{c}{coverage} & \multicolumn{2}{c}{width} \\ 
$n$ & 95\% & 90\% & 95\% & 90\% &95\% & 90\% &95\% & 90\% \\ 
  \hline
 200 & 95.9 & 91.7 & 0.46 & 0.39 & 97.8 & 95.3 & 0.71 & 0.56 \\ 
  300 & 96.5 & 91.7 & 0.36 & 0.30 & 97.4 & 93.7 & 0.53 & 0.42 \\ 
  400 & 94.6 & 89.6 & 0.31 & 0.26 & 96.7 & 92.0 & 0.45 & 0.35 \\ 
  500 & 95.4 & 91.0 & 0.27 & 0.23 & 97.0 & 92.5 & 0.39 & 0.31 \\ 
  600 & 94.9 & 89.8 & 0.25 & 0.21 & 96.5 & 91.2 & 0.36 & 0.28 \\ 
  800 & 94.5 & 90.8 & 0.21 & 0.18 & 95.2 & 90.4 & 0.30 & 0.23 \\ 
  1000 & 95.4 & 89.1 & 0.19 & 0.16 & 95.2 & 89.8 & 0.25 & 0.20 \\ 
   \hline
\end{tabular}
\caption{\it Simulated coverage rates (in \%) and average widths  of  Jackknife and pivotal confidence intervals  for Model 1.}\label{tb:cover1}
\end{table}

\begin{table}[b]
\centering
\begin{tabular}{rrrrrrrrr}
  \hline
 & \multicolumn{4}{c}{Jackknife}& \multicolumn{4}{c}{Pivotal} \\ 
   & \multicolumn{2}{c}{coverage} & \multicolumn{2}{c}{width} & \multicolumn{2}{c}{coverage} & \multicolumn{2}{c}{width} \\ 
$n$ & 95\% & 90\% & 95\% & 90\% &95\% & 90\% &95\% & 90\% \\ 
  \hline
200 & 94.7 & 90.1 & 1.19 & 1.00 & 98.1 & 94.9 & 1.87 & 1.46 \\ 
  300 & 95.0 & 91.0 & 0.94 & 0.79 & 97.0 & 93.0 & 1.42 & 1.11 \\ 
  400 & 94.8 & 90.3 & 0.80 & 0.67 & 97.6 & 93.8 & 1.17 & 0.92 \\ 
  500 & 94.1 & 89.4 & 0.71 & 0.59 & 96.3 & 92.8 & 1.02 & 0.80 \\ 
  600 & 93.9 & 90.0 & 0.65 & 0.54 & 95.9 & 91.4 & 0.93 & 0.72 \\ 
  800 & 93.5 & 88.1 & 0.55 & 0.46 & 95.0 & 90.3 & 0.78 & 0.61 \\ 
  1000 & 94.3 & 89.4 & 0.49 & 0.41 & 94.7 & 90.4 & 0.69 & 0.54 \\ 
   \hline
\end{tabular}
\caption{\it Simulated coverage rates (in \%) and average widths  of  the Jackknife and pivotal  confidence intervals  for Model 2.}\label{tb:cover2}
\end{table}

For Model 1, the empirical coverage rates of both methods are close to their nominal levels for sample sizes larger than $500$. The Jackknife confidence interval achieves more accurate simulated coverage rates when the sample size is small, while the pivotal confidence intervals are more conservative for smaller sample sizes. A potential explanation for this observation is that when the sample size is small, the  widths of the pivotal method are much larger than those of the Jackknife method.  

We observe similar patterns for Model 2.  The pivotal confidence intervals are more conservative compared to the Jackknife confidence intervals but achieve accurate coverage rates for large sample sizes. At the same time, the Jackknife confidence intervals have narrower  widths and work very well even when the sample size is small.

\subsection{Relevant hypothesis}
  \label{sec52}

In \cref{tb:rel1} and \ref{tb:rel2},  we display the simulated rejection rates of Jackknife test  \eqref{test2}   and the pivotal  test \eqref{test2a} for the hypotheses \eqref{det112}  for 
different values of $\Delta$  and a significance level of $5\%$. For both models the results reflect the asymptotic properties of the test described  in  \cref{cor2}.
At the ``boundary'' of the hypotheses, where  $ {\cal M}^2=\Delta$,  the empirical sizes of
the pivotal test \eqref{test2a}  are close to the nominal level $5\%$ in most cases, while the type I error of the Jackknife test \eqref{test2} is too large for small sample sizes. \begin{table}[t]
\centering
\begin{tabular}{ccccccc|cccccc}
  \hline
 &\multicolumn{6}{c|}{$H_0$}& \multicolumn{6}{c}{$H_1$}\\ 
   \hline
  &\multicolumn{2}{c}{$\Delta = 0.7$}& \multicolumn{2}{c}{$\Delta = 0.9$}& \multicolumn{2}{c|}{$\Delta = \mathcal M^2$}&\multicolumn{2}{c}{$\Delta = 1.1$}& \multicolumn{2}{c}{$\Delta = 1.3$}& \multicolumn{2}{c}{$\Delta = 1.5$} \\
    \hline
$n$ & \eqref{test2}  & \eqref{test2a} & \eqref{test2}  & \eqref{test2a} & \eqref{test2}  & \eqref{test2a} & \eqref{test2}  & \eqref{test2a} & \eqref{test2} & \eqref{test2a} & \eqref{test2} & \eqref{test2a}\\ 
  \hline
200 &  0 &  0 & 2.0 & 1.4 & 7.7 & 3.9 & 39.2 & 23.1 & 87.8 & 68.2 & 99.1 & 91.3 \\ 
  300 &  0 &  0 & 1.3 & 0.8 & 6.1 & 4.2 & 51.9 & 36.2 & 95.4 & 82.4 & 100.0 & 98.3 \\ 
  400 &  0 &  0 & 0.9 & 0.8 & 5.8 & 4.9 & 60.3 & 43.0 & 98.7 & 91.5 & 100.0 & 99.7 \\ 
  500 &  0 &  0 & 0.5 & 0.6 & 6.2 & 4.5 & 71.2 & 53.7 & 99.6 & 95.9 & 100.0 & 100.0 \\ 
  600 &  0 &  0 & 0.5 & 0.7 & 5.1 & 3.9 & 74.3 & 59.6 & 99.8 & 97.6 & 100.0 & 99.9 \\ 
  800 &  0 &  0 & 0.5 & 0.8 & 3.9 & 3.9 & 83.0 & 65.6 & 100.0 & 99.3 & 100.0 & 100.0 \\ 
  1000 &  0 &  0 & 0.2 & 0.2 & 5.6 & 4.9 & 89.8 & 77.0 & 100.0 & 99.6 & 100.0 & 100.0 \\ 
   \hline
\end{tabular}
\caption{\it Simulated rejection rates (in \%) of the Jackknife test  \eqref{test2} and pivotal test  \eqref{test2a}   for the  relevant hypotheses  \eqref{det112} in  Model 1.} \label{tb:rel1}
\end{table}
In the ``interior of the null hypothesis'', where  $ \mathcal M^2> \Delta $, the rejection rates of both tests quickly decrease to $0$  as $\Delta$ decreases and the sample size increases. Note that this is a desirable behavior for a test of the composite hypotheses \eqref{det112} as one has to assure for a type I error that 
$$
\sup_{{\cal M}^2 \geq \Delta }
\mathbb{P} (\text{ ``rejection''}  ) \leq \alpha .
$$
Thus, the level of the test is calibrated at the ``boundary'' of null hypothesis ${\cal M}^2 = \Delta $  to be close to $\alpha$.     Under the alternative, where   $  \mathcal M^2 < \Delta $,  the rejection rates of both tests increase to $1$ as $\Delta$ increases and the sample size increases. The rejection rates of the Jackknife test  \eqref{test2} are usually higher than those of the pivotal test  \eqref{test2a}.

\begin{table}[ht]
\centering
\begin{tabular}{ccccccc|cccccc}
   \hline
 &\multicolumn{6}{c|}{$H_0$}& \multicolumn{6}{c}{$H_1$}\\ 
   \hline
  &\multicolumn{2}{c}{$\Delta = 1.7$}& \multicolumn{2}{c}{$\Delta = 1.8$}& \multicolumn{2}{c|}{$\Delta = \mathcal M^2$}&\multicolumn{2}{c}{$\Delta = 2.3$}& \multicolumn{2}{c}{$\Delta = 2.5$}& \multicolumn{2}{c}{$\Delta = 2.8$} \\
    \hline
$n$ & \eqref{test2}  & \eqref{test2a} & \eqref{test2}  & \eqref{test2a} & \eqref{test2}  & \eqref{test2a} & \eqref{test2}  & \eqref{test2a} & \eqref{test2} & \eqref{test2a} & \eqref{test2} & \eqref{test2a}\\ 
  \hline
200 & 0.8 & 0.2 & 3.2 & 1.3 & 8.0 & 4.5 & 36.1 & 24.2 & 60.8 & 43.0 & 85.3 & 69.1 \\ 
  300 & 0.2 & 0.3 & 1.7 & 1.1 & 8.5 & 4.4 & 46.8 & 32.1 & 72.9 & 55.5 & 93.6 & 81.0 \\ 
  400 & 0.0 & 0.0 & 1.7 & 1.4 & 6.0 & 3.8 & 57.3 & 41.7 & 82.8 & 67.5 & 97.4 & 90.9 \\ 
  500 & 0.2 & 0.1 & 1.0 & 0.7 & 6.9 & 4.6 & 63.3 & 46.3 & 90.6 & 76.7 & 99.5 & 94.4 \\ 
  600 & 0.1 & 0.1 & 0.8 & 1.1 & 5.9 & 4.8 & 66.7 & 51.9 & 93.2 & 81.4 & 99.7 & 96.4 \\ 
  800 & 0.0 & 0.0 & 0.1 & 0.3 & 5.3 & 3.8 & 78.4 & 59.9 & 97.7 & 89.0 & 100.0 & 98.7 \\ 
  1000 & 0.0 & 0.0 & 0.2 & 0.2 & 4.9 & 4.8 & 82.8 & 71.0 & 98.7 & 92.6 & 100.0 & 99.3 \\ 
   \hline
\end{tabular}
\caption{\it Simulated rejection rates (in \%) of Jackknife test \eqref{test2}  and the pivotal test  \eqref{test2a}   for the  relevant hypotheses \eqref{det112} in  Model 2.}\label{tb:rel2}
\end{table}

In summary, for testing for relevant deviations from sphericity from independent identically distributed data, the Jackknife test exhibits some advantages compared to the pivotal test if the sample size is sufficiently large. However, we emphasize that this observation can only be made for independent data.  In the presence of dependencies, the Jackknife estimator does not yield a valid testing procedure. On the other hand, a careful  inspection of the proofs in the appendix shows that the asymptotic statements in Section \ref{sec4} remain valid for stationary processes under appropriate mixing  \citep{Bradley.2007}, physical dependence \citep{Wu2005} or $m$-approximability \citep{HrmannKok} conditions.  Consequently, Corollary \ref{cor41} remains valid as well, and the pivotal test \eqref{test2a} has asymptotic level $\alpha$ and is consistent for the hypotheses \eqref{det112}.  

We illustrate this fact by a small simulation for the tests \eqref{test2} and \eqref{test2a} in a model with dependent data. To be precise,  let $Z_i=(z_{i,1}, z_{i,2}, z_{i,3})^{\top}$, where 
\begin{align}
     z_{i,j} = 0.3 z_{i-i,j} + \epsilon_{i,j}, \quad j=1,2,3,  
\end{align}
and the  $\epsilon_{i,j}'s $ are independent standard normal distributed random variables. The data $( Y_i)_{i=1}^n$ is then generated by 
\begin{align}
     Y_i = \Sigma^{1/2} Z_i \times \sqrt{1-0.3^2 } + \mu,
\end{align}
where $ \Sigma$ and $\mu$ are defined in Model 1.  We display in Table \ref{tabdep} the rejection probabilities of both tests for dependent data. In the interior of the null hypothesis  and under the alternative we  observe a quantitative similar behavior as for independent data. However, at the boundary ($\mathcal{M}^2= \Delta$) the Jackknife test \eqref{test2} does not keep its nominal level $5\%$ (in all considered cases). On the other hand, the pivotal test \eqref{test2a} yields a very good approximation of the nominal level.

\begin{table}[ht]
\centering
\begin{tabular}{ccccccc|cccccc}
  \hline
 &\multicolumn{6}{c|}{$H_0$}& \multicolumn{6}{c}{$H_1$}\\ 
   \hline
  &\multicolumn{2}{c}{$\Delta = 0.7$}& \multicolumn{2}{c}{$\Delta = 0.9$}& \multicolumn{2}{c|}{$\Delta = \mathcal M^2$}&\multicolumn{2}{c}{$\Delta = 1.1$}& \multicolumn{2}{c}{$\Delta = 1.3$}& \multicolumn{2}{c}{$\Delta = 1.5$} \\
    \hline
$n$ & \eqref{test2}  & \eqref{test2a} & \eqref{test2}  & \eqref{test2a} & \eqref{test2}  & \eqref{test2a} & \eqref{test2}  & \eqref{test2a} & \eqref{test2} & \eqref{test2a} & \eqref{test2} & \eqref{test2a}\\ 
  \hline
200 &  0 &  0 & 2.2 & 0.8 & 7.0 & 3.5 & 38.1 & 23.4 & 86.4 & 63.1 & 98.6 & 89.7 \\ 
  300 &  0 &  0 & 2.4 & 1.1 & 9.0 & 4.7 & 48.2 & 30.7 & 93.3 & 81.2 & 99.9 & 97.1 \\ 
  400 &  0 &  0 & 1.3 & 0.9 & 8.2 & 4.3 & 55.9 & 39.8 & 97.3 & 87.5 & 100.0 & 98.6 \\ 
  500 &  0 &  0 & 1.1 & 1.2 & 7.6 & 4.8 & 64.7 & 44.8 & 99.6 & 92.8 & 100.0 & 99.6 \\ 
  600 &  0 &  0 & 1.4 & 0.4 & 10.1 & 5.4 & 70.1 & 52.1 & 99.8 & 94.7 & 100.0 & 99.7 \\ 
  800 &  0 &  0 & 0.9 & 0.5 & 7.8 & 5.0 & 79.9 & 59.2 & 100.0 & 98.1 & 100.0 & 100.0 \\ 
  1000 &  0 &  0 & 0.9 & 0.4 & 8.4 & 5.4 & 86.0 & 66.8 & 100.0 & 99.5 & 100.0 & 100.0 \\ 
   \hline
\end{tabular}
\caption{\it Simulated rejection rates (in \%) of the Jackknife test \eqref{test2}  and the pivotal test \eqref{test2a}  for the  relevant hypotheses \eqref{det112} in the case of dependent data.\label{tabdep}}
\end{table}

\subsection{Comparison with other methods}\label{sec:compare}
In this section, we compare the minimum distance approach with alternative  methods for the  detection of deviations from sphericity. As all other papers focus on testing for exact ellipticity or sphericity, we consider the hypotheses \eqref{det208} and the test \eqref{testexact} in Remark \ref{rem31}. 
Note that under the null hypothesis ${\cal M}^2=0$, the estimator $\hat{\mathcal M}_n^2$ is unbiased (see Proposition \ref{prop:meanvar}(i)). Therefore, no bias correction as discussed in  Remark \ref{rm:bias} is required in this situation. For the selection of bandwidths, we balance the squared bias under the alternative hypothesis, i.e., $\kappa^{-2} + h^4$ and the variance under Theorem \ref{thm:normal}, i.e., $\kappa^{(p-1)/2}/[n(n-1)h]$, and for $p=3$  use 
\begin{align}
          h = 1.5 (n(n-1))^{-1/7} ~, ~~
      \kappa = 1.5 (n(n-1)h)^{1/3}.
\end{align} 
In the following, we compare our test with other tests for the exact hypothesis \eqref{det208}   proposed by   \cite{SCHOTT2002}, \cite{MPQ2002}, \cite{HP2007},  \cite{PG2008}, \cite{babic2021} and \cite{tang2024nonparametric}. 
  We consider the null hypothesis, where $Y$ follows a $3$-dimensional standard normal  distribution. For the alternative we changed the first component of $Y$ to  a $(\chi_2^2 - 2)/2$-distribution. 

In Table \ref{tb:exacth0}, we investigate the simulated rejection rate under the null hypothesis. The new test \eqref{testexact} provides a good approximation of the nominal level $\alpha= 10\%$, $5\%$. The alternative tests exhibit a similar behavior under the null hypothesis, except for the test of \cite{tang2024nonparametric}, which turns out to be conservative. The results under the alternative  are displayed in Table \ref{tb:exacth1}.  Although the new  test \eqref{testexact} does not require Gaussianity or the existence of $4$-th moments, the empirical rejection rates under the alternative are larger  than those of the tests \cite{HP2007}, \cite{MPQ2002}, \cite{SCHOTT2002},  which require such assumptions,  and  comparable to the optimal parametric and semiparametric tests  proposed by  \cite{PG2008} and \cite{babic2021}.
Moreover,  the new test \eqref{testexact} for the hypotheses \eqref{det208}  is more powerful in this scenario than the test based on kernel embedding of probabilities proposed by  
\cite{tang2024nonparametric}.


\begin{table}[ht]
\centering
\begin{tabular}{l|rrr|rrr|rrr|rrr}
 \hline
 \hline
 &\multicolumn{3}{c|}{\eqref{det208}}& \multicolumn{3}{c|}
{TL} &\multicolumn{3}{c|} {HP} & \multicolumn{3}{c} {MPQ} \\
\hline
\hline
$n$$|\alpha$ & 10\% & 5\% & 1\% & 10\% & 5\% & 1\% & 10\% & 5\% & 1\%  & 10\% & 5\% & 1\%  \\ 
 \hline
200	&8.6 & 5.0 & 1.0 & 3.7 & 1.7 & 0.3 & 10.2& 4.4 & 0.5 & 10.9 & 5.5 & 0.5 \\ 
  300&	10.1 & 5.5 & 1.2 & 3.9 & 1.9 & 0.2 & 10.1 & 5.2 & 1.3 & 10.6 & 5.4 & 1.2 \\ 
  400&	9.2 & 4.6 & 1.4 & 4.1 & 1.5 & 0.3 & 8.0 & 4.0 & 0.5 & 10.0 & 5.9 & 1.1 \\ 
   \hline
    \hline
 &\multicolumn{3}{c|}{\eqref{det208}}& \multicolumn{3}{c|}
{PG} &\multicolumn{3}{c|} {Schott} & \multicolumn{3}{c} {BGH} \\
\hline
\hline
$n$$|\alpha$ & 10\% & 5\% & 1\% & 10\% & 5\% & 1\% & 10\% & 5\% & 1\%  & 10\% & 5\% & 1\%  \\ 
 \hline
200	&8.6 & 5.0 & 1.0 & 11.0 & 5.1 & 1.2 & 10.1 & 4.8 & 0.6 & 10.9 & 4.7 & 0.9 \\ 
  300	&10.1 & 5.5 & 1.2 & 10.2 & 5.8 & 1.5 & 9.9 & 4.3 & 0.8 & 11.3 & 5.2 & 1.2 \\ 
  400&	9.2 & 4.6 & 1.4 & 12.2 & 7.2 & 1.3 & 10.8 & 5.4 & 1.2 & 10.9 & 5.9 & 0.9 \\ 
   \hline
   \hline
\end{tabular}
\caption{\it Simulated rejection rates (in \%) of the test {\eqref{det208}} and  the  tests of \cite{tang2024nonparametric} (TL), 
\cite{HP2007} (HP), \cite{MPQ2002} (MPQ), \cite{PG2008} (PG), \cite{SCHOTT2002} (Schott) and \cite{babic2021} (BGH)
for the null hypothesis of exact sphericity.  
The sample sizes are $200$, $300$ and $400$. The  nominal levels are   $10$\%, $5$\% and $1$\%. }
\label{tb:exacth0}
\end{table}

\begin{table}[!ht]
\centering
\begin{tabular}{l|rrr|rrr|rrr|rrr}
 \hline
 \hline
 &\multicolumn{3}{c|}{\eqref{det208}}& \multicolumn{3}{c|}
{TL} &\multicolumn{3}{c|} {HP} & \multicolumn{3}{c} {MPQ} \\
\hline
\hline
$n$$|\alpha$ & 10\% & 5\% & 1\% & 10\% & 5\% & 1\% & 10\% & 5\% & 1\%  & 10\% & 5\% & 1\%  \\ 
 \hline
200	&99.5 & 98.1 & 91.3 & 98.5 & 92.7 & 45.0 & 94.6 & 89.9 & 71.8 & 93.6 & 88.4 & 73.0 \\ 
  300&	99.9 & 99.9 & 99.8 & 100.0 & 99.9 & 93.8 & 99.9 & 99.0 & 94.8 & 99.4 & 98.9 & 94.6 \\ 
  400&	100.0& 100.0 & 100.0 & 99.9 & 99.9 & 99.7 & 100.0 & 99.9 & 99.7 & 99.9 & 99.9 & 99.4 \\ 
   \hline
    \hline
 &\multicolumn{3}{c|}{\eqref{det208}}& \multicolumn{3}{c|}
{PG} &\multicolumn{3}{c|} {Schott} & \multicolumn{3}{c} {BGH} \\
\hline
\hline
$n$$|\alpha$ & 10\% & 5\% & 1\% & 10\% & 5\% & 1\% & 10\% & 5\% & 1\%  & 10\% & 5\% & 1\%  \\ 
 \hline
200	&99.5 & 98.1 & 91.3 & 100.0& 100.0& 100.0& 75.2 & 64.2 & 42.3 & 100.0& 100.0& 99.9 \\ 
  300&	99.9 & 99.90 & 99.80 & 100.0& 100.0& 100.0& 91.4 & 82.9 & 63.2 & 100.0& 100.0& 100.0 \\ 
  400&	100.0& 100.0 & 100.0 & 100.0& 100.0& 100.0& 95.7 & 91.6 & 79.0 & 100.0& 100.0& 100.0 \\ 
   \hline
   \hline
\end{tabular}
\caption{\it Simulated rejection rates (in \%) of the test {\eqref{det208}} and  the  tests of \cite{tang2024nonparametric} (TL), 
\cite{HP2007} (HP), \cite{MPQ2002} (MPQ), \cite{PG2008} (PG), \cite{SCHOTT2002} (Schott) and \cite{babic2021} (BGH)
for the hypotheses of exact sphericity under the alternative hypothesis.  
The sample sizes are $200$, $300$ and $400$. The  nominal levels are   $10$\%, $5$\% and $1$\%. }
\label{tb:exacth1}
\end{table}

\subsection{Real data example}

To illustrate the potential applications of our approach, we analyze  the  log-returns  of the daily exchange
rate (close price) of the Yen to the Dollar and the Pound to the Euro from January 2, 2009, to December 31, 2009, which has also been investigated in \cite{einmahl2012} using a test for independent data. We downloaded the data from Yahoo Finance via the R package ``quantmod" and obtained 260 entries for the log-returns, as the transactions of the currency can be conducted on weekdays, including holidays. The mean of the bivariate series turns out to be $-9.4 \times 10^{-5}$ and $2.7 \times 10^{-4}$. In Figure \ref{fig:exchange}, we plot the absolute autocorrelations of different lags of both of the series.  We  found that there exists some evidence of serial dependence, for example,  the lag 12 of the Yen to Dollar and the lag 6 of the Pound to Euro. 
\begin{figure}[h]
    \centering
    \includegraphics[width=0.7\linewidth]{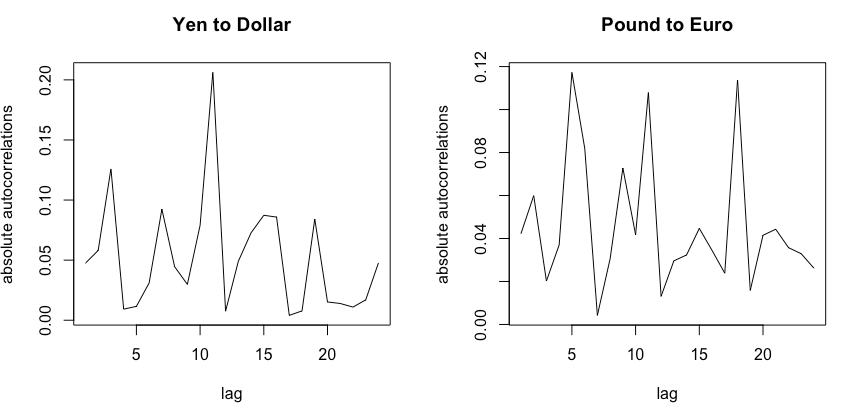}
    \caption{\it Absolute autocorrelations of the log-returns of the daily exchange
rate of the Yen to the Dollar and the Pound to the Euro from January 2nd, 2009, to December 31st, 2009.}
    \label{fig:exchange}
\end{figure}
The test developed by  \cite{einmahl2012} refers to the classical hypothesis of exact sphericity  (see equation \eqref{det208}) and does   not reject the null hypothesis. However, this does  not imply the null hypothesis is true, since only the type I error of deciding against exact sphericity (although it holds) is controlled. In order to control the error of deciding for a spherically symmetric distribution, we consider the hypotheses \eqref{det111}  and 
apply the  pivotal test, which additionally defines a valid inference procedure  for  dependent data  (see the discussion at the end of Section \ref{sec52}. We use the same tuning parameters as in Model 1 and obtain $\mathcal{
\hat M}_n^2 = 1.443$ as the estimator of  ${\cal M}^2$. According to Remark \ref{rem31} (a), we calculate the data dependent threshold $\hat \Delta_{0.05}$, which is $ 1.448$. The result shows that at the significance level of $0.05$, when the chosen $\Delta$ is smaller than  $\hat \Delta_{0.05}$, a model should be considered beyond spherical symmetry, and when the chosen $\Delta$ is larger than $\hat  \Delta_{0.05}$, one favors the simpler model with spherical symmetry. These results indicate that the assumption of sphericity is hard to justify for this data set.


 \bigskip

 \noindent
{\bf  Acknowledgements} 
 This work was partially supported by 
 DFG Research unit 5381 \textit{Mathematical Statistics in the Information Age} (Project number 460867398) and by 
 TRR 391 \textit{Spatio-temporal Statistics for the Transition of Energy and Transport} (Project number 520388526) funded by the Deutsche Forschungsgemeinschaft (DFG, German Research Foundation).

 \newpage

\appendix

\section{Preliminary technical results}

In this section, we present and prove several preliminary results, which will be essential for the proofs of the statements in Section \ref{sec3} and \ref{sec4}.
Our first result provides some properties of  the Fisher-von-Mises distribution.


\begin{lemma}
\label{lm:rate}
    For the Fisher-von-Mises distribution in \eqref{eq:langevin} we have for any $j \geq 1$, $p \geq 2$
\begin{align}
    \label{det200}
  b_j(\kappa) &= \omega_{p-2} \int_0^{\pi} L(\kappa \cos \theta) (\sin \theta)^{p-2} \theta^j \mathrm d \theta  \sim   a_j(p)\kappa^{-j/2}   \end{align}
 as $\kappa \to \infty $, where  $a_j(p) = 2^{j/2}\Gamma((p+j-1)/2)/\Gamma((p-1)/2) $. Similarly, 
\begin{align}
    \label{det201}
c_j(\kappa) &= \omega_{p-2} \int_0^{\pi} L^j(\kappa \cos \theta) (\sin \theta)^{p-2} \mathrm d \theta \sim    d_j(p) \kappa^{(j-1)(p-1)/2} 
\end{align} 
as $\kappa \to \infty $, where $d_j(p) = 2^{(1-j) (p-1)/2} j^{-(p-1)/2} 
 \pi^{(1-j) (p-1)/2}$. Moreover, for any $t>0$ we have 
 \begin{align}
\label{det203}
{L^j(\kappa t) \over L(j\kappa t)} \sim  \kappa^{(j-1)(p-1)/2} j^{-p/2+1/2} (2\pi)^{(j-1)(p-1)/2}
 \end{align}
  as $\kappa \to \infty $.
\end{lemma}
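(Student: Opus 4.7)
The plan is to prove all three asymptotics via Laplace's method applied to the integrals $b_j(\kappa)$ and $c_j(\kappa)$, using the classical asymptotic expansion of the modified Bessel function
$$
\mathcal{I}_\nu(\kappa) \sim \frac{e^\kappa}{\sqrt{2\pi\kappa}} \qquad (\kappa \to \infty),
$$
which, when substituted into \eqref{eq:langevin}, yields the pointwise asymptotic $L(\kappa\tau) \sim \kappa^{(p-1)/2}(2\pi)^{-(p-1)/2} e^{\kappa(\tau-1)}$. Since $\cos\theta$ is uniquely maximized on $[0,\pi]$ at $\theta = 0$, the exponential factor $e^{\kappa(\cos\theta-1)}$ concentrates the integrand near the origin at scale $\kappa^{-1/2}$.

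For part (i), I would split $[0,\pi]$ at some $\delta_\kappa \to 0$ with $\kappa\delta_\kappa^2 \to \infty$ (for example, $\delta_\kappa = \kappa^{-1/4}$). On the bulk $[0,\delta_\kappa]$ I use the Taylor expansions $\cos\theta = 1 - \theta^2/2 + O(\theta^4)$ and $\sin\theta = \theta(1+O(\theta^2))$; substituting $u = \sqrt{\kappa}\theta$ transforms the main piece of the integral into
$$
\kappa^{-(p+j-1)/2}\int_0^{\sqrt{\kappa}\delta_\kappa} u^{p+j-2}\, e^{-u^2/2}(1+o(1))\, \dd u \;\sim\; \kappa^{-(p+j-1)/2}\, 2^{(p+j-1)/2-1}\,\Gamma\big((p+j-1)/2\big),
$$
using the gamma identity $\int_0^\infty u^{s-1}e^{-u^2/2}\dd u = 2^{s/2-1}\Gamma(s/2)$ with $s = p+j-1$. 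Multiplying by $\omega_{p-2} = 2\pi^{(p-1)/2}/\Gamma((p-1)/2)$ and by the asymptotic prefactor $\kappa^{(p-1)/2}(2\pi)^{-(p-1)/2}$ coming from $L$, the powers of $2$, $\pi$, and $\kappa$ telescope to produce exactly $a_j(p)\kappa^{-j/2}$.

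Part (ii) proceeds identically, but with $L^j(\kappa\cos\theta) \sim \kappa^{j(p-1)/2}(2\pi)^{-j(p-1)/2} e^{j\kappa(\cos\theta-1)}$; the effective exponential rate becomes $j\kappa$, so the natural rescaling is $u = \sqrt{j\kappa}\theta$, which produces an extra factor of $(j\kappa)^{-(p-1)/2}$ from the Gaussian integral. Collecting constants as in part (i) yields $d_j(p)\kappa^{(j-1)(p-1)/2}$. Part (iii) then follows by a direct quotient computation: both $L^j(\kappa t)$ and $L(j\kappa t)$ contain the same exponential factor $e^{j\kappa t}$ (interpreting $L(j\kappa t)$ as the Langevin density with parameter $j\kappa$), so the ratio reduces to a quotient of normalization constants involving $\mathcal{I}_{p/2-1}(\kappa)^j$ and $\mathcal{I}_{p/2-1}(j\kappa)$, to which the Bessel asymptotic applies immediately.

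The main obstacle is the tail control in Laplace's method: the contribution from $[\delta_\kappa,\pi]$ must be shown to be of lower order than the claimed rate. Using the monotonicity of $L$ together with $\cos\theta - 1 \le -\theta^2/4$ on $[0,\pi/2]$ and $\cos\theta \le 0$ on $[\pi/2,\pi]$, the tail integrand is bounded by $\kappa^{(p-1)/2} e^{-c\kappa\delta_\kappa^2}$ up to polynomial factors in $\theta$, which with $\delta_\kappa = \kappa^{-1/4}$ decays faster than any polynomial in $\kappa$. A matching Gaussian tail estimate $\int_{\sqrt{\kappa}\delta_\kappa}^\infty u^{p+j-2} e^{-u^2/2}\,\dd u = o(1)$ handles the remainder of the bulk expansion, and no conceptual difficulty beyond these routine estimates is anticipated.
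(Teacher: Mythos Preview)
Your proposal is correct and uses the same Laplace-method strategy as the paper; the only cosmetic difference is that for $c_j(\kappa)$ the paper first substitutes $t=\cos\theta$ and then $t\mapsto 1-t$ (reducing to the Gamma-type integral $\int_0^{2j\kappa} e^{-t}t^{(p-3)/2}\,\dd t$), whereas you stay in the $\theta$-variable and rescale by $\sqrt{j\kappa}$ to obtain the equivalent Gaussian integral $\int_0^\infty u^{p-2}e^{-u^2/2}\,\dd u$. For $b_j(\kappa)$ the paper simply cites equation~(9) of \cite{DiMarzio03042014}, so your direct computation is actually more self-contained than the paper's own proof.
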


\begin{proof}
    By equation (9)  in \cite{DiMarzio03042014}, we have 
    \begin{align}
       b_j (\kappa) \sim  \frac{2^{j/2}\Gamma((p+j-1)/2)}{\kappa^{j/2} \Gamma((p-1)/2) }
    \end{align}
    as $\kappa \to \infty$ and  with  the notation 
     $a_j(p) = 2^{j/2}\Gamma((p+j-1)/2)/\Gamma((p-1)/2) $
    the result  \eqref{det200} follows.
    For a proof of \eqref{det201} we note that the modified Bessel function  in \eqref{eq:bessel} satisfies  $\mathcal I_v (\kappa) \sim  e^{\kappa}/\sqrt{2\pi \kappa}$, as $\kappa \to \infty$ (see equation (9.6.18) on Page 376 of  \cite{abramowitz1968handbook}). Therefore, recalling the definition of the Fisher-von-Mises distribution in \eqref{eq:langevin},  we obtain
\begin{align}
    \label{det204}
L(\kappa t) \sim  \kappa^{p/2-1} \{(2\pi)^{p/2}  e^{\kappa} /\sqrt{2 \pi \kappa}\}^{-1} e^{ \kappa t},
\end{align}
which yields (with the substitution $t = \cos \theta$)  
     \begin{align}
        c_j (\kappa) &= \omega_{p-2} \int_{-1}^1 L^j(\kappa t) (1 - t^2)^{(p-3)/2}\dd t\\ 
        &  \sim  \omega_{p-2} \kappa^{jp/2-j/2} \{(2\pi)^{(p-1)/2} e^k\}^{-j} \int_{-1}^1 e^{j\kappa t}(1 - t^2)^{(p-3)/2}\dd t.
    \end{align}
  Observing that  
     \begin{align}
      \int_{-1}^1 e^{j\kappa t} (1 - t^2)^{(p-3)/2}\dd t  
       ~ & = \int_0^2 e^{j\kappa(1-t)} \{t(2-t)\}^{(p-3)/2} \dd t\\ 
        &= \int_{0}^2 e^{j\kappa}e^{-j\kappa  t} t^{(p-3)/2} (2-t)^{(p-3)/2} \dd t\\
        &\sim  2^{(p-3)/2} e^{j\kappa} (j\kappa)^{-(p-1)/2}   \int_{0}^{2j\kappa}  e^{- 
     t} t^{(p-3)/2}\dd t\\
     &\sim  2^{(p-3)/2} e^{j\kappa} (j\kappa)^{-(p-1)/2}   \Gamma\{(p-1)/2\}, \label{eq:ejkint}
    \end{align}
we have for $j 
    \geq 1$  
 \begin{align}
      c_j (\kappa) & \sim    \omega_{p-2} (\kappa/(2\pi))^{pj/2-j/2}
 2^{(p-3)/2}  (j\kappa)^{-(p-1)/2}   \Gamma \{(p-1)/2\} \\ 
 &\sim    
 \kappa^{(j-1)(p-1)/2} 2^{(1-j) (p-1)/2} j^{-(p-1)/2} 
 \pi^{(1-j) (p-1)/2},
 \end{align}
  where we used $
  \omega_{p-2} = 2\pi^{(p-1)/2}/\Gamma((p-1)/2)$ in the last step. The assertion \eqref{det201} now follows with the notation 
  $d_j(p) = 2^{(1-j) (p-1)/2} j^{-(p-1)/2} 
 \pi^{(1-j) (p-1)/2}$. 
   Finally, \eqref{det203} is obtained by  a direct calculation  using the expansion  \eqref{det204}.
\end{proof}

\noindent
We will use Lemma \ref{lm:rate} for the calculation of the moments of the statistic \eqref{eq:Ustat}, which is complicated due to the definition of the kernel $H_n$. We present two auxiliary results, which are required as intermediate steps in these calculations.

\begin{lemma} 
If  Assumption \ref{ass:kernel} and \ref{ass:density} are satisfied, we have
    \begin{align}
        h^{-1}c_1^{-1} (\kappa) \E \Big [ K \Big (\frac{U_i-U_j}{h}\Big )  L(\kappa V_i^{\top} V_j) \Big ] &= \int_{\mathbb{S}^{p-1}} \int_{\R^+} f^2(u,v) \mathrm du \ \omega_{p-1}(\mathrm  dv)\\ & 
        +  \frac{h^2\phi_2(K) }{2}  \int_{\mathbb{S}^{p-1}} \int_{\R^+}  \left. f(u_2, v_1) \frac{\partial^2 f(u, v_1) }{\partial u^2} \right|_{u = u_2} \dd u_2 \omega_{p-1}(\dd v_1)\\ &+ \frac{\omega_{p-2}b_2(\kappa)}{2c_1(\kappa)(p-1)} \int_{\mathbb{S}^{p-1}} \int_{\R^+} f(u_2, v_1)   \mathrm{tr}\{ D^2_{f}(u_2, v_1)\}      \dd u_2 \omega_{p-1}(\dd v_1)\\ &+ O(c_1^{-1} (\kappa) b_3(\kappa) + b_2(\kappa) h^2 + h^3),\\
        h^{-1}c_1^{-2}(\kappa) \E  \Big [ K^2\Big (\frac{U_i-U_j}{h}\Big )  L^2(\kappa V_i^{\top} V_j) \Big ] &= c_1^{-2}(\kappa) c_2(\kappa) \psi_2(K)  \int_{\mathbb{S}^{p-1}} \int_{\R^+} f^2(u,v) \mathrm du \mathrm  dv \\ &+ O(c_1^{-2}(\kappa)c_2(\kappa) h^2 + c_1^{-2}(\kappa) \kappa^{(p-1)/2}b_2(2\kappa)),\\
         h^{-1}c_1^{-1}(\kappa) \E \Big [ K^2\Big (\frac{U_i-U_j}{h}\Big ) L(\kappa V_i^{\top} V_j) \Big ] &=  \psi_2(K)  \int_{\mathbb{S}^{p-1}} \int_{\R^+} f^2(u,v) \mathrm du \mathrm  dv + O(h^2 + c_1^{-1}(\kappa) b_2(\kappa)),\\ h^{-1} \E \Big [ 
         K \Big  (\frac{U_i-U_j}{h}\Big ) \Big ] &= \int_{\R^+} f_U^2(u) du  + \frac{h^2 \phi_2(K) }{2} \int_{\R^+} f_U (u)f^{\prime \prime}_U (u) du  + O(h^3), \\
         h^{-1} \E  \Big [ K^2 \Big (\frac{U_i-U_j}{h}\Big ) \Big ]&= \psi_2(K) \int_{\mathbb R^+} f_U^2(u) du  +O(h^2),
    \end{align}
    where $\psi_j(K) = \int_{-1}^1 K^j(u) \dd u$, $\phi_j(K) = \int_{-1}^1 x^j K(u) \dd u$.
    \label{lm:kernelse}
\end{lemma}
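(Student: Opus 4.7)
The plan is to treat all five identities by a single template. In each case I write the expectation as a fourfold integral against $f(u_1, v_1) f(u_2, v_2)$, make the substitution $u_1 = u_2 + h t$ (which produces the $h$ that cancels the $h^{-1}$ on the left), and, for the three identities containing $L$, apply the tangent--normal decomposition \eqref{eq:tangent-normal} $v_1 = v_2 \cos\theta + \xi \sin\theta$ with $\xi \in \Omega_{v_2}$. Then $V_i^\T V_j = \cos\theta$ and the Jacobian \eqref{lm:measure} gives $\omega_{p-1}(\dd v_1) = (\sin\theta)^{p-2}\,\dd\theta\,\omega_{p-2}(\dd\xi)$, so the expectation becomes an iterated integral over $u_2, v_2, t, \theta, \xi$.

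Next I would Taylor-expand $f(u_2 + h t,\, v_2 \cos\theta + \xi \sin\theta)$ around $(u_2, v_2)$ by combining the radial expansion \eqref{eq:tayloru} with the spherical expansion \eqref{eq:taylorv}, retaining terms up to second order. Integration over $\xi \in \Omega_{v_2}$ accomplishes three things at once: the pure-$u$ terms pick up a factor $\omega_{p-2}$; the linear-in-$\xi$ terms vanish by the symmetry $\int_{\Omega_{v_2}} \xi\,\omega_{p-2}(\dd\xi) = 0$; and the quadratic $\xi$-term collapses, via $\int_{\Omega_{v_2}} \xi\xi^\T \omega_{p-2}(\dd\xi) = \frac{\omega_{p-2}}{p-1}(I - v_2 v_2^\T)$, to $\frac{\omega_{p-2}\theta^2}{2(p-1)}\mathrm{tr}\{D^2_f(u_2,v_2)\}$. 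The uniform continuity of the second derivatives granted by Assumption \ref{ass:density} makes the Taylor remainders uniform in $(u_2, v_2)$, so they can be pushed safely through the outer integrations.

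The one-dimensional integrals are then immediate: for $t$ I use $\int K = 1$, $\int t K = 0$ (symmetry of $K$), $\int t^2 K = \phi_2(K)$, and $\int K^2 = \psi_2(K)$ for the squared-kernel identities; for $\theta$ I invoke the definitions \eqref{def:c1k} and \eqref{det200} of $c_j(\kappa)$ and $b_j(\kappa)$. Dividing by the appropriate normalization $h c_1(\kappa)$, $h c_1^2(\kappa)$, $h c_1(\kappa)$ or $h$ yields the stated leading terms together with the $h^2\phi_2(K)$ bias from the radial expansion and the $b_2(\kappa)$ bias from the spherical expansion. The fourth and fifth identities are the special cases in which $L$ is absent, so the spherical layer is skipped entirely and only the Taylor expansion of $f_U(u_2 + h t)$ is needed, immediately giving the claimed $O(h^3)$ and $O(h^2)$ remainders.

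The main obstacle, and where \cref{lm:rate} is genuinely indispensable, is the error bookkeeping for the $L^2$-identity in item two. The integrals $\int_0^\pi L^2(\kappa\cos\theta)(\sin\theta)^{p-2}\theta^j\,\dd\theta$ coming from the spherical Taylor remainder are not directly controlled by $b_j$ or $c_j$; they must be converted using the asymptotic identity \eqref{det203}, which, up to explicit constants, replaces $L^2(\kappa t)$ by $\kappa^{(p-1)/2} L(2\kappa t)$. This gives $\int_0^\pi L^2(\kappa\cos\theta)(\sin\theta)^{p-2}\theta^2\,\dd\theta \asymp \kappa^{(p-1)/2}b_2(2\kappa)/\omega_{p-2}$ and is precisely what produces the factor $\kappa^{(p-1)/2} b_2(2\kappa)$ in the second error bound. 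All other remainders are routine combinations of the Taylor bounds with the asymptotics in \cref{lm:rate}.
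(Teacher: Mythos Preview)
Your proposal is correct and follows essentially the same route as the paper's proof: write the expectation as a fourfold integral, substitute $u_1 = u_2 + ht$, apply the tangent--normal decomposition with Jacobian \eqref{lm:measure}, Taylor expand via \eqref{eq:tayloru}--\eqref{eq:taylorv}, and integrate out $\xi$ using the first two moment identities on $\Omega_{v_2}$ (together with $v_2^\T \mathcal D^2_f(u_2,v_2)\,v_2 = 0$, which you need to pass from $\mathrm{tr}\{(I-v_2v_2^\T)\mathcal D^2_f\}$ to $\mathrm{tr}\{\mathcal D^2_f\}$). The only cosmetic difference is that the paper distributes the two expansions across the two density factors---expanding $f(u_2+hu,v_1)$ in $u$ and $f(u_2,\,v_1\cos\theta+\xi\sin\theta)$ in $\theta$---whereas you expand one factor jointly; the outcome is identical, and your treatment of the $L^2$ remainder via \eqref{det203} is exactly how the paper invokes \cref{lm:rate}.
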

\begin{proof}
    \begin{align}
       & \E \Big [ K \Big (\frac{U_i-U_j}{h} \Big ) L(\kappa V_i^{\top} V_j) \Big ] \\ &= \int_{\mathbb{S}^{p-1}} \int_{\mathbb{S}^{p-1}} \int_{\mathbb R^+} \int_{\mathbb R^+} K\left(\frac{u_1 - u_2}{h}\right) L(\kappa v_1^{\T} v_2) f(u_1, v_1) f(u_2, v_2) \dd u_1 \dd u_2\  \omega_{p-1}(\dd v_1) \ \omega_{p-1}(\dd v_2). \label{eq:EKL}  
    \end{align}
    By  a change of variables, that is  $u = (u_1 - u_2)/h$, $\cos 
    \theta = v_1^{\T} v_2$,  we obtain 
    \begin{align}
        u_1 = hu + u_2, \quad v_2 = v_1\cos \theta+ \xi \sin \theta, \quad \xi \perp v_1. \label{eq:changev}
    \end{align}
  Note that  $\xi \in \Omega_{v_1}:=\{  \xi \in \mathbb{S}^{p-1} :~ \xi \perp v_2 \}$, so that 
    \begin{align}
        \omega_{p-1}(dv_2) = (\sin \theta)^{p-2} \dd \theta \ \omega_{p-2}(\dd \xi).\label{eq:changem}
    \end{align}
    Then, with \eqref{eq:changev} and \eqref{eq:changem}, we can rewrite \eqref{eq:EKL} into 
    \begin{align}
         & \E \Big [ K \Big (\frac{U_i-U_j}{h} \Big ) L(\kappa V_i^{\top} V_j) \Big ]   = h \int_{\Omega_{v_1}} \int_0^{\pi} \int_{\mathbb{S}^{p-1}} \int_{\mathbb R^+} \int_{\mathbb R^+} K\left(u\right) L(\kappa \cos \theta) f(hu + u_2, v_1)\\&\times  f(u_2, v_1\cos \theta+ \xi \sin \theta) \dd u \dd u_2 \omega_{p-1}(\dd v_1) \ (\sin \theta)^{p-2} \dd \theta \   \omega_{p-2}(\dd \xi). \label{eq:EKL1}  
    \end{align}
    Recall the  notations
    \begin{align}
   & b_j(\kappa) = \omega_{p-2} \int_0^{\pi} L(\kappa \cos \theta) (\sin \theta)^{p-2} \theta^j \mathrm d \theta,\\ 
   & c_j(\kappa) = \omega_{p-2} \int_0^{\pi} L^j(\kappa \cos \theta) (\sin \theta)^{p-2} \mathrm d \theta 
\end{align}
and  note  that (see the proof  of Theorem 3.1 in  \cite{DiMarzio03042014}) 
\begin{align}
    \int_{\Omega_{v_1}} \xi \omega_{p-2}(\dd \xi) = \mf 0_p,  \quad \int_{\Omega_{v_1}} \xi \xi^{\T} \omega_{p-2}(\dd \xi) = { \omega_{p-2}  \over (p-1)}  (\mf I_p -  v_1 v_1^{\T}), \label{eq:xi1}
\end{align}
where  $\mf 0_p$ and $\mf I_p$ denote the $p$-dimensional origin  and $p \times p$ identity matrix, respectively. Moreover,
$f(u,v_1)$ is defined on 
$\R^+ \times \mathbb{S}^{p-1}$
and 
\begin{align}
    v_1 ^{\T} D^2_{f}(u,v_1) v_1  = 0 \label{eq:dfv}
\end{align}
\citep[see the proof of  Theorem 3.1  in][]{DiMarzio03042014},
which yields  
\begin{align}
    \int_{\Omega_{v_1}} \xi^{\T
    } D_{f} (u_2, v_1) \omega_{p-2}(\dd \xi) &= 0, \label{eq:Dxi1}\\     \int_{\Omega_{v_1}} \xi^{\T
    } D^2_{f} (u_2, v_1) \xi \ \omega_{p-2}(\dd \xi) & =\int_{\Omega_{v_1}} \mathrm{tr} \{D^2_{f} (u_2, v_1) \xi \xi^{\T
    }\}  \ \omega_{p-2}(\dd \xi) \\ 
    &= \mathrm{tr} \{D^2_{f} (u_2, v_1) (\omega_{p-2} (p-1)^{-1} (\mf I_p -  v_1 v_1^{\T}))\}  \\
    & = \omega_{p-2} (p-1)^{-1} \mathrm{tr} \{D^2_{f} (u_2, v_1) \}. \label{eq:Dxi2}
\end{align}
    Using the Taylor expansions in \eqref{eq:tayloru} and \eqref{eq:taylorv}, we therefore obtain for  \eqref{eq:EKL1}
    \begin{align}
       &h^{-1} \E \Big [ K \Big (\frac{U_i-U_j}{h} \Big ) L(\kappa V_i^{\top} V_j) \Big ]  \\
        & = \int_{\mathbb{S}^{p-1}} \int_{\mathbb R^+} \left\{\int_{-1}^1 K\left(u\right) [f(u_2, v_1) + h u \left.\frac{\partial f(u, v_1) }{\partial u} \right|_{u = u_2} +  \frac{h^2 u^2}{2} \left.\frac{\partial^2 f(u, v_1) }{\partial u^2} \right|_{u = u_2}  + O(h^3)]du   \right\} \\ 
        & \times \left\{\int_{\Omega_{v_1}} \int_0^{\pi}  L\left(\kappa \cos \theta\right) [f(u_2, v_1)+ \theta \xi^{\T} D_{f}(u_2, v_1) 
        \right . \\
        & ~~~~~~~~ \left . + \frac{\theta^2}{2}\xi^{\T} D^2_{f}(u_2, v_1) \xi + O(\theta^3)]\ (\sin \theta)^{p-2} \dd \theta \   \omega_{p-2}(\dd \xi)  \right\}\dd u_2 \omega_{p-1}(\dd v_1)\\
        & = \int_{\mathbb{S}^{p-1}} \int_{\mathbb R^+} \left\{f(u_2, v_1) + \frac{h^2\phi_2(K) }{2}  \left.\frac{\partial^2 f(u, v_1) }{\partial u^2} \right|_{u = u_2}  + O(h^3)  \right\}\\ 
        & \times \left\{ c_1(\kappa) f(u_2, v_1) + 
        \frac{\omega_{p-2}}{2(p-1)} \mathrm{tr}\{ D^2_{f}(u_2, v_1)\} \int_0^{\pi} L\left(\kappa \cos \theta\right) \theta^2 (\sin \theta)^{p-2} \dd \theta \right.\\ & ~~~~~~~~
        \left. + \int_{\Omega_{v_1}}  \int_0^{\pi}  L\left(\kappa \cos \theta\right) O( \theta^3 ) (\sin \theta)^{p-2} \dd \theta \omega_{p-2}(\dd \xi)  \right\}\dd u_2 \omega_{p-1}(\dd v_1)\\ 
        & =  c_1(\kappa) \int_{\mathbb{S}^{p-1}} \int_{\mathbb R^+} f^2(u_2, v_1)\dd u_2 \omega_{p-1}(\dd v_1)\\&
        ~~~~~~~~ 
         +  c_1(\kappa) \frac{h^2\phi_2(K) }{2}  \int_{\mathbb{S}^{p-1}} \int_{\mathbb R^+}  \left. f(u_2, v_1) \frac{\partial^2 f(u, v_1) }{\partial u^2} \right|_{u = u_2} \dd u_2 \omega_{p-1}(\dd v_1) \\
         & ~~~~~~~~
        +  \frac{\omega_{p-2}b_2(\kappa)}{2(p-1)} \int_{\mathbb{S}^{p-1}} \int_{\mathbb R^+} f(u_2, v_1)   \mathrm{tr}\{ D^2_{f}(u_2, v_1)\}      \dd u_2 \omega_{p-1}(\dd v_1)+ O(b_3(\kappa) + h^3 + b_2(\kappa)h^2), 
    \end{align}
    where  we have used $\int_{-1}^{1} uK(u)\dd u = 0$, \eqref{eq:Dxi1} and \eqref{eq:Dxi2} in the second equality. \\    
    Recalling  the notation 
    \begin{align}
        c_2(\kappa) = \omega_{p-2} \int_0^{\pi} L^2(\kappa \cos \theta) (\sin \theta)^{p-2} \mathrm d \theta,
    \end{align}
    it follows by similar arguments and \cref{lm:rate} that 
    \begin{align}
       & h^{-1} \E \Big [ K^2\left(\frac{U_i-U_j}{h}\right) L^2(\kappa V_i^{\top} V_j)  \Big ] \\
         & = \int_{\mathbb{S}^{p-1}} \int_{\mathbb R^+} \left\{\int_{-1}^1 K^2\left(u\right) [f(u_2, v_1) + h u \left.\frac{\partial f(u, v) }{\partial u} \right|_{u = u_2} + O(h^2)]du   \right\} \\ 
        & \times \left\{\int_{\Omega_{v_1}} \int_0^{\pi}  L^2\left(\kappa \cos \theta\right) [f(u_2, v_1)+ \theta \xi^{\T} D_{f}(u_2, v_1) \right. \\
        &  ~~~~~~~ \left . + \frac{\theta^2}{2}\xi^{\T} D^2_{f}(u_2, v_1) \xi + O(\theta^3)]\ (\sin \theta)^{p-2} \dd \theta \   \omega_{p-2}(\dd \xi)  \right\}\dd u_2 \omega_{p-1}(\dd v_1)\\
        & = \int_{\mathbb{S}^{p-1}} \int_{\mathbb R^+} \left\{\psi_2(K)  f(u_2, v_1) + O(h^2)  \right\} \times \left\{ c_2(\kappa) f(u_2, v_1)  
        \right. \\
        &  ~~~~~~~ \left . +
    \frac{\omega_{p-2}}{2(p-1)} \mathrm{tr}\{ D^2_{f}(u_2, v_1)\} \int_0^{\pi} L^2\left(\kappa \cos \theta\right) \theta^2 (\sin \theta)^{p-2} \dd \theta  \right.\\ &  ~~~~~~~  \left.+ \int_{\Omega_{v_1}}  \int_0^{\pi}  L^2\left(\kappa \cos \theta\right) O( \theta^3 ) (\sin \theta)^{p-2} \dd \theta \omega_{p-2}(\dd \xi)   \right\}\dd u_2 \omega_{p-1}(\dd v_1)\\ 
        & =  c_2(\kappa) \psi_2(K)   \int_{\mathbb{S}^{p-1}} \int_{\mathbb R^+} f^2(u_2, v_1)\dd u_2 \omega_{p-1}(\dd v_1) + O(c_2(\kappa) h^2 + \kappa^{(p-1)/2} b_2(2\kappa)).
    \end{align}
    and 
    \begin{align}
       & h^{-1} \E \Big [  K^2 \Big  (\frac{U_i-U_j}{h}\Big  ) L(\kappa V_i^{\top} V_j) \Big  ]\\
         & = \int_{\mathbb{S}^{p-1}} \int_{\mathbb R^+} \left\{\int_{-1}^1 K^2\left(u\right) [f(u_2, v_1) + h u \left.\frac{\partial f(u, v) }{\partial u} \right|_{u = u_2} + O(h^2)]du   \right\} \\ 
        & \times \left\{\int_{\Omega_{v_1}} \int_0^{\pi}  L\left(\kappa \cos \theta\right) [f(u_2, v_1)+ \theta \xi^{\T} D_{f}(u_2, v_1)  \right. \\
        &  ~~~~~~~ \left . + \frac{\theta^2}{2}\xi^{\T} D^2_{f}(u_2, v_1) \xi + O(\theta^3)]\ (\sin \theta)^{p-2} \dd \theta \   \omega_{p-2}(\dd \xi)  \right\}\dd u_2 \omega_{p-1}(\dd v_1)\\
        & = \int_{\mathbb{S}^{p-1}} \int_{\mathbb R^+} \left\{\psi_2(K)  f(u_2, v_1) + O(h^2)  \right\} \times \left\{  c_1(\kappa) f(u_2, v_1)  \right.\\ &  ~~~~~~~  \left. 
        +\frac{\omega_{p-2}}{2(p-1)} \mathrm{tr}\{ D^2_{f}(u_2, v_1)\} \int_0^{\pi} L\left(\kappa \cos \theta\right) \theta^2 (\sin \theta)^{p-2} \dd \theta \right.\\ & ~~~~~~~ \left.+ \int_{\Omega_{v_1}}  \int_0^{\pi}  L\left(\kappa \cos \theta\right) O( \theta^3 ) (\sin \theta)^{p-2} \dd \theta \omega_{p-2}(\dd \xi)   \right\}\dd u_2 \omega_{p-1}(\dd v_1)\\ 
        & =  c_1 (\kappa) \psi_2(K)   \int_{\mathbb{S}^{p-1}} \int_{\mathbb R^+} f^2(u_2, v_1)\dd u_2 \omega_{p-1}(\dd v_1) + O(c_1(\kappa)h^2 + b_2(\kappa)). 
    \end{align}
    Finally, with simpler arguments, we can also derive 
    \begin{align}
     h^{-1} \E \Big [ K\Big  (\frac{U_i-U_j}{h}\Big ) \Big ] &= \int_{\mathbb R^+} f_U^2(u) du  + \frac{h^2 \phi_2(K) }{2} \int_{\mathbb R^+} f_U (u)f^{\prime \prime}_U (u) du  + O(h^3). \\
         h^{-1} \E \Big [ K^2 \Big  (\frac{U_i-U_j}{h}\Big  ) \Big ] &= \psi_2(K) \int_{\mathbb R^+} f_U^2(u) du  +O(h^2), 
    \end{align}
    which completes the proof of \cref{lm:kernelse}.
\end{proof}

\begin{lemma} 
If  Assumption \ref{ass:kernel} and \ref{ass:density} are satisfied, we have
    \begin{align}
        h^{-1} \E\Big [  K^4 \Big ( \frac{U_i-U_j}{h}\Big  ) L^j(\kappa V_i^{\top} V_j)\ \Big ] &\sim  \psi_4(K)   c_j(\kappa) \int_{\mathbb{S}^{p-1}} \int_{\mathbb R^+} f^2 (u, v) \dd u \ \omega_{p-1}(\mathrm  dv)\\
         h^{-1} \E \Big [ K^4\Big  (\frac{U_i-U_j}{h}\Big ) \Big ] &\sim  \psi_4(K)  \int_{\mathbb R^+} f_U^2 (u) \dd u. 
    \end{align}
    \label{lm:kernel4se}
\end{lemma}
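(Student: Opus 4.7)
The proof follows the same template as Lemma \ref{lm:kernelse} but with a coarser analysis, since only the leading-order equivalence $\sim$ is required rather than an explicit bias expansion. All Taylor remainders and lower-order contributions can be absorbed into $o(1)$ factors.

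First, I would write the expectation as a quadruple integral against the product density $f(u_1,v_1)f(u_2,v_2)$ and apply the same change of variables as in Lemma \ref{lm:kernelse}: the substitution $u = (u_1-u_2)/h$ in the radial direction produces a Jacobian factor $h$ and reduces $K^4((u_1-u_2)/h)$ to $K^4(u)$, while the tangent--normal decomposition $v_2 = v_1\cos\theta + \xi\sin\theta$ with $\xi\in\Omega_{v_1}$ introduces the area-element factor $(\sin\theta)^{p-2}\,d\theta\,\omega_{p-2}(d\xi)$ from \eqref{lm:measure}.

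Second, I would extract the leading order via two independent localizations. Since $K^4$ is supported on $(-1,1)$, Assumption \ref{ass:density}(b) gives $f(hu+u_2,v_1) = f(u_2,v_1)(1+o(1))$ uniformly in $u$, and the $u$-integral reduces to $\int_{-1}^1 K^4(u)\,du = \psi_4(K)$ at leading order. On the spherical side, the rapidly varying kernel $L^j(\kappa\cos\theta)$ concentrates its mass near $\theta=0$; Assumption \ref{ass:density}(a) together with the estimate $b_m(\kappa)=O(\kappa^{-m/2})$ from Lemma \ref{lm:rate} shows that the $\theta^m$-weighted tails contribute strictly less than $c_j(\kappa)$, so the spherical integral reduces to $c_j(\kappa) f(u_2,v_1)(1+o(1))$. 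Combining these two localizations factorizes the quadruple integral at leading order, and dividing by $h$ yields
\begin{align}
h^{-1}\E\left[K^4\left(\tfrac{U_i-U_j}{h}\right)L^j(\kappa V_i^{\T}V_j)\right] = \psi_4(K)\,c_j(\kappa)\int_{\mathbb{S}^{p-1}}\int_{\R^+} f^2(u,v)\,du\,\omega_{p-1}(dv)\,(1+o(1)).
\end{align}

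The second claim is strictly simpler: the same radial substitution $u=(u_1-u_2)/h$ applied to the double integral against $f_U(u_1)f_U(u_2)$, combined with uniform continuity of $f_U$ inherited from Assumption \ref{ass:density}(b), gives $h^{-1}\E[K^4((U_i-U_j)/h)] = \psi_4(K)\int_{\R^+} f_U^2(u)\,du\,(1+o(1))$. I do not anticipate any serious obstacles here: the technical machinery needed (the rate estimates in Lemma \ref{lm:rate} and the factorization via the tangent--normal decomposition) was already developed in Lemma \ref{lm:kernelse}, and the present calculation is in fact easier since no Taylor cancellations (such as the identity $\int uK(u)\,du=0$ that was used in Lemma \ref{lm:kernelse}) are required at the level of the leading-order equivalence.
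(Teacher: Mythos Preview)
Your proposal is correct and follows essentially the same route as the paper: the same change of variables (radial substitution plus tangent--normal decomposition), followed by Taylor localization to extract $\psi_4(K)$ and $c_j(\kappa)$ at leading order, with lower-order terms controlled via Lemma \ref{lm:rate}. The only minor imprecision is that the constants $b_m(\kappa)$ you cite are defined with $L$ rather than $L^j$; the paper handles the $\theta$-weighted tails for $L^j$ by invoking the relation \eqref{det203} from Lemma \ref{lm:rate} to pass from $L^j(\kappa\cos\theta)$ to a multiple of $L(j\kappa\cos\theta)$, after which $b_m(j\kappa)$ applies and the remainder is indeed $o(c_j(\kappa))$.
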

\begin{proof}
Recall the notations 
\begin{align}
    \psi_j (K) =  \int_{-1}^1 K^j\left(u\right) \dd u, \quad c_j(\kappa) = \omega_{p-2} \int_0^{\pi} L^j(\kappa \cos \theta) (\sin \theta)^{p-2} \mathrm d \theta.
\end{align}
 and the Taylor expansions  in \eqref{eq:tayloru} and \eqref{eq:taylorv}, by \eqref{eq:Dxi1} and \eqref{eq:Dxi2}, it follows
     \begin{align}
       & h^{-1} \E \Big [ K^4\Big  (\frac{U_i-U_j}{h}\Big  ) L^j(\kappa V_i^{\top} V_j)\Big ] \\
         & = \int_{\mathbb{S}^{p-1}} \int_{\mathbb R^+} \left\{\int_{-1}^1 K^4\left(u\right) [f(u_2, v_1) + h u \left.\frac{\partial f(u, v) }{\partial u} \right|_{u = u_2} + O(h^2)]du   \right\} \\ 
        & \times \left\{\int_{\Omega_{v_1}} \int_0^{\pi}  L^j\left(\kappa \cos \theta\right) [f(u_2, v_1)+ \theta \xi^{\T} D_{f}(u_2, v_1) + \frac{\theta^2}{2}\xi^{\T} D^2_{f}(u_2, v_1) \xi + O(\theta^3)] \right. \\ 
        & \left. \times (\sin \theta)^{p-2} \dd \theta \   \omega_{p-2}(\dd \xi)  \right\}\dd u_2 \ \omega_{p-1}(\dd v_1)\\
        & = \int_{\mathbb{S}^{p-1}} \int_{\mathbb R^+} \left\{\psi_4(K)  f(u_2, v_1) + O(h^2)  \right\} \times \left\{ c_j(\kappa) f(u_2, v_1) \right.\\ & ~~~~~~~ \left. +
       \frac{\omega_{p-2}}{2(p-1)} \mathrm{tr}\{ D^2_{f}(u_2, v_1)\} \int_0^{\pi} L^j\left(\kappa \cos \theta\right) \theta^2 (\sin \theta)^{p-2} \dd \theta \right.\\ & ~~~~~~~ \left. + \int_{\Omega_{v_1}}  \int_0^{\pi} O(\theta^3) L^j\left(\kappa \cos \theta\right) (\sin \theta)^{p-2} \dd \theta \omega_{p-2}(\dd \xi)  \right\}\dd u_2 \ \omega_{p-1}(\dd v_1)\\ 
        & = \psi_4(K)   c_j(\kappa) \int_{\mathbb{S}^{p-1}} \int_{\mathbb R^+} f^2 (u, v) \dd u \dd v+  o( c_j(\kappa)),
    \end{align}
    where the last line follows from a combination of  \eqref{det200} and \eqref{det203}.
    By similar arguments, we have
    \begin{align}
          h^{-1} \E  \Big [ K^4\Big  (\frac{U_i-U_j}{h} \Big )\Big ]  &=\psi_4(K)  \int_{\mathbb{S}^{p-1}} \int_{\mathbb R^+} f_U^2 (u) \dd u+  o(1),
    \end{align}
  which completes the proof.
\end{proof}

\section{Proof of main results}
\subsection{Proof of \cref{mindist}}\label{proof2.1}
Recalling \eqref{det101} and 
the notation $ h^*(u):=  f_U(u) = \int_{\mathbb{S}^{p-1}} f(u,v) \ \omega_{p-1}(\mathrm dv)$, we obtain for any function $h: \mathbb{S}^{p-1} \to \mathbb{R}$  
    \begin{align}
     &    \int_{\mathcal \mathbb{S}^{p-1}}(f(u,v) - h^*(u) f_0(v)) (h^*(u) f_0(v) - h(u) f_0(v)) \omega_{p-1}(\dd v)\\ 
        & = (h^*(u) - h(u))\int_{\mathcal \mathbb{S}^{p-1}}(f(u,v) - h^*(u) f_0(v))\omega_{p-1}^{-1}  \omega_{p-1}(\dd v) \\
        &  = \omega_{p-1}^{-1}(h^*(u) - h(u)) (h^*(u) - h^*(u) ) = 0.
    \end{align}
    Then, if $h f_0$
    is the density of a spherically symmetric distribution (for some density $h$ on $\mathbb{R}^+$), we obtain  for the squared $L^2$-distance between $f$ and  $h  f_0$
    \begin{align}
         \Psi (h) & :=\int_{\mathbb R^+} \int_{\mathcal \mathbb{S}^{p-1}} ( f(u,v) - h(u) f_0(v))^2 \dd u\ \omega_{p-1}(\dd v) \\ &=  \int_{\mathbb R^+} \int_{\mathcal \mathbb{S}^{p-1}} ( f(u,v)- h^*(u) f_0(v) + h^*(u) f_0(v) - h(u) f_0(v))^2 \dd u\ \omega_{p-1}(\dd v)\\ 
        & = \int_{\mathbb R^+} \int_{\mathcal \mathbb{S}^{p-1}} \{( f(u,v)- h^*(u) f_0(v))^2 + (h^*(u) f_0(v) - h(u) f_0(v))^2 \} \dd u\ \omega_{p-1}(\dd v)\\
         &\geq  \int_{\mathbb R^+} \int_{\mathcal \mathbb{S}^{p-1}} ( f(u,v)- h^*(u) f_0(v))^2 \dd u\ \omega_{p-1}(\dd v).
    \end{align}
    Therefore,  $\Psi (h)$  is minimized for  $h =  h^*$, and 
    \begin{align}
        \mathcal M^2 &= 
        \mathcal M^2 (h^*)  \\
        & = \int_{\mathbb R^+} \int_{\mathcal \mathbb{S}^{p-1}} ( f(u,v)- h^*(u) f_0(v))^2 \dd u\ \omega_{p-1}(\dd v) \\ 
        &= \int_{\mathbb R^+} \int_{\mathcal \mathbb{S}^{p-1}}  f^2(u,v)\dd u\ \omega_{p-1}(\dd v) - 2 \int_{\mathbb R^+} \int_{\mathcal \mathbb{S}^{p-1}}  h^*(u) f(u,v) f_0(v)  \dd u\ \omega_{p-1}(\dd v) \\
        & + \int_{\mathbb R^+} \int_{\mathcal \mathbb{S}^{p-1}} ( f_U(u) f_0(v))^2 \dd u\ \omega_{p-1}(\dd v)\\
        & = \int_{\mathbb R^+} \int_{\mathcal \mathbb{S}^{p-1}}  f^2(u,v)\dd u\ \omega_{p-1}(\dd v) - \omega_{p-1}^{-1}\int_{\mathbb R^+}  f_U^2 (u)  \dd u.
    \end{align}

\subsection{Martingale structure under spherical symmetry}
Recall the representation of the  statistic $\hat{\cal M}_n^2$  in  \eqref{eq:Ustat}, where the kernel $H_n$ is defined in 
\eqref{det206}, and let $\mathcal F_{i-1} = \sigma (Y_1, \ldots, Y_{i-2}, Y_{i-1})$ denote the sigma field generated by the random variables $Y_1, \ldots, Y_{i-2}, Y_{i-1}$.
In the following, we  show that in the case  ${\cal M}^2 =0$   the statistic  $\hat {\cal M}_n $ is a  cumulative sum of martingale differences with respect to the filtration $({\cal F}_i)_{i=1}^n$. This is the cornerstone for the analysis of  the asymptotic properties of the test statistic in the case ${\cal M}^2$=0. 

\begin{lemma} \label{martingale}
 Suppose that Assumption \ref{ass:kernel} and \ref{ass:density} are satisfied and that $\mathcal M^2 =0$, then the sequence
\begin{align}
\label{det207}    
\big (  D_i  \big )_{i=2}^n
= \Big (  \sum_{j=1}^{i-1}H_n(Y_i, Y_j)  \Big )_{i=2}^n 
\end{align} 
is a martingale difference sequence with respect tot the filtration $ (  {\cal F}_i  \big )_{i=1}^n$.
\end{lemma}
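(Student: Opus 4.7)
The plan is to verify the two defining properties of a martingale difference sequence: adaptedness of $D_i$ to $\mathcal{F}_i$, and the vanishing of the conditional expectation $\mathbb{E}[D_i \mid \mathcal{F}_{i-1}] = 0$. Adaptedness is immediate because $D_i$ is a deterministic function of $Y_1,\dots,Y_i$. Integrability also poses no issue because $K$ is supported on $[-1,1]$ and $L$ is bounded on $[-1,1]$, so $H_n(Y_i,Y_j)$ is bounded (with bounds that may depend on $n,h,\kappa$, but are finite for fixed $n$).

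The real content is the conditional mean. By linearity and independence of the $Y_i$'s, I would compute for fixed $j<i$
\begin{align}
\mathbb{E}[H_n(Y_i,Y_j) \mid \mathcal{F}_{i-1}] = g(Y_j), \qquad g(y) := \mathbb{E}[H_n(Y_i,y)],
\end{align}
so it suffices to show $g\equiv 0$ under spherical symmetry. Write $y=u_y v_y$ with $u_y=\|y\|$, $v_y=y/\|y\|$. Under $\mathcal{M}^2=0$, the density of $Y_i$ factorizes as $f(u,v)=f_U(u)\,\omega_{p-1}^{-1}$, so $U_i$ and $V_i$ are independent with $V_i$ uniform on $\mathbb{S}^{p-1}$. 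Consequently
\begin{align}
g(y) = h^{-1}\mathbb{E}\!\left[K\!\left(\tfrac{U_i-u_y}{h}\right)\right] \Bigl\{ c_1^{-1}(\kappa)\,\mathbb{E}[L(\kappa V_i^{\T} v_y)] \;-\; \omega_{p-1}^{-1}\Bigr\}.
\end{align}

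The key step is to show that the bracketed quantity vanishes. For this I would apply the tangent–normal decomposition $v = v_y\cos\theta + \xi\sin\theta$ with $\xi\in\Omega_{v_y}$ and the change-of-measure identity \eqref{lm:measure}, yielding
\begin{align}
\mathbb{E}[L(\kappa V_i^{\T} v_y)] &= \omega_{p-1}^{-1}\int_{\mathbb{S}^{p-1}} L(\kappa v^{\T} v_y)\,\omega_{p-1}(dv) \\
&= \omega_{p-1}^{-1}\int_{\Omega_{v_y}}\int_0^{\pi} L(\kappa\cos\theta)(\sin\theta)^{p-2}\,d\theta\,\omega_{p-2}(d\xi) \\
&= \omega_{p-1}^{-1}\,\omega_{p-2}\int_0^{\pi} L(\kappa\cos\theta)(\sin\theta)^{p-2}\,d\theta \;=\; \omega_{p-1}^{-1}\,c_1(\kappa),
\end{align}
by the definition \eqref{def:c1k} of $c_1(\kappa)$. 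Plugging back, $c_1^{-1}(\kappa)\cdot\omega_{p-1}^{-1} c_1(\kappa) - \omega_{p-1}^{-1} = 0$, so $g(y)=0$ for every $y$. Summing over $j=1,\dots,i-1$ gives $\mathbb{E}[D_i\mid\mathcal{F}_{i-1}]=0$, completing the argument.

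The only substantive obstacle is the spherical-integral identity, which is really just unpacking the normalization implicit in the definition of $c_1(\kappa)$; there is no genuine difficulty. The cancellation is in fact the structural reason for subtracting the second term in $H_n$: under $\mathcal{M}^2=0$ the kernel is designed so that its expectation in $Y_i$ (for a fixed conditioning value $y$) is identically zero, which is precisely the martingale property. This is what will drive the faster rate $\sqrt{n(n-1)h/\kappa^{(p-1)/2}}$ obtained in Theorem \ref{thm:normal} via a martingale CLT.
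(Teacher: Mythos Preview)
Your proof is correct and follows essentially the same approach as the paper: both factor the conditional expectation using the independence of $U_i$ and $V_i$ under $\mathcal{M}^2=0$, then evaluate $\int_{\mathbb{S}^{p-1}} L(\kappa v^\top v_y)\,\omega_{p-1}(dv)$ via the tangent--normal decomposition to recover $c_1(\kappa)$, yielding the cancellation. Your write-up is in fact slightly cleaner, making adaptedness and integrability explicit and conditioning on $\mathcal{F}_{i-1}$ rather than (as the paper typos) $\mathcal{F}_i$.
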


\begin{proof}  By elementary calculation, we have (note that ${\cal M}^2=0$, which implies $f(u,v) =f_U(u) f_0(v)$)
\begin{align}
  & \E  \big [D_i \big | {\cal F}_i \big ] = \sum_{j=1}^{i-1}\E\big [ H_n(Y_i, Y_j) \big |Y_j\big ] \\ 
    &= h^{-1} \sum_{j=1}^{i-1} \left\{\int_{\R^+}  K\left(\frac{U_i-u_j}{h}\right)f_U(u_j)  \mathrm du_j \right\} \left\{\int_{\mathbb{S}^{p-1}}c_1^{-1}(\kappa) L(\kappa V_i^{\top} v_j) f_0(v_j)  \  \omega_{p-1}(\dd v_j) - \omega_{p-1}^{-1} \right\} \\ 
    &=h^{-1}\omega_{p-1}^{-1} \sum_{j=1}^{i-1} \left\{\int_{\R^+}  K\left(\frac{U_i-u_j}{h}\right)f_U(u_j)  \mathrm du_j \right\} \left\{\int_{\mathbb{S}^{p-1}}c_1^{-1}(\kappa) L(\kappa V_i^{\top} v_j) \ \omega_{p-1}(\dd v_j) - 1 \right\}
    \label{eq:mds}
\end{align}
Using the tangent-norm representation \eqref{eq:tangent-normal}  with $y=v_j$, $x=V_i$ and \eqref{lm:measure} it follows that 
\begin{align}
     \E  \big [D_i \big | {\cal F}_i \big ]
     &=h^{-1}\omega_{p-1}^{-1} \sum_{j=1}^{i-1} \left\{\int_{\R^+}  K\left(\frac{U_i-u_j}{h}\right)f_U(u_j)  \mathrm du_j \right\} \\
     & \times  \left\{ \int_{\Omega_{V_i}}\int_{0}^{\pi} c_1^{-1}(\kappa) L(\kappa \cos \theta) (\sin \theta)^{p-2} \dd \theta \ \omega_{p-2}(\dd \xi) - 1 \right\} \\ 
     &=h^{-1}\omega_{p-1}^{-1} \sum_{j=1}^{i-1} \left\{\int_{\R^+}  K\left(\frac{U_i-u_j}{h}\right)f_U(u_j)  \mathrm du_j \right\}
     \\
     & \times
     \left\{c_1^{-1}(\kappa)  \omega_{p-2} \int_{0}^{\pi} L(\kappa \cos \theta) (\sin \theta)^{p-2} \dd \theta  - 1 \right\} = 0.
\end{align}
Since $D_i = \sum_{j=1}^{i-1}H_n(Y_i, Y_j)$ is $\mathcal F_{i}$ measurable, $
( D_i  )_{i=2}^n$ is martingale difference sequence. 
\end{proof}

\subsection{Proof of 
\cref{prop:meanvar}}
\begin{itemize}
    \item[(i)] If $\mathcal{M}^2= 0$,  the assertion follows from Lemma \ref{martingale}.
\item[(ii)] If $\mathcal{M}^2>0$, we obtain from  \eqref{eq:Ustat} and \cref{lm:kernelse} 
\begin{align}
   \E (\hat{\mathcal{M}}^2_n) &= \frac{2}{n(n-1)} \sum_{i=1}^n \sum_{j=1}^{i-1} \E H_n(Y_i, Y_j)\\  &=  \E(H_n (Y_1, Y_2))\\
   &  = \frac{1}{hc_1(\kappa)} \E \Big [  K \Big  (\frac{U_i-U_j}{h}\Big  ) L(\kappa V_i^{\top} V_j) \Big  ]  - 
    \frac{\omega_{p-1}^{-1}}{
    h}\E \Big  [  K\Big  (\frac{U_i-U_j}{h}\Big  ) \Big  ] \\
    & = \int_{\mathbb{S}^{p-1}} \int_{\mathbb R^+} f^2(u,v) \mathrm du \ \omega_{p-1}(\mathrm  dv)  - \omega_{p-1}^{-1} \int_{\mathbb R^+} f_U^2(u) \mathrm du   \\ &
   +\frac{h^2 \phi_2(K) }{2} \left\{\int_{\mathbb{S}^{p-1}} \int_{\mathbb R^+} f(u_2,v_1)\left. \frac{\partial^2 f(u, v_1)}{\partial u^2} \right|_{u=u_2} \mathrm du_2 \ \omega_{p-1}(\mathrm  dv_1)  - \omega_{p-1}^{-1} \int_{\mathbb R^+} f_U(u) f_U^{\prime \prime}(u)\mathrm du \right\}\\ 
       &+ \frac{\omega_{p-2}}{2\kappa}\int_{\mathbb{S}^{p-1}} \int_{\mathbb R^+} f(u_2, v_1)   \mathrm{tr}\{ D^2_{f}(u_2, v_1)\}      \dd u_2 \omega_{p-1}(\dd v_1)+ O(\kappa^{-3/2}+ h^3 + h^2\kappa^{-1}).
\end{align}
and the result follows by an application of \cref{lm:rate}.
\end{itemize}


\subsection{Proof of 
\cref{thm:normal}}
For the calculation of the  asymptotic variance of $\hat{\mathcal M}^2_n$,
note that for $j, k < i$ and $j \neq k$, we have 
\begin{align} 
\label{det300}
\begin{split}
\E(H_n(Y_i, Y_j) H_n(Y_i, Y_k))& =\E\{H_n(Y_i, Y_j)\E( H_n(Y_i, Y_k)|(Y_i, Y_j))\} \\
& 
=\E\{H_n(Y_i, Y_j)\E( H_n(Y_i, Y_k)|Y_i)\} = 0,
\end{split}
\end{align} 
where the last equality follows by the same arguments as given in the proof of Lemma \ref{martingale}.  
By the same Lemma  
$\hat {\cal M}_n^2$ is a sum of martingale differences, and therefore \eqref{det300} implies
\begin{align}
   s_n^2 =  \mathrm{Var}(\hat{\mathcal{M}}^2_n) &= \frac{4}{n^2(n-1)^2} \mathrm{Var}\Big ( \sum_{i=1}^n \sum_{j=1}^{i-1} H_n(Y_i, Y_j) \Big ) \\ 
    &=  \frac{4}{n^2(n-1)^2}   \sum_{i=1}^n \mathrm{Var}\Big ( \sum_{j=1}^{i-1} H_n(Y_i, Y_j) \Big ) \\ 
    &=  \frac{4}{n^2(n-1)^2}   \sum_{i=1}^n \sum_{j=1}^{i-1} \mathrm{Var} ( H_n(Y_i, Y_j)  ) \\ 
    & = \frac{2}{n(n-1)}   \mathrm{Var} ( H_n(Y_i, Y_j) ) \\ 
    & = \frac{2}{n(n-1)h^2} \Big  \{ c_1^{-2}(\kappa) \E\Big [  K^2 \Big ( \frac{U_i-U_j}{h}\Big  ) L^2 (\kappa V_i^{\top} V_j) \Big ]  \\ &  - 2 c^{-1}_1(\kappa)\omega_{p-1}^{-1} \E \Big [ K^2\Big  (\frac{U_i-U_j}{h}\Big ) L (\kappa V_i^{\top} V_j) \Big ] + 
    \omega_{p-1}^{-2}\E \Big  [ K^2\Big ( \frac{U_i-U_j}{h}\Big ) \Big ] \Big \} \\ 
    & = \frac{2}{n(n-1)h} \left[c_1^{-2}(\kappa) c_2(\kappa) \psi_2(K)  \int_{\mathbb{S}^{p-1}} \int_{\R^+} f^2(u,v) \mathrm du \ \omega_{p-1}(\mathrm  dv) \right. \\ & \left. - 2 \omega_{p-1}^{-1} \psi_2(K)  \int_{\mathbb{S}^{p-1}} \int_{\mathbb R^+} f^2(u,v) \mathrm du \ \omega_{p-1}(\mathrm  dv)  + 
    \omega_{p-1}^{-2}\psi_2(K)  \int_{\mathbb R^+} f_U^2(u) \mathrm du \right. \\ & \left. +O(c_1^{-2}(\kappa)c_2(\kappa)h^2 +  c_1^{-1}(\kappa)b_2(\kappa))\right ]\\
    & \sim  \frac{2\psi_2(K)  c_1^{-2}(\kappa)c_2(\kappa) \int_{\mathbb{S}^{p-1}} \int_{\mathbb R^+} f^2(u,v) \mathrm du \ \omega_{p-1}(\mathrm  dv) }{n(n-1)h}   \\
     & \sim  \frac{2\psi_2(K)  d^{-2}_1(p) d_2(p) \kappa^{(p-1)/2} \int_{\mathbb{S}^{p-1}} \int_{\mathbb R^+} f^2(u,v) \mathrm du \ \omega_{p-1}(\mathrm  dv) }{n(n-1)h}  , \\
\end{align}
where we have used Lemma \ref{lm:kernelse} three times in the last equality and Lemma \ref{lm:rate} for last two approximations.  

By Lemma \ref{martingale}, 
${\cal M}_n^2$ is a sum of martingale difference and 
    we can apply 
    a central limit theorem for sums of martingale differences
    \citep[Theorem 1 in ][]{HALL19841} to prove the statement in \cref{thm:normal}.
We have already shown that $H_n$ is symmetric, $\E\{H_n(Y_1, Y_2)|Y_1\} = 0$ a.s., and by  Proposition \ref{prop:meanvar} it follows that $\E\{H_n^2(Y_1, Y_2)\} < \infty$ for each $n$. Therefore it is sufficient to verify that the condition
\begin{align}
\label{det305}
    [\E\{G_n^2(Y_1, Y_2) \} + n^{-1} \E\{H_n^4(Y_1, Y_2)\} ]/[\E\{H_n^2(Y_1, Y_2)\} ]^2 \to 0 
\end{align}
is satisfied as $n \to \infty$, 
where 
\begin{align}
    G_n(x, y) = \E\big [ H_n(Y_1, x)H_n(Y_1, y) \big  ] .
\end{align}
For this purpose note that
\begin{align}
    h^4\E [ H_n^4(Y_1, Y_2) ]  &= c_1^{-4}(\kappa)\E\Big  [ K^4 \Big  (\frac{U_i-U_j}{h}\Big  ) L^4(\kappa V_i^{\top} V_j)\Big  ] - 4c_1^{-1}(\kappa)\omega_{p-1}^{-3} \E\Big  [ K^4\Big  ( \frac{U_i-U_j}{h}\Big  )  L(\kappa V_i^{\top} V_j)\Big  ] \\ 
    &+ 6c_1^{-2}(\kappa)\omega_{p-1}^{-2} \E \Big  [   K^4\Big   (\frac{U_i-U_j}{h}\Big  ) L^2(\kappa V_i^{\top} V_j) \Big  ]  \\
    & -  4c_1^{-3}(\kappa)\omega_{p-1}^{-1} \E  \Big  [ K^4\Big  (\frac{U_i-U_j}{h}\Big  ) L^2(\kappa V_i^{\top} V_j)\Big  ] + \omega_{p-1}^{-4} \E  \Big  [ K^4\Big   (\frac{U_i-U_j}{h}\Big   )\Big  ]
\end{align}
and that  
\begin{align}
    \label{det302}
\E \big  [ H_n ^2(Y_1,Y_2) \big ]  = {\rm Var}  ( H_n (Y_1,Y_2) ) \sim  
 \psi_2(K)  c_1^{-2}(\kappa)c_2(\kappa)h^{-1} \int_{\mathbb{S}^{p-1}} \int_{\mathbb R^+} f^2(u,v) \mathrm du \ \omega_{p-1}(\mathrm  dv)   ,
 \end{align}
 which follows from the calculation of  $s_n^2$.
Therefore,  Lemma  \ref{lm:rate} and  \ref{lm:kernel4se} yield
\begin{align}
\label{det306}
    n^{-1} \E \{H_n^4(Y_1, Y_2) \}/(\E(H_n^2(Y_1, Y_2))^2 = O(c_4(\kappa)/(c_2^2(\kappa) nh)) = O(\kappa^{(p-1)/2}/(nh)). \label{eq:condition1}
\end{align}
For the  calculation of  $G_n(x, y)$ we define $u_2 = \|x\|$, $v_2 = x / \| x \|$, $u_3 = \| y \|$, $v_3 = y/ \|y\|$ 
and consider 
\begin{align}
h^2 G_n(x,y)    &= h^2 \E \big  [ H_n(Y_1, x)H_n(Y_1, y) \big  ] \\
    &= \int_{\mathbb{S}^{p-1}}\int_{\mathbb R^+} \left\{c_1^{-1}(\kappa)K\left(\frac{u_1-u_2}{h}\right)L(\kappa v_1^{\T}v_2) - \omega_{p-1}^{-1} K\left(\frac{u_1-u_2}{h}\right)\right\} \\ & \times \left\{c_1^{-1}(\kappa)K\left(\frac{u_1-u_3}{h}\right)L(\kappa v_1^{\T}v_3) - \omega_{p-1}^{-1} K\left(\frac{u_1-u_3}{h}\right)\right\} f(u_1, v_1) \dd u_1 \ \omega_{p-1}(\dd v_1). \\
    & =  G_{1,n}(x,y) - G_{2,n}(x,y) - G_{3,n}(x,y) + G_{4,n}(x,y) ,
    \label{det301}
\end{align}
where $G_{1,n}(x,y) , \ldots , G_{4,n}(x,y)$ are defined by 
\begin{align}
 G_{1,n}(x,y) & =
 \int_{\mathbb{S}^{p-1}}\int_{\mathbb R^+} \left\{c_1^{-2}(\kappa)K\left(\frac{u_1-u_2}{h}\right)L(\kappa v_1^{\T}v_2) K\left(\frac{u_1-u_3}{h}\right)L(\kappa v_1^{\T}v_3)\right\}f(u_1, v_1) \dd u_1 \ \omega_{p-1}(\dd v_1)\\ 
 G_{2,n}(x,y) & =
 \omega_{p-1}^{-1} \int_{\mathbb{S}^{p-1}}\int_{\mathbb R^+}  \left\{c_1^{-1}(\kappa)K\left(\frac{u_1-u_2}{h}\right)L(\kappa v_1^{\T}v_2)K\left(\frac{u_1-u_3}{h}\right)\right\} \dd u_1 \ \omega_{p-1}(\dd v_1) \\
 G_{3,n}(x,y) & =
 \omega_{p-1}^{-1} \int_{\mathbb{S}^{p-1}}\int_{\mathbb R^+}  \left\{c_1^{-1}(\kappa)K\left(\frac{u_1-u_2}{h}\right)L(\kappa v_1^{\T}v_3)K\left(\frac{u_1-u_3}{h}\right)\right\} \dd u_1 \ \omega_{p-1}(\dd v_1)\\
 G_{4,n}(x,y) & =
 \omega_{p-1}^{-2} \int_{\mathbb R^+}K\left(\frac{u_1-u_2}{h}\right)K\left(\frac{u_1-u_3}{h}\right) f_U(u_1)  \dd u_1.
\end{align}
Note that we obtain from  \eqref{det204},
\begin{align}
    L(\kappa t) \lesssim \kappa^{(p-1)/2}
\end{align}
 for all $t \in [-1,1]$,
and with the representation 
\begin{align}
    v_1 &= v_2\cos \theta + \xi \sin \theta,
\end{align}
with   $\xi  \in \Omega_{v_2}:=\{  \xi \in \mathbb{S}^{p-1} :~ \xi \perp v_2 \}$, it follows that 
 \begin{align}
| G_{1,n}(x,y)|   
\lesssim  & \Big | h \int_{\mathbb S^{p-1}}\int_{\mathbb R^+}\Big \{c_1^{-2}(\kappa)K\left(u\right)L(v_1^{\top}v_2)K\Big  (u+\frac{u_2-u_3}{h}\Big )\kappa^{(p-1)/2} \Big\}\\ & \times  f(u h + u_2, v_1) \dd u \ \omega_{p-1}(\dd v_1)   \Big |\\
\lesssim  & \Big | h \int_{\Omega_{v_2}}\int_{0}^{\pi}\int_{-1}^1\Big \{K\left(u\right)L(\cos \theta)K\Big  (u+\frac{u_2-u_3}{h}\Big )\kappa^{(p-1)/2} \Big\}\\ & \times  f(u h + u_2, v_2\cos \theta + \xi \sin \theta)(\sin \theta)^{p-2} \dd u \dd \theta \ \omega_{p-2}(\dd \xi)   \Big |\\
\lesssim & h \kappa^{(p-1)/2}\int_{-1}^{1}\big|K(u) K \Big (u+\frac{u_2-u_3}{h} \Big ) \Big | du.
\label{eq:G1}
\end{align}
We obtain for the terms $G_{n,2}$ and $G_{n,3}$ by similar arguments the estimate
\begin{align}
    |G_{2,n}(x,y)+ G_{3,n}(x,y)|
   & \leq  h  \omega_{p-1}^{-1} c_1^{-1}(\kappa) \int_{\mathbb S^{p-1}} \int_{\mathbb R^+} \Big  | K\left(u\right)K \Big (u+\frac{u_2-u_3}{h}\Big ) \Big |\{ L(\kappa v_1^{\T}v_3)+  L(\kappa v_1^{\T}v_2)\} \dd u \ \omega_{p-1}{\dd v_1}\\ 
    & \lesssim h \kappa^{p/2-1/2} \int_{-1}^{1}\big|K(u) K \Big (u+\frac{u_2-u_3}{h} \Big ) \Big | du ,   \label{eq:G23}
\end{align} 
and finally for the term $G_{4,n}$ 
\begin{align}
     |G_{4,n}(x,y)| \leq h \omega_{p-1}^{-2} \int_{-1}^{1} \Big |K(u) K  \Big (u+\frac{u_2-u_3}{h} \Big ) \Big | du.\label{eq:G4}
\end{align}
Then, combining \eqref{eq:G1}, \eqref{eq:G23} and \eqref{eq:G4}  and using Cauchy's inequality
yields 
\begin{align}
   h^4 \E\{G_n^2(Y_1, Y_2) \} &
   \lesssim \E  [ G^2_{1,n}(Y_1, Y_2) ]  + \E \big [\{ G_{2,n}(Y_1, Y_2) + G_{3,n}(Y_1, Y_2)\}^2 \big ] +  \E [ G^2_{4,n}(Y_1, Y_2) ]\\
   &\lesssim  
  h^2 \kappa^{p-1} \int_{\mathbb R^+} \int_{\mathbb R^+} \Big \{\int_{-1}^1 \Big | K\left(u\right) K\Big (u+\frac{u_2-u_3}{h}\Big )\Big |\dd u \Big \}^2\dd u_2 \dd u_3\\
    & \lesssim   h^3 \kappa^{p-1}   \int_{-2}^2  \Big \{\int_{-1}^1  \big |K\left(u\right) K\left(u+a\right)\big |\dd u \Big \}^2 \dd a  \\ 
    & = O(h^3
    \kappa^{p-1}) . \label{eq:Gn}
\end{align}
 Finally, with  \eqref{det302}, we have  
 \begin{align}
     \E \big [ G_n^2(Y_1, Y_2) \big ]  \big /\big (\E \big [ H_n^2(Y_1, Y_2\big ] \big )^2 =   O(h^{-1} \kappa^{p-1}/ (h^{-2} \kappa^{p-1})) = O(h), 
 \end{align}
 and combining this statement with \eqref{det306} yields \eqref{det305}, which completes the proof of Theorem \ref{thm:normal}.

\subsection{Proof of \cref{thm:alter}}
We recall the definition of 
 the kernel 
\begin{align}
    H_n(Y_i, Y_j) = h^{-1} c_1^{-1}(\kappa) K\left(\frac{U_i - U_j}{h}\right) L(\kappa V_i^{\top} V_j )- h^{-1}\omega_{p-1}^{-1} K\left(\frac{U_i - U_j}{h}\right), 
\end{align}
introduce the  notation
\begin{align}
    g_n(Y_1) :=\E \big [ H_n(Y_1, Y_2)|Y_1 \big ] 
\end{align}
and   define the random variables 
\begin{align}
    g(Y_i) = f(U_i, V_i) - \omega_{p-1}^{-1} f_U(U_i), \ \ (i=1, \ldots , n)
\end{align}
At the end of the proof, we  show that 
\begin{align}
 \label{det303}   
R_n = \hat{\mathcal M}_n^2 -\E [\hat{\mathcal M}_n^2] -  \frac{2}{n}\sum_{i=1}^n \big  ( g(Y_i)- \mathcal M^2  \big )  = o_{\mathbb{P}}\Big ({1 \over \sqrt{n} } \Big ) . 
\end{align}
 Therefore, we shall first focus on deriving the limiting behavior of $\frac{1}{n}\sum_{i=1}^n (g(Y_i)- \mathcal M^2) $. By elementary calculation, we obtain
\begin{align}
     & \E [g(Y_i) ]  = \mathcal M^2~,  ~~\mathrm{Var} (g(Y_i)) = 
    \sigma^2,
\end{align}
where ${\cal M}^2$ and $\sigma^2$ are defined in \eqref{det103} and \eqref{det109}  respectively. 
Therefore, since $Y_1, \ldots , Y_n$ are independent identically distributed, the 
 Central Limit Theorem yields  
\begin{align}
  \sqrt{n}  \Big \{\frac{1}{n}\sum_{i=1}^n g(Y_i) -  \mathcal M^2\Big \} = \sqrt{n}  \Big \{\frac{1}{n}\sum_{i=1}^n \big ( g(Y_i) -  \E [ g(Y_i) ] \big ) \Big \} 
  \stackrel{d}{\longrightarrow}  \mathcal N(0, \sigma^2).\label{eq:gn}
\end{align}
Finally, we prove \eqref{det303}.    For this purpose,  note that 
\begin{align}
\label{eq:decomposition}
    R_n  &= \hat{\mathcal M}_n^2  - \frac{2}{n}\sum_{i=1}^n g(Y_i) +  2\mathcal{M}^2- \E [ H_n(Y_1, Y_2)] \\ 
    &=r_{1,n} + 2 r_{2,n},
\end{align}
where 
\begin{align}
 r_{1,n}   & = \frac{1}{n(n-1)} \sum_{i \neq j = 1}^n \big \{H_n(Y_i, Y_j)- g_n(Y_i)  - g_n(Y_j) + \E(H_n(Y_i, Y_j)) \big \}\\ 
  r_{2,n} & =  \frac{1}{n} \sum_{i=1}^n \{ (g_n(Y_i)-\E(H_n(Y_i, Y_j))) - (g(Y_i)-\mathcal{M}^2)\}.
\end{align}
By triangle inequality, we have 
\begin{align}
    \|R_n\|  := \big \{\E[ R^2_n ]  \big \}^{1/2}  \leq \|r_{1,n}\| + 2\|r_{2,n}\|. 
\end{align}
and we show at the end of the proof that 
\begin{align}
\label{det304}
\| r_{j,n}\| = o\big (   n^{-1/2}  \big ) 
~~~~ (j=1,2).\end{align} 
Then $\|R_n\|  = o(n^{-1/2})$ and, by Markov's inequality, 
$
    R_n = \op \big ( { 1 \over  \sqrt{n} } \big ) $, 
    which proves \eqref{det303}.
Combining this result with \eqref{eq:gn} yields the assertion of of Theorem \ref{thm:alter}, that is  
\begin{align}
\sqrt{n} ( \hat{\mathcal M}_n^2-\E [\hat{\mathcal M}_n^2]) \Rightarrow N(0, 4\sigma^2). 
\end{align}
The proof is now  completed proving the estimates in \eqref{det304}.
To derive a corresponding estimate for $r_{1,n}$ we note that, by conditioning on $Y_i,Y_j$,  
$$
\mathrm{Cov}\big ( \big  \{H_n(Y_i, Y_j)- g_n(Y_i) - g_n(Y_j) + \E(H_n [(Y_i, Y _j) ]  \big  \} , \big  \{H_n(Y_i, Y_k)- g_n(Y_i) - g_n(Y_k) + \E [ H_n(Y_i, Y_j)] \big  \}] =0 ,
$$
whenever the indices $i,j,k$ are different from each other. Consequently, we obtain from 
 \cref{prop:meanvar}
\begin{align}
\mathrm{Var}(r_{1,n}) &= \frac{1}{n(n-1)}   \mathrm{Var}\{H_n(Y_i, Y_j)- g_n(Y_i) - g_n(Y_j) + \E [ H_n(Y_i, Y_j)] \}  \\ 
    &+  \frac{n-2}{n(n-1)} \mathrm{Cov}[\{H_n(Y_i, Y_j)- g_n(Y_i) - g_n(Y_j)  + \E [ H_n(Y_i, Y_j)] \} , \\
    &\{H_n(Y_i, Y_k)- g_n(Y_i) - g_n(Y_k) + \E [ H_n(Y_i, Y_j) ] \}]\\
    &= \frac{1}{n(n-1)} \big [\mathrm{Var}\{H_n(Y_i, Y_j)\} - 2\mathrm{Var} \{g_n(Y_i)\} \big ] \\ 
    & = O \Big (s_n^2 + \frac{1}{n(n-1)} \sigma^2 \Big ) = O(\kappa^{(p-1)/2} n^{-2} h^{-1} + n^{-2}) = o(n^{-1}), \label{eq:varRn}
\end{align}
where the second equality follows by a tedious calculation observing that  that 
\begin{align}
    \E \{H_n(Y_i, Y_j) H_n(Y_i, Y_k)\} &= \E \{ H_n(Y_i, Y_j) \E(H_n(Y_i, Y_k)|(Y_i, Y_j))\} \\ &= \E \{H_n(Y_i, Y_j) \E(H_n(Y_i, Y_k)|Y_i)\}\\ &= \E \{H_n(Y_i, Y_j) g_n(Y_i)\}\\ &= \E (\E( H_n(Y_i, Y_j)|Y_i) \E\{H_n(Y_i, Y_k)|Y_i)\} = \E(g_n^2 (Y_i)) ,
\end{align}
whenever the indices $i,j,k$ are different.
As $r_{1,n}$ is centered, \eqref{det304} follows for $j=1$. 
Finally, since $Y_i$'s are $i.i.d.$, we have 
\begin{align}
    \|r_{2,n}\|\lesssim n^{-1/2} \| (g_n(Y_i)-\E(H_n(Y_i, Y_j))) - (g(Y_i)-\mathcal{M}^2)\|  = O(n^{-1/2}(1/\kappa + h^2)) = o(n^{-1/2}).
\end{align}

\subsection{Proof of \cref{thm:invariance}}
Recall the definition of the process $S_n$ in \eqref{det207a}, define the random variables 
\begin{align}
    g(Y_i) = f(U_i, V_i) - \omega_{p-1}^{-1} f_U(U_i)
\end{align}
($i=1, \ldots , n$)
and consider the stochastic process 
\begin{align}
    S_n^{\circ}(t) = \frac{1}{\sqrt{n}} \sum_{i=1}^{\lfloor nt \rfloor} (g(Y_i)-  \mathcal M^2)  .
\end{align}
The assertion of Theorem \ref{thm:invariance} is now proved in two steps. 
\medskip

\noindent
\textbf{Step 1}: we show that 
\begin{align}
    \max_{
    {2} \leq k \leq n} {1 \over \sqrt{n} } \Big |k (\hat{\mathcal M}_k^2  -  \E [\hat{\mathcal M}_n^2] )- 2\sum_{i=1}^k (g(Y_i)  -  \mathcal M^2 ) \Big |  = o_{\mathbb{P}}(1), \label{eq:approx}
\end{align}
which implies 
\begin{align}
    \sup_{t \in [0,1]}|S_n(t) - S_n^{\circ}(t)| = o_{\mathbb{P}}(1).
\end{align}
\textbf{Step 2}: we  prove that 
\begin{align}
  \big \{  S_n^{\circ}(t) \big \}_{t \in [0,1]}  \Rightarrow \big \{ \sigma \mathbb{B} (t) \big \}_{t \in [0,1]}
\end{align}
in $\ell^\infty ([0,1])$,
where $\big \{ \mathbb{B} (t) \big \}_{t \in [0,1]} $ is a standard Brownian motion and $\sigma^2 $ is defined in \eqref{det109}.
 \medskip

\noindent
\textbf{Proof of Step 1}:
Recall that 
\begin{align}
    g_n(Y_1) = \E [ H_n(Y_1, Y_2)| Y_1) ] , 
\end{align}
and note  that 
\begin{align}
    R_{k,n}^{\circ}  &= k(\hat{\mathcal M}_k^2 -  \E [\hat{\mathcal M}_n^2] )  - 2\sum_{i=1}^k (g(Y_i)  -  \mathcal{M}^2 )
    \label{eq:R1R2} \\ 
    & = \frac{1}{(k-1)} \sum_{i \neq j = 1}^k \Big \{H_n(Y_i, Y_j)- g_n(Y_i) - g_n(Y_j)   +  \E [ H_n(Y_i, Y_j) ]\\& +  g_n(Y_i)-\E [ H_n(Y_i, Y_j)]   + g_n(Y_j)-\E [ H_n(Y_i, Y_j)]  - (g(Y_i) -\mathcal M^2) - (g(Y_j)-\mathcal M^2)\Big \}.
    \nonumber
\end{align}
For a real valued random variable $Z$ we define  $\|Z \|$ denote $\{\E [ Z^2 ] \}^{1/2}$, then we obtain 
\begin{align}
    \Big \|\max_{2 \leq k \leq n}|R_{k,n}^{\circ}|  \Big
    \| & \leq   
    R_1 +2  R_2,
\end{align}
where $R_1$ and $R_2$ are defined  by

\begin{align}
     R_1 & =  \Big \|\max_{2 \leq k \leq n} \Big |\frac{1}{(k-1)} \sum_{i \neq j = 1}^k \Big \{H_n(Y_i, Y_j)- g_n(Y_i) - g_n(Y_j) + \E  [ H_n(Y_i, Y_j) ] \Big  \} \Big | \Big \|\\ 
  R_2   & =\Big \|\max_{2 \leq k \leq n} \Big |  \sum_{i= 1}^k \big  \{g_n(Y_i) - \E [ H_n(Y_i, Y_j)] - (g(Y_i) - \mathcal M^2) \big \}\Big | \Big \|.
\end{align}
Similar arguments as given in the proof of Lemma \ref{martingale} show that the random variables 
$$
\sum_{j=1}^{i-1} \big \{H_n(Y_i, Y_j)- g_n(Y_i) - g_n(Y_j)   + \E [ H_n(Y_i, Y_j)  ] \big  \}
$$
are martingale differences  with respect to the filtration $ ( \mathcal F_i)_{i=1,\ldots , n} $, where
$\mathcal F_{i} = \sigma (Y_1 \ldots, Y_{i})$ is the sigma field generated by $Y_1 \ldots, Y_{i}$. Moreover,  using  similar arguments as in  the calculation in \eqref{eq:varRn}, we obtain 
\begin{align}
& \mathrm{Var}\Big ( \sum_{j=1}^{i-1} \big  \{H_n(Y_i, Y_j)- g_n(Y_i) - g_n(Y_j)    + \E [H_n(Y_i, Y_j) ]  \big \}\Big )  \\  & = \sum_{j=1}^{i-1}  \mathrm{Var}\big ( \{H_n(Y_i, Y_j)- g_n(Y_i) - g_n(Y_j)    + \E [H_n(Y_i, Y_j) ]  \big )  \\ 
& + \sum_{j\neq j^{\prime}}^{i-1} \mathrm{Cov}  \Big  ( \big \{ H_n(Y_i, Y_j)- g_n(Y_i) - g_n(Y_j)    + \E [ H_n(Y_i, Y_j)]  \big \},  \\
& ~~~~~~~~~~~~~~~~~~~~~~~~~~~~~~~~~  \big \{H_n(Y_i, Y_{ j^{\prime}})- g_n(Y_i) - g_n(Y_{ j^{\prime}})   + \E [ H_n(Y_i, Y_{ j^{\prime}})]  \big \}\Big ) \\ 
& =  (i-1) [\mathrm{Var}\{H_n(Y_1, Y_2)\} - 2\mathrm{Var} \{g_n(Y_1)\}] \\
& = O((i-1) (\kappa^{(p-1)/2} h^{-1} + \sigma^2)).\label{eq:varR1}
\end{align}
For a constant $\rho > 0$, $c = \lfloor 1/(\log \rho)\rfloor + 1$, we have 
\begin{align}
    R_1 &\leq \Big \|\max_{1 \leq l \leq \lfloor c\log n \rfloor} \max_{\rho^{l-1} \leq k \leq \rho^{l}} \Big|\frac{1}{(k-1)} \sum_{i \neq j = 1}^k \big \{H_n(Y_i, Y_j)- g_n(Y_i) - g_n(Y_j)   + \E [H_n(Y_i, Y_j) ] \big \}\Big | \Big \|\\
    & \leq \sum_{l=1}^{\lfloor c\log n \rfloor}(\rho^{l-1}-1)^{-1}  \Big \|\max_{1 \leq k \leq \rho^{l}} \Big | \sum_{i \neq j = 1}^k \{H_n(Y_i, Y_j)- g_n(Y_i) - g_n(Y_j)   + \E [ H_n(Y_i, Y_j)]  \}\Big | \Big \|\\
    & = O \Big ( \sum_{l=1}^{\lfloor c\log n \rfloor}\rho^{-l+1} \Big \|\sum_{i=1}^{\rho^l}\sum_{j=1}^{i-1} \{H_n(Y_i, Y_j)- g_n(Y_i) - g_n(Y_j)   + \E [ H_n(Y_i, Y_j)] \} \Big \| \Big)\\
     & = O \Big ( \sum_{l=1}^{\lfloor c\log n \rfloor}\rho^{-l+1} \Big \{\sum_{i=1}^{\rho^l}\Big  \|\sum_{j=1}^{i-1} \{H_n(Y_i, Y_j)- g_n(Y_i) - g_n(Y_j)   + \E [H_n(Y_i, Y_j)]  \}\|^2\Big \}^{1/2} \Big )\\ 
    & = O\Big (\sum_{l=1}^{\lfloor c\log n \rfloor}\rho^{-l+1}\Big \{\sum_{i=1}^{\rho^l} (i-1) (\kappa^{(p-1)/2} h^{-1} + \sigma^2) \Big \}^{1/2} \Big )\\
    &  = O \big (\log n (\kappa^{(p-1)/2} h^{-1})^{1/2}) \big )   = o(\sqrt{n}), \label{eq:R1rate}
\end{align}
where the first equality follows from Doob's inequality, the  second follows from the fact that $\sum_{j=1}^{i-1} \big \{H_n(Y_i, Y_j)- g_n(Y_i) - g_n(Y_j)  + \E [ H_n(Y_i, Y_j)]  \big \}$ are martingale differences and the third follows from \eqref{eq:varR1}. 
 Similarly, applying Doob's inequality for the $i.i.d.$ sequence $g_n(Y_i) - g(Y_i)$ 
\begin{align}
    R_2 & \leq   \Big \|  \sum_{i = 1}^n\{g_n(Y_i) - \E [ H_n(Y_i, Y_j)] - (g(Y_i) - \mathcal M^2)  \}  \Big \|  \\
    & \lesssim \sqrt{n} \|g_n(Y_i) - \E [ H_n(Y_i, Y_j)] - (g(Y_i) - \mathcal M^2)\| \big ) \\ &
 \lesssim
\sqrt{n} R_{21} + \sqrt{n} R_{22}+ \sqrt{n} R_{23}, \label{eq:R2decompos}
\end{align}
where 
\begin{align}
   R_{21}   & =\Big \| \int_{\mathbb R^+} \int_{\mathcal \mathbb{S}^{p-1}}h^{-1} c_1^{-1}(\kappa)  K\Big (\frac{u_2 - U_1}{h}\Big ) L(\kappa v_2^{\top} V_1) f (u_2, v_2) \dd u_2 \  \omega_{p-1}(\dd v_2) - f (U_1, V_1) \Big \| , \\ 
   R_{22}  & =\Big \|\omega_{p-1}^{-1} \Big \{ \int_{\mathbb R^+}h^{-1} K \Big (\frac{u_2 - U_1}{h}\Big ) f_U(u_2) \dd u_2 - f_U(U_1)\Big \}\Big \| ~, \\
   R_{23} & =| \E [ H_n(Y_i, Y_j)] - \mathcal M^2|. 
    \end{align}
A    Taylor expansion gives  for the first term 
\begin{align}
    R_{21} & = \Big \| \int_{-1}^1 \int_{\mathcal \mathbb{S}^{p-1}}  c_1^{-1}(\kappa) K\left(u\right) L(\kappa v_2^{\top} V_1) f (U_1+hu, v_2) \dd u \  \omega_{p-1}(\dd v_2) - f (U_1, V_1) \Big \|\\
     &\leq\Big \|  \int_{-1}^1 \int_0^{\pi} \int_{\Omega_{v_1}}   c_1^{-1}(\kappa) K\left(u\right) L(\kappa 
     \theta )  \\
     & ~~~~~~~~~~~~~~~~~~  \times f (U_1+hu, V_1 \cos \theta + \xi \sin \theta) (\sin \theta)^{p-2} \dd u \ \dd \theta \ \omega_{p-2}(\dd \xi) - f (U_1, V_1) \Big \|\\ 
     &\lesssim  \Big \|  \int_{-1}^1 \int_0^{\pi} \int_{\Omega_{V_1}}  c_1^{-1}(\kappa) K\left(u\right) L(\kappa 
     \theta ) \big \{ f (U_1+hu, V_1 \cos \theta + \xi \sin \theta)  \\ & ~~~~~~~~~~~~~~~~~~
     - f (U_1, V_1 \cos \theta + \xi \sin \theta) \big \}  (\sin \theta)^{p-2} \dd u \ \dd \theta \ \omega_{p-2}(\dd \xi) \Big \|\\ 
    & + \Big \|  \int_{-1}^1 \int_0^{\pi} \int_{\Omega_{V_1}} c_1^{-1}(\kappa) K\left(u\right) L(\kappa 
     \theta ) \big   \{ f (U_1, V_1 \cos \theta + \xi \sin \theta) \\
     &  ~~~~~~~~~~~~~~~~~~
    - f (U_1, V_1) \big \} (\sin \theta)^{p-2} \dd u \ \dd \theta \ \omega_{p-2}(\dd \xi) \Big \|\\ 
     & \lesssim h +   \int_{-1}^1 \int_0^{\pi} \int_{\Omega_{V_1}}  c_1^{-1}(\kappa) K\left(u\right) L(\kappa 
     \theta )  \theta  (\sin \theta)^{p-2} \dd u \ \dd \theta \ \omega_{p-2}(\dd \xi) \\
     & = O \big (h + b_1(\kappa)/c_1(\kappa) \big ) = O(h + \kappa^{-1/2}), 
\end{align}
where we used the fact that  $\partial f(u,v)/ \partial u$  and  $\mathcal D_{f}(u,v)$ are  uniformly bounded and Lemma \ref{lm:rate}. By  similar but simpler arguments  we obtain  $R_{22} = O(h)$. By \cref{prop:meanvar}, we have, $R_{23} = O(1/\kappa+ h^2)$.  
Combining  these estimates with \eqref{eq:R2decompos} yields 
\begin{align}
    R_2 = O(\sqrt{n}(h + \kappa^{-1/2})) = o(\sqrt{n}). \label{eq:R2rate}
\end{align} 
Finally, \eqref{eq:approx} follows from \eqref{eq:R1R2}, \eqref{eq:R1rate}, \eqref{eq:R2rate}.
\medskip

\noindent
\textbf{Proof of Step 2}: 
Note that  the random variables $g(Y_i)$ are $i.i.d.$  with 
\begin{align}
    \E [ g(Y_1) ] = \mathcal M^2 ~~~ \text{ and  } ~~\mathrm{Var}(g(Y_1)) = \sigma^2.
\end{align}
Therefore, by Donsker's Theorem 
\citep[see the discussion on page 225 - 226 in][]{van1996weak} it follows  that
\begin{align}
  \big \{  S_n^{\circ}(t) \big \}_{t \in [0,1]}  \Rightarrow \big \{ \sigma \mathbb{B} (t) \big \}_{t \in [0,1]}
\end{align}
in $\ell^\infty ([0,1])$.

\bibliographystyle{apalike}
\bibliography{main}

@book{abramowitz1968handbook,
  title={Handbook of mathematical functions with formulas, graphs, and mathematical tables},
  author={Abramowitz, Milton and Stegun, Irene A},
  volume={55},
  year={1968},
  publisher={US Government printing office}
}

@article{einmahl2012,
	abstract = {An omnibus test for spherical symmetry in ℝ2 is proposed, employing localized empirical likelihood. The thus obtained test statistic is distribution free under the null hypothesis. The asymptotic null distribution is established and critical values for typical sample sizes, as well as the asymptotic ones, are presented. In a simulation study, the good performance of the test is demonstrated. Furthermore, a real data example is presented.},
	author = {Einmahl, John H. J. and Gantner, Maria},
	date = {2012/03/01},
	date-added = {2025-10-17 19:05:43 +0800},
	date-modified = {2025-10-17 19:05:43 +0800},
	doi = {10.1007/s11749-011-0235-5},
	id = {Einmahl2012},
	isbn = {1863-8260},
	journal = {TEST},
	number = {1},
	pages = {54--73},
	title = {Testing for bivariate spherical symmetry},
	url = {https://doi.org/10.1007/s11749-011-0235-5},
	volume = {21},
	year = {2012},
	bdsk-url-1 = {https://doi.org/10.1007/s11749-011-0235-5}}

@article{banerjee2024consistent,
  title={A consistent test of spherical symmetry for multivariate and high-dimensional data via data augmentation},
  author={Banerjee, Bilol and Ghosh, Anil K},
  journal={arXiv preprint arXiv:2403.12491},
  year={2024}
}

@article{TSURUTA2024105338,
title = {Bias correction for kernel density estimation with spherical data},
journal = {Journal of Multivariate Analysis},
volume = {203},
pages = {105338},
year = {2024},
issn = {0047-259X},
doi = {https://doi.org/10.1016/j.jmva.2024.105338},
url = {https://www.sciencedirect.com/science/article/pii/S0047259X24000459},
author = {Yasuhito Tsuruta},
keywords = {Bias correction, Directional statistics, Flexible estimation, Kernel density estimation, Nonparametric statistics},
abstract = {Kernel density estimations with spherical data can flexibly estimate the shape of an underlying density, including rotationally symmetric, skewed, and multimodal distributions. Standard estimators are generally based on rotationally symmetric kernel functions such as the von Mises kernel function. Unfortunately, their mean integrated squared error does not have root-n consistency and increasing the dimension slows its convergence rate. Therefore, this study aims to improve its accuracy by correcting this bias. It proposes bias correction methods by applying the generalized jackknifing method that can be generated from the von Mises kernel function. We also obtain the asymptotic mean integrated squared errors of the proposed estimators. We find that the convergence rates of the proposed estimators are higher than those of previous estimators. Further, a numerical experiment shows that the proposed estimators perform better than the von Mises kernel density estimators in finite samples in scenarios that are mixtures of von Mises densities.}
}

@article{tang2024nonparametric,
  title={A nonparametric test for elliptical distribution based on kernel embedding of probabilities},
  author={Tang, Yin and Li, Bing},
  journal={The Annals of Statistics},
  volume={52},
  number={5},
  pages={2349--2374},
  year={2024},
  publisher={Institute of Mathematical Statistics}
}

@article{holzman2020,
title = {Testing for spherical and elliptical symmetry},
journal = {Journal of Multivariate Analysis},
volume = {180},
pages = {104667},
year = {2020},
issn = {0047-259X},
doi = {https://doi.org/10.1016/j.jmva.2020.104667},
url = {https://www.sciencedirect.com/science/article/pii/S0047259X20302487},
author = {Isaia Albisetti and Fadoua Balabdaoui and Hajo Holzmann},
keywords = {Bootstrap, Elliptical symmetry, Empirical process, Kolmogorov–Smirnov test, Spherical symmetry},
abstract = {We construct new testing procedures for spherical and elliptical symmetry based on the characterization that a random vector X with finite mean has a spherical distribution if and only if E[u⊤X|v⊤X]=0 holds for any two perpendicular vectors u and v. Our test is based on the Kolmogorov–Smirnov statistic, and its rejection region is found via the spherically symmetric bootstrap. We show the consistency of the spherically symmetric bootstrap test using a general Donsker theorem which is of some independent interest. For the case of testing for elliptical symmetry, the Kolmogorov–Smirnov statistic has an asymptotic drift term due to the estimated location and scale parameters. Therefore, an additional standardization is required in the bootstrap procedure. In a simulation study, the size and the power properties of our tests are assessed for several distributions and the performance is compared to that of several competing procedures.}
}

@article{luis2005,
author = {Luis Angel García-Escudero, Alfonso Gordaliza},
title = {Generalized Radius Processes for Elliptically Contoured Distributions},
journal = {Journal of the American Statistical Association},
volume = {100},
number = {471},
pages = {1036--1045},
year = {2005},
publisher = {ASA Website},
doi = {10.1198/016214504000002023},


URL = { 
    
        https://doi.org/10.1198/016214504000002023
    
    

},
eprint = { 
    
        https://doi.org/10.1198/016214504000002023
    
    

}

}

@article{HP2007,
title = {A test for elliptical symmetry},
journal = {Journal of Multivariate Analysis},
volume = {98},
number = {2},
pages = {256-281},
year = {2007},
issn = {0047-259X},
doi = {https://doi.org/10.1016/j.jmva.2005.09.011},
url = {https://www.sciencedirect.com/science/article/pii/S0047259X0500165X},
author = {Fred W. Huffer and Cheolyong Park},
keywords = {Elliptically contoured distribution, Multivariate normality, Chi-square test, Bootstrap approximation},
abstract = {This paper presents a statistic for testing the hypothesis of elliptical symmetry. The statistic also provides a specialized test of multivariate normality. We obtain the asymptotic distribution of this statistic under the null hypothesis of multivariate normality, and give a bootstrapping procedure for approximating the null distribution of the statistic under an arbitrary elliptically symmetric distribution. We present simulation results to examine the accuracy of the asymptotic distribution and the performance of the bootstrapping procedure. Finally, for selected alternatives, we compare the power of our test statistic with that of recently proposed tests for elliptical symmetry given by Manzotti et al. [A statistic for testing the null hypothesis of elliptical symmetry, J. Multivariate Anal. 81 (2002) 274–285] and Schott [Testing for elliptical symmetry in covariance-matrix-based analyses, Statist. Probab. Lett. 60 (2002) 395–404], and with that of the well known tests for multivariate normality of Mardia [Measures of multivariate skewness and kurtosis with applications, Biometrika 57 (1970) 519–530] and Baringhaus and Henze [A consistent test for multivariate normality based on the empirical characteristic function, Metrika 35 (1988) 339–348].}
}

@article{MPQ2002,
title = {A Statistic for Testing the Null Hypothesis of Elliptical Symmetry},
journal = {Journal of Multivariate Analysis},
volume = {81},
number = {2},
pages = {274-285},
year = {2002},
issn = {0047-259X},
doi = {https://doi.org/10.1006/jmva.2001.2007},
url = {https://www.sciencedirect.com/science/article/pii/S0047259X0192007X},
author = {A. Manzotti and Francisco J. Pérez and Adolfo J. Quiroz},
keywords = {elliptically contoured distributions, spherical harmonics, quadratic forms, empirical processes},
abstract = {We present and study a procedure for testing the null hypothesis of multivariate elliptical symmetry. The procedure is based on the averages of some spherical harmonics over the projections of the scaled residual (1978, N. J. H. Small, Biometrika65, 657–658) of the d-dimensional data on the unit sphere of Rd. We find, under mild hypothesis, the limiting null distribution of the statistic presented, showing that, for an appropriate choice of the spherical harmonics included in the statistic, this distribution does not depend on the parameters that characterize the underlying elliptically symmetric law. We describe a bivariate simulation study that shows that the finite sample quantiles of our statistic converge fairly rapidly, with sample size, to the theoretical limiting quantiles and that our procedure enjoys good power against several alternatives.}
}

@article{SCHOTT2002,
title = {Testing for elliptical symmetry in covariance-matrix-based analyses},
journal = {Statistics \& Probability Letters},
volume = {60},
number = {4},
pages = {395-404},
year = {2002},
issn = {0167-7152},
doi = {https://doi.org/10.1016/S0167-7152(02)00306-1},
url = {https://www.sciencedirect.com/science/article/pii/S0167715202003061},
author = {James R. Schott},
keywords = {Elliptical distribution, Matrix of fourth-order moments, Wald statistic},
abstract = {Many normal-theory test procedures for covariance matrices remain valid outside the family of normal distributions if the matrix of fourth-order moments has structure similar to that of a normal distribution. In particular, for elliptical distributions this matrix of fourth-order moments is a scalar multiple of that for the normal, and for this reason many normal-theory statistics can be adjusted by a scalar multiple so as to retain their asymptotic distributional properties across elliptical distributions. For these analyses, a test for the validity of these scalar-adjusted normal-theory procedures can be viewed as a test on the structure of the matrix of fourth-order moments. In this paper, we develop a Wald statistic for conducting such a test.}
}

@article{babic2021,
author = {Slađana Babić and Laetitia Gelbgras and Marc Hallin and Christophe Ley},
title = {{Optimal tests for elliptical symmetry: Specified and unspecified location}},
volume = {27},
journal = {Bernoulli},
number = {4},
publisher = {Bernoulli Society for Mathematical Statistics and Probability},
pages = {2189 -- 2216},
keywords = {Elliptical symmetry, local asymptotic normality, maximin tests, Multivariate skewness, semiparametric inference, skew-elliptical densities},
year = {2021},
doi = {10.3150/20-BEJ1305},
URL = {https://doi.org/10.3150/20-BEJ1305}
}

@article{PG2008,
title = {Optimal detection of Fechner-asymmetry},
journal = {Journal of Statistical Planning and Inference},
volume = {138},
number = {8},
pages = {2499-2525},
year = {2008},
issn = {0378-3758},
doi = {https://doi.org/10.1016/j.jspi.2007.10.011},
url = {https://www.sciencedirect.com/science/article/pii/S037837580700376X},
author = {Delphine Cassart and Marc Hallin and Davy Paindaveine},
keywords = {Skewed densities, Signed-rank tests, Local asymptotic normality, Tests for symmetry},
abstract = {We consider a general class of skewed univariate densities introduced by Fechner [1897. Kollectivmasslehre. Engleman, Leipzig], and derive optimal testing procedures for the null hypothesis of symmetry within that class. Locally and asymptotically optimal (in the Le Cam sense) tests are obtained, both for the case of symmetry with respect to a specified location as for the case of symmetry with respect to some unspecified location. Signed-rank based versions of these tests are also provided. The efficiency properties of the proposed procedures are investigated by a derivation of their asymptotic relative efficiencies with respect to the corresponding Gaussian parametric tests based on the traditional Pearson–Fisher coefficient of skewness. Small-sample performances under several types of asymmetry are investigated via simulations.}
}

@book{gupta2013elliptically,
  title={Elliptically contoured models in statistics and portfolio theory},
  author={Gupta, Arjun K and Varga, Tamas and Bodnar, Taras and others},
  volume={2},
  year={2013},
  publisher={Springer}
}

@article{ludwig1991,
author = {Ludwig Baringhaus},
title = {{Testing for Spherical Symmetry of a Multivariate Distribution}},
volume = {19},
journal = {The Annals of Statistics},
number = {2},
publisher = {Institute of Mathematical Statistics},
pages = {899 -- 917},
keywords = {Gegenbauer polynomials, Invariant tests of spherical symmetry},
year = {1991},
doi = {10.1214/aos/1176348127},
URL = {https://doi.org/10.1214/aos/1176348127}
}

@book{mardia1999directional,
  title={Directional statistics},
  author={Mardia, Kanti V and Jupp, Peter E},
  year={1999},
  publisher={John Wiley \& Sons}
}

@article{HALL19841,
title = {Central limit theorem for integrated square error of multivariate nonparametric density estimators},
journal = {Journal of Multivariate Analysis},
volume = {14},
number = {1},
pages = {1-16},
year = {1984},
issn = {0047-259X},
doi = {https://doi.org/10.1016/0047-259X(84)90044-7},
url = {https://www.sciencedirect.com/science/article/pii/0047259X84900447},
author = {Peter Hall},
keywords = {central limit theorem, integrated square error, Martingale, nonparametric density estimator, U-statistic},
abstract = {Martingale theory is used to obtain a central limit theorem for degenerate U-statistics with variable kernels, which is applied to derive central limit theorems for the integrated square error of multivariate nonparametric density estimators. Previous approaches to this problem have employed Komlós-Major-Tusnády type approximations to the empiric distribution function, and have required the following two restrictive assumptions which are not necessary using the present approach: (i) the data are in one or two dimensions, and (ii) the estimator is constructed suboptimally.}
}

@article{huang2023multivariate,
  title={Multivariate symmetry: Distribution-free testing via optimal transport},
  author={Huang, Zhen and Sen, Bodhisattva},
  journal={arXiv preprint arXiv:2305.01839},
  year={2023}
}

@article{DiMarzio03042014,
author = {Di Marzio, Marco and Panzera, Agnese and Taylor, Charles C},
title = {Nonparametric Regression for Spherical Data},
journal = {Journal of the American Statistical Association},
volume = {109},
number = {506},
pages = {748--763},
year = {2014},
publisher = {ASA Website},
doi = {10.1080/01621459.2013.866567},


URL = { 
    
        https://doi.org/10.1080/01621459.2013.866567
    
    

},
eprint = { 
    
        https://doi.org/10.1080/01621459.2013.866567
    
    

}

}

@Inbook{Lee2003,
author="Lee, John M.",
title="Smooth Manifolds",
bookTitle="Introduction to Smooth Manifolds",
year="2003",
publisher="Springer New York",
address="New York, NY",
abstract="This book is about smooth manifolds. In the simplest terms, these are spaces that locally look like some Euclidean space ℝn, and on which one can do calculus. The most familiar examples, aside from Euclidean spaces themselves, are smooth plane curves such as circles and parabolas, and smooth surfaces such as spheres, tori, paraboloids, ellipsoids, and hyperboloids. Higher-dimensional examples include the set of unit vectors in ℝn+1 (the n-sphere) and graphs of smooth maps between Euclidean spaces.",
isbn="978-0-387-21752-9",
doi="10.1007/978-0-387-21752-9_1"
}

@article{KOLTCHINSKII1998228,
	abstract = {We consider a test for spherical symmetry of a distribution in Rdwith an unknown center. It is a multivariate version of the tests suggested by Schuster and Barker and by Arcones and Gin{\'e}. The test statistic is based on the multivariate extension of the distribution and quantile functions, recently introduced by Koltchinskii and Dudley and by Chaudhuri. We study the asymptotic behavior of the sequence of test statistics for large samples and for a fixed spherically asymmetric alternative as well as for a sequence of local alternatives converging to a spherically symmetric distribution. We also study numerically the performance of the test for moderate sample sizes and justify a symmetrized version of bootstrap approximation of the distribution of test statistics.},
	author = {V.I. Koltchinskii and Lang Li},
	doi = {https://doi.org/10.1006/jmva.1998.1743},
	issn = {0047-259X},
	journal = {Journal of Multivariate Analysis},
	keywords = {empirical processes, measure of asymmetry of probability distribution, spherical symmetry, symmetrized bootstrap, VC-subgraph classes},
	number = {2},
	pages = {228-244},
	title = {Testing for Spherical Symmetry of a Multivariate Distribution},
	url = {https://www.sciencedirect.com/science/article/pii/S0047259X98917432},
	volume = {65},
	year = {1998},
	bdsk-url-1 = {https://www.sciencedirect.com/science/article/pii/S0047259X98917432},
	bdsk-url-2 = {https://doi.org/10.1006/jmva.1998.1743}}

@article{Smith01011977,
	author = {Paul J. Smith},
	doi = {10.1080/03610927708827484},
	eprint = {https://doi.org/10.1080/03610927708827484},
	journal = {Communications in Statistics - Theory and Methods},
	number = {3},
	pages = {209--220},
	publisher = {Taylor \& Francis},
	title = {A nonparametric test for bivariate circular symmetry based on the empirical cdf},
	url = {https://doi.org/10.1080/03610927708827484},
	volume = {6},
	year = {1977},
	bdsk-url-1 = {https://doi.org/10.1080/03610927708827484}}

@article{FANG199334,
	abstract = {A p × 1 random vector x is said to have a spherical distribution, if for every p × p orthogonal matrix Q, x and Qx have the same distribution. In this paper, some nonparametric goodness of fit Wilcoxon-type tests for sphericity are proposed. Some results on the limiting distributions of the statistics are obtained. These results provide a theoretical basis for a necessary test of sphericity.},
	author = {K.T. Fang and L.X. Zhu and P.M. Bentler},
	doi = {https://doi.org/10.1006/jmva.1993.1025},
	issn = {0047-259X},
	journal = {Journal of Multivariate Analysis},
	number = {1},
	pages = {34-55},
	title = {A Necessary Test of Goodness of Fit for Sphericity},
	url = {https://www.sciencedirect.com/science/article/pii/S0047259X83710250},
	volume = {45},
	year = {1993},
	bdsk-url-1 = {https://www.sciencedirect.com/science/article/pii/S0047259X83710250},
	bdsk-url-2 = {https://doi.org/10.1006/jmva.1993.1025}}

@article{liangetal2008,
	abstract = {While spherical distributions have been used in many statistical models for high-dimensional data analysis, there are few easily implemented statistics for testing spherical symmetry for the underlying distribution of high-dimensional data. Many existing statistics for this purpose were constructed by the theory of empirical processes and turn out to converge slowly to their limiting distributions. Some existing statistics for the same purpose were given in the form of high-dimensional integrals that are not easily evaluated in numerical computation. In this paper, we develop some necessary tests for spherical symmetry based on both univariate and multivariate uniform statistics. These statistics are easily evaluated numerically and have simple limiting distributions. A Monte Carlo study is carried out to demonstrate the performance of the statistics on controlling type I error rates and power.},
	author = {Liang, Jiajuan and Fang, Kai-Tai and Hickernell, Fred J.},
	date = {2008/09/01},
	date-added = {2025-07-14 14:50:21 +0200},
	date-modified = {2025-07-14 14:50:21 +0200},
	doi = {10.1007/s10463-007-0121-9},
	id = {Liang2008},
	isbn = {1572-9052},
	journal = {Annals of the Institute of Statistical Mathematics},
	number = {3},
	pages = {679--696},
	title = {Some necessary uniform tests for spherical symmetry},
	url = {https://doi.org/10.1007/s10463-007-0121-9},
	volume = {60},
	year = {2008},
	bdsk-url-1 = {https://doi.org/10.1007/s10463-007-0121-9}}

@article{Henzeetal2014,
	author = {N. Henze and Z. Hl{\'a}vka and S.G. Meintanis},
	doi = {10.1080/02331888.2013.832764},
	eprint = {https://doi.org/10.1080/02331888.2013.832764},
	journal = {Statistics},
	number = {6},
	pages = {1282--1296},
	publisher = {Taylor \& Francis},
	title = {Testing for spherical symmetry via the empirical characteristic function},
	url = {https://doi.org/10.1080/02331888.2013.832764},
	volume = {48},
	year = {2014},
	bdsk-url-1 = {https://doi.org/10.1080/02331888.2013.832764}}

@article{GineNickl2008,
	author = {Evarist Gin{\'e} and Richard Nickl},
	doi = {10.3150/07-BEJ110},
	journal = {Bernoulli},
	keywords = {adaptive estimation, kernel density estimator, quadratic functional},
	number = {1},
	pages = {47 -- 61},
	publisher = {Bernoulli Society for Mathematical Statistics and Probability},
	title = {{A simple adaptive estimator of the integrated square of a density}},
	url = {https://doi.org/10.3150/07-BEJ110},
	volume = {14},
	year = {2008},
	bdsk-url-1 = {https://doi.org/10.3150/07-BEJ110}}

@article{bickel1988estimating,
  author    = {Bickel, Peter J. and Ritov, Ya'acov},
  title     = {Estimating integrated squared density derivatives: Sharp best order of convergence estimates},
  journal   = {Sankhyā: The Indian Journal of Statistics, Series A},
  volume    = {50},
  pages     = {381--393},
  year      = {1988},
  mrnumber  = {1065557}
}

@article{LaurantMassart2000,
	author = {B. Laurent and P. Massart},
	doi = {10.1214/aos/1015957395},
	journal = {The Annals of Statistics},
	keywords = {$l_p$-bodies, adaptive estimation, Besov bodies, efficient estimation, Gaussian sequence model, Model selection, quadratic functionals},
	number = {5},
	pages = {1302 -- 1338},
	publisher = {Institute of Mathematical Statistics},
	title = {{Adaptive estimation of a quadratic functional by model selection}},
	url = {https://doi.org/10.1214/aos/1015957395},
	volume = {28},
	year = {2000},
	bdsk-url-1 = {https://doi.org/10.1214/aos/1015957395}}

@article{hall1987estimation,
  author    = {Hall, Peter and Marron, J. S.},
  title     = {Estimation of integrated squared density derivatives},
  journal   = {Statistics \& Probability Letters},
  volume    = {6},
  number    = {2},
  pages     = {109--115},
  year      = {1987},
  mrnumber  = {907270}
}

@article{HallWatsonCabrera1987,
    author = {Hall, Peter and Watson, G. S. and Cabrera, Javier},
    title = {Kernel density estimation with spherical data},
    journal = {Biometrika},
    volume = {74},
    number = {4},
    pages = {751-762},
    year = {1987},
    month = {12},
    abstract = {We study two natural classes of kernel density estimators for use with spherical data. Members of both classes have already been used in practice. The classes have an element in common, but for the most part they are disjoint. However, all members of the first class are asymptotically equivalent to one another, and to a single element of the second class. In this sense the second class ‘contains’ the first. It includes some estimators which out-perform all those in the first class, if loss is measured in either squared-error or Kullback—Leibler senses. Explicit formulae are given for bias, variance and loss, and large-sample properties of these quantities are described. Numerical illustrations are presented.},
    issn = {0006-3444},
    doi = {10.1093/biomet/74.4.751},
    url = {https://doi.org/10.1093/biomet/74.4.751},
    eprint = {https://academic.oup.com/biomet/article-pdf/74/4/751/786346/74-4-751.pdf},
}

@book{wellek2010testing,
  title="Testing statistical hypotheses of equivalence and noninferiority",
  author="Wellek, Stefan",
  year=2010,
  publisher={CRC Press}
}

@article{berger1987,
author = {James O. Berger and Mohan Delampady},
title = {{Testing Precise Hypotheses}},
volume = {2},
journal = {Statistical Science},
number = {3},
publisher = {Institute of Mathematical Statistics},
pages = {317 -- 335},
keywords = {$\chi^2$ tests, Bayes factor, binomial tests, Jeffreys's paradox, objectivity, Point null hypothesis, posterior probability, p-value, robust Bayesian analysis, scientific communication},
year = {1987},
doi = {10.1214/ss/1177013238},
URL = {https://doi.org/10.1214/ss/1177013238}
}

@article{tukey1991,
 ISSN = {08834237},
 URL = {http://www.jstor.org/stable/2245714},
 abstract = {This paper is based on the 1989 Miller Memorial Lecture at Stanford University. The topic was chosen because of Rupert Miller's long involvement and significant contributions to multiple comparison procedures and theory. Our emphasis will be on the major questions that have received relatively little attention--on what one wants multiple comparisons to do, on why one wants to do that, and on how one can communicate the results. Very little attention will be given to how the results can be calculated--after all, there are books about that (e.g., Miller, 1966, 1981; Hochberg and Tamhane, 1987).},  author = {John W. Tukey},  journal = {Statistical Science},  number = {1},  pages = {100--116},  publisher = {Institute of Mathematical Statistics},  title = {The Philosophy of Multiple Comparisons},  urldate = {2023-04-18},  volume = {6},  year = {1991} }

@book{van1996weak,
  title={Weak convergence and empirical processes: with applications to statistics},
  author={Van Der Vaart, Aad W and Wellner, Jon A},
  year={1996},
  publisher={Springer New York}
}

@article{shao2015,
  author    = {Shao, X.},
  title     = {Self-normalization for time series: a review of recent developments},
  journal   = {Journal of the American Statistical Association},
  volume    = {110},
  pages     = {1797--1817},
  year      = {2015}
}

@book{Bradley.2007,
 author = {Bradley, Richard C.},
 year = {2007},
 title = {Introduction to Strong Mixing Conditions. Vol 1-3},
 address = {Heber City, Utah},
 publisher = {Kendrick},
 isbn = {0974042765}
}

@article{Wu2005,
	abstract = {Based on the nonlinear system theory, we introduce previously undescribed dependence measures for stationary causal processes. Our physical and predictive dependence measures quantify the degree of dependence of outputs on inputs in physical systems. The proposed dependence measures provide a natural framework for a limit theory for stationary processes. In particular, under conditions with quite simple forms, we present limit theorems for partial sums, empirical processes, and kernel density estimates. The conditions are mild and easily verifiable because they are directly related to the data-generating mechanisms.},
	author = {Wei Biao Wu},
	doi = {10.1073/pnas.0506715102},
	eprint = {https://www.pnas.org/doi/pdf/10.1073/pnas.0506715102},
	journal = {Proceedings of the National Academy of Sciences},
	number = {40},
	pages = {14150-14154},
	title = {Nonlinear system theory: Another look at dependence},
	url = {https://www.pnas.org/doi/abs/10.1073/pnas.0506715102},
	volume = {102},
	year = {2005},
	bdsk-url-1 = {https://www.pnas.org/doi/abs/10.1073/pnas.0506715102},
	bdsk-url-2 = {https://doi.org/10.1073/pnas.0506715102}}

@article{HrmannKok,
	author = {Siegfried H{\"o}rmann and Piotr Kokoszka},
	doi = {10.1214/09-AOS768},
	journal = {The Annals of Statistics},
	keywords = {asymptotics, Change points, Eigenfunctions, functional principal components, functional time series, long-run variance, Weak dependence},
	number = {3},
	pages = {1845 -- 1884},
	publisher = {Institute of Mathematical Statistics},
	title = {{Weakly dependent functional data}},
	url = {https://doi.org/10.1214/09-AOS768},
	volume = {38},
	year = {2010},
	bdsk-url-1 = {https://doi.org/10.1214/09-AOS768}}

\end{document}